\DeclareMathOperator*{\argmax}{arg\,max}
\DeclareMathOperator*{\argmin}{arg\,min}
\DeclareMathOperator{\Tr}{Tr}
\newcommand{\I}{\mathcal I}
\newcommand{\dif}{\mathrm{d}}
\newcommand{\cmp}{\mathcal{C}}
\renewcommand{\vec}[1]{{\boldsymbol#1}} 
\newcommand{\mat}[1]{{\boldsymbol#1}} 
\newcommand{\csm}{\mathrm{CS}} 
\newcommand{\cpt}{\tau} 
\newcommand{\ncp}{\kappa} 
\newcommand{\N}{\mathbb{N}} 
\newcommand{\R}{\mathbb{R}} 
\newcommand{\eps}{\varepsilon}
\newcommand{\PP}{\mathcal{P}} 
\newcommand{\OS}{\mathrm{OS}} 
\newcommand{\aOS}{\mathrm{aOS}} 
\newcommand{\E}[1]{{\mathbb E}\left[ #1\right]} 
\newcommand{\floor}[1]{\lfloor #1 \rfloor}
\newcommand{\ceil}[1]{\lceil #1 \rceil}
\newcommand{\abs}[1]{\left| #1 \right|}
\newcommand{\norm}[1]{\left\| #1 \right\|}
\newcommand{\snorm}[1]{\|#1\|} 
\newcommand{\Prob}[1]{\mathbb{P}\left\{ #1 \right\}}
\newcommand{\var}[1]{\mathrm{Var}\left( #1 \right)}
\newcommand{\ind}{\mathds{1}} 
\newcommand{\iidsim}{\stackrel{\mathrm{i.i.d.}}{\sim}}
\newcommand{\thd}{\alpha} 
\newcommand{\gauss}{\mathcal N} 
\newcommand{\ber}{\mathrm{Ber}} 
\newcommand{\set}[1]{\left\{#1\right\}}
\newcommand{\inner}[2]{\left\langle#1,#2\right\rangle} 
\newtheorem{definition}{Definition}[section]
\newtheorem{lemma}[definition]{Lemma}
\newtheorem{theorem}[definition]{Theorem}
\newtheorem{prop}[definition]{Proposition}
\newtheorem{example}[definition]{Example}
\newtheorem{Model}{Model}
\theoremstyle{definition}
\newtheorem{remark}{Remark}
\DeclareRobustCommand*{\rev}{}
\begin{document}
\title{Optimistic search: Change point estimation for large-scale data via adaptive logarithmic queries}

\author{Solt Kov\'acs${}^{1}$\footnote{The first two authors contributed equally to this work.}\setcounter{footnote}{0}, Housen Li${}^{2}$\footnotemark,  Lorenz Haubner${}^{1}$, Axel Munk${}^{2, 3}$, Peter B\"uhlmann${}^{1}$\\
\vspace{0.1cm}\\
{\small${}^{1}$Seminar for Statistics, ETH Zurich, Switzerland}\\
{\small${}^{2}$Institute for Mathematical Stochastics, University of G\"ottingen, Germany}\\
{\small${}^{3}$Max Planck Institute for Biophysical Chemistry, G\"ottingen, Germany}}

\date{November 2022}

\maketitle

\begin{abstract}
Change point {\rev estimation} is often formulated as a search for the maximum of a gain function describing improved fits when segmenting the data. Searching through all candidates requires $O(n)$ evaluations of the gain function for an interval with $n$ observations. If each evaluation is computationally demanding (e.g.~in high-dimensional models), this can become infeasible. Instead, we propose \emph{optimistic search} methods with $O(\log n)$ evaluations exploiting specific structure of the gain function.

{Towards solid understanding of our strategy, we investigate in detail the $p$-dimensional Gaussian changing means setup,  including high-dimensional scenarios. For some of our proposals, we prove asymptotic minimax optimality for detecting change points and derive their asymptotic localization rate. These rates (up to a possible log factor) are optimal for the univariate and multivariate scenarios, and are by far the fastest in the literature under the weakest possible detection condition on the signal-to-noise ratio in the high-dimensional scenario. Computationally, our proposed methodology has the worst case complexity of $O(np)$, which can be improved to be sublinear in $n$ if some a-priori knowledge on the length of {the} shortest segment is available.} 

Our search strategies generalize far beyond the theoretically analyzed setup. We illustrate, as an example, massive computational speedup in change point detection for high-dimensional Gaussian graphical models. 

\end{abstract}

\noindent\textbf{Keywords:}
Fast computation; 
High-dimensional; 
Minimax optimality; 
Multiple break point estimation; 
Sublinear runtime.

\section{Introduction}
\label{Introduction}
Change point (or break point)
estimation
tackles the problem of estimating the locations of abrupt structural changes for ordered {and noisy} data, {e.g.,}
by time or space.
One can distinguish between online (sequential) and offline (retrospective) detection problems. We will primarily focus on the latter setup where all ordered observations are available {at once} and only point to online detection in connection with our methods and results for the detection of a single change point. 
Applications include detecting changes in copy number variation \citep{CBS,ZhSi07}, ion channels \citep{Hotz_ionchannel}, financial time series \citep{BaPe98,Kim_finance,DaHo12}, climate data \citep{Reeves_climatology}, environmental monitoring systems \citep{Londschien}, among many others.
{We refer} also 
to the recent reviews in \cite{review_Niu,review_Truong}. 

We focus on computational 
{improvements} 
of change point {inference} {while maintaining statistical optimality}. Two common algorithmic approaches are optimal partitions via dynamic programming (cf.\ \cite{jackson2005algorithm,FKLW08}) and greedy procedures, e.g.~binary segmentation (BS, \cite{Vostrikova}) and its variants. The former approaches are mainly investigated for univariate data. 
Typical examples include $\ell_0$ penalization methods (cf.\ \cite{BKLMW09}; PELT \cite{Killick_etal} and FPOP \cite{Maidstone}) and multiscale methods (SMUCE \cite{Frick_etal} and FDRSeg \cite{LMS16}), which are known to be statistically minimax optimal {in a Gaussian setup}. However, finding the optimal partition requires in the worst case at least quadratic run time. In contrast, BS is typically faster and easier to adapt to more general scenarios, but worse in terms of estimation performance than methods finding the optimal partitioning. The wild binary segmentation (WBS, \cite{Fryzlewicz_WBS}) and its variants (e.g.\ narrowest over threshold, \cite{Baranowski}) improve on estimation performance of BS, but lose some of its computational efficiency. The recently proposed  seeded binary segmentation (SeedBS, \cite{SeedBS}) combines the best of both worlds, i.e.~improved estimation performance of BS, while keeping computational efficiency {(with log-linear run time in the worst case).}

Already for simple univariate cases, increasingly larger data sets with long time series 
led to the development of more efficient (univariate) approaches \cite{Maidstone, intelligent_sampling, Fryzlewicz_WBS2, SeedBS}. Computational issues are even much more pronounced for multivariate problems. This is in particular true for 
emerging high-dimensional {parametric} change point detection approaches (e.g.~\cite{LeonBuhl, RoyMichailidis, GibberdNelson, Gibberd_Roy, Bybee_Atchade_JMLR, Avanesov_theory, Wang_highdim_mean_change, Wang_Willet_regression, Londschien,YuCh21, Wang_Yu_Rinaldo_theory,WaYR21,DePY22}, 
or in {univariate \citep{vanegas2020multiscale} and}
multivariate \citep{Padilla_Yu_Wang_Rinaldo_multivariate_nonparametric} non-parametric change point detection methods. 
Many of these approaches rely on {computationally costly fits of algorithms such as the}
lasso \citep{lasso_original} or the graphical lasso \citep{glasso}.
Performing a full grid search in order to find a single split point requires as many fits as there are observations. Even with warm-starts, neighbouring fits with one additional observation are not straightforward to update (unlike e.g.~means in univariate cases) and the number of fits is the main driver of computational cost. Thus, for a few hundred or thousand observations, full grid search based methods (including BS and its variants and even more costly dynamic programming based approaches) can be very slow, beyond what is acceptable. This is a main motivation for our work here, namely to avoid too many fits with piecewise stationary model structure.

\subsection{Our contribution and related work}
The key idea is to replace the exhaustive search for a change point by an \emph{adaptive search} that dynamically determines the next search location given the previous ones {in the spirit of divide-and-conquer. We call this novel methodology for searching for single change points  \emph{optimistic search} (OS).}  We will show that {OS requires} only a logarithmic  number of evaluations, and is meanwhile statistically optimal in {various} scenarios. This is possible mainly because of a rather general observation. In numerous change point detection problems, population gain functions (describing the expected gain when splitting into two parts at a given split point) have a specific piecewise quasiconvex structure {where}
local maxima correspond to change points, and thus the presence of a single change point implies only one global maximum. 

{Based on OS,} in the single change point case, we propose for the first time, to our best knowledge, an algorithm (advanced optimistic search, aOS) with $O(\log n)$ function evaluations leading to {asymptotically minimax optimal detection and nearly minimax optimal localization of the change point} as opposed to full grid search 
methods requiring at least $O(n)$ evaluations for {univariate and multivaiate Gaussian changing mean problems. In high-dimensional case, the detection rate is also asymptotically {minimax} optimal, and the localization rates are (ignoring log-factors) sharper than the ones reported in the literature so far (unless a much stronger signal-to-noise ratio is assumed), but the minimax optimality of localization rate remains open (see \cref{ss:stgr}). These results are} surprising and fundamental, as essentially one loses nothing in terms of detectable signals and in terms of localization error (in an asymptotic sense) compared to full grid search. 
{In case of multiple change points, we develop a method (Optimistic Seeded Binary Segmentation, OSeedBS) on the combination of OS and seeded intervals \citep{SeedBS}, which achieves similar computational speedup and statistical optimality as in the single change point scenario.} In {$p$-dimensional} Gaussian changing mean problems, our methodology is significantly faster, {e.g., finding a change point in $O(p\log n)$ run time,}  if cumulative sums have been pre-computed, a reasonable assumption e.g.~in online change point detection. This also suggests to store data in cumulative sums format for offline analysis. {Even when the data are not properly stored, the worst case computation complexity {of our method (OSeedBS)} is $O(np)$, which is limited by the cost of calculating the cumulative sums. In particular, for the univariate setup with multiple change points, it improves the worst case computation complexity of $O(n\log n)$ that is 
the fastest in the literature, by a log factor. Further, the improvement becomes way more significant} in more complex models where the number of model fits involved in each evaluation of the gain function are the main driver of computational cost. A motivating example is given in~Section~\ref{motivating_example}.

Some problem specific solutions for gaining computational speedup in certain high-dimensional setups arose, e.g. for changing (Gaussian) graphical models \citep{Hallac_network, Bybee_Atchade_JMLR} or changing linear regression coefficients \cite{Kaul_2step, Kaul_multiple}, but these may not be easy to adapt to other scenarios. The following two proposals may be seen as most related to our approach. Our idea of avoiding a high number of model fits is vaguely related to the procedure of \cite{Kaul_2step} for change point detection in high-dimensional linear regression. For an initial split point they fit appropriate models which are then kept for evaluating (an approximation of) the gain function for candidate split points on the full grid, but without updating the costly high-dimensional fits. 
However, in scenarios where the evaluation of the gain on the full grid has a comparable cost as the model fit itself (e.g.~calculating and updating means in the Gaussian change in mean setup), their procedure would not lead to any speedup. In \cite{intelligent_sampling}, the authors proposed a procedure specific for the univariate case. They thin out the number of evaluations by searching on a small subsample to obtain preliminary estimates, and then use a dense sample in neighbourhoods for the final change point estimates. A rough subsample is not well suited for high-dimensional problems, unlike our approach of keeping all samples but avoiding evaluations on the full grid.

\subsection{A motivating example}
\label{motivating_example}

Rather than a problem specific solution, the proposed OS is a computationally attractive and general methodology for searching for a single change point. \cref{Fig:glasso_single_cpt} illustrates its computational efficiency in an example of single change point detection for a $200$-dimensional Gaussian graphical model (based on an estimator discussed in Section~\ref{highdim_simulations} and Appendix~\ref{Appendix_highdim}). The aim is to find the maximum of the gain function (black curve). {Evaluating the gain function at a single split point $t \in\{ 1, \ldots, n\}$ with $n=\text{2,000}$ requires two graphical lasso fits: one for the segment $(1, t]$ and one for $(t, n]$. For finding the maximum, the full grid search
took roughly $100$ times longer than
OS. The latter evaluated the gain only at two initial points (marked by two zeros) and subsequently at further $14$ split points (marked by colored numbers) which are determined dynamically. Then OS returns the maximum over all considered split points.  This leads to a massive reduction of computation time.}
When searching for multiple change points, OS can be combined flexibly with existing algorithms as discussed in 
\cref{Methodolgy_multiple}, resulting again in massive computational speedup with essentially no loss in statistical performance.

\begin{figure}
\centering
\includegraphics[width=0.82\textwidth]
{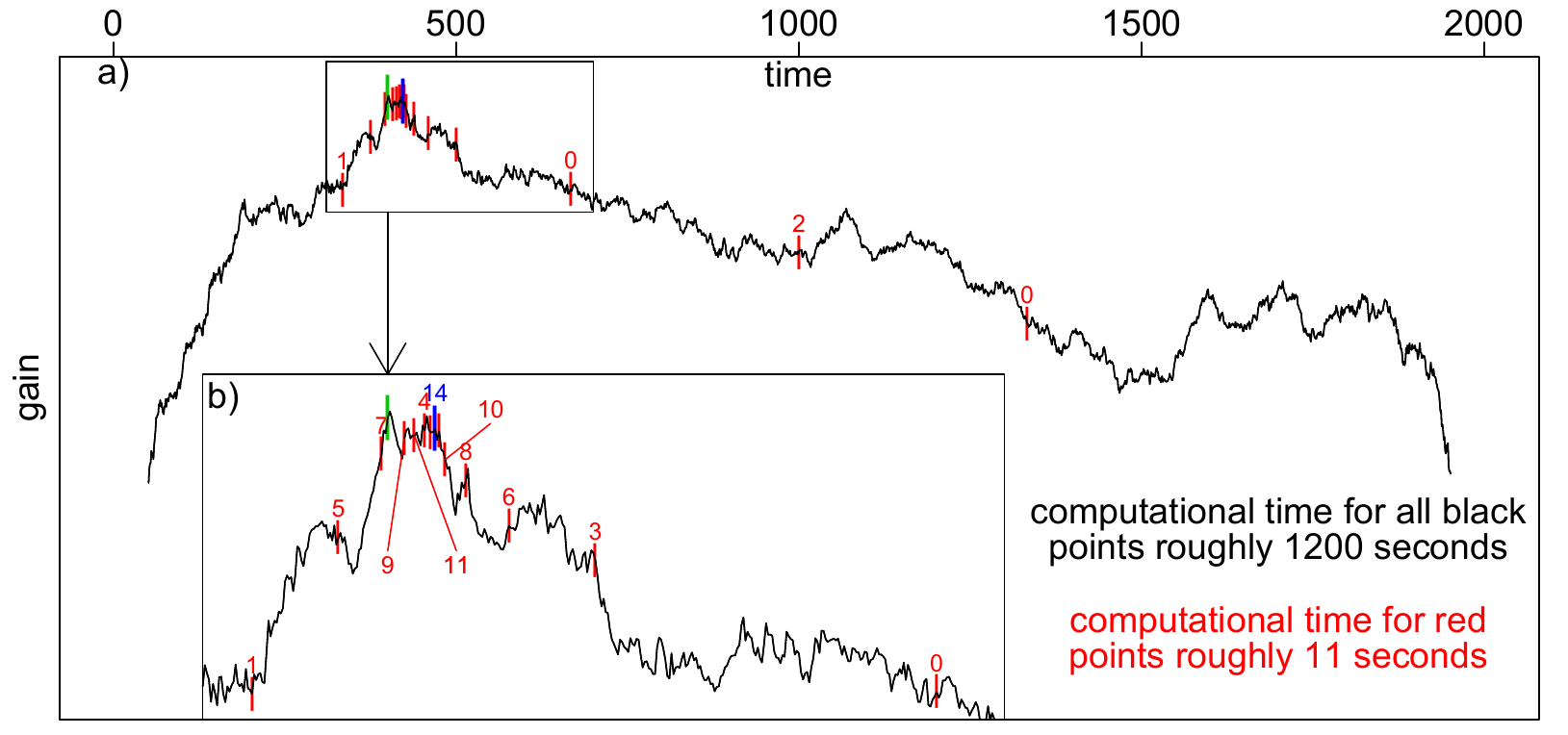} \\
\caption{Finding a single change point with full grid search (black) and OS (red) in a $200\times 200$-dimensional covariance change example with underlying graphical lasso fits. OS starts with two initial evaluations marked by the two zeros and then evaluates further $14$ split points adaptively, in the order marked by the respective colored numbers shown in the zoomed in part~b). The true underlying change point at observation~$400$ is marked in green and the final candidate returned by OS at observation $423$ in blue. The overall maximum of the black gain curve found by full grid search is at observation $402$.} 
\label{Fig:glasso_single_cpt}
\end{figure}

\subsection{Outline and announcing our results}

The crucial part of our methodology is the OS introduced in \cref{Methodology_single}. It is capable of finding a local maximum of the population gain function with $O(\log n)$ evaluations for $n$ observations. {For the sake of readability, we first introduce our methodology as well as the statistical guarantees in the} classical \emph{univariate} Gaussian change in mean setup, detailed in Section~\ref{s:gauss}. In the scenario when only one change point is present, 
the advanced version of OS, i.e.\ aOS, (Theorem~\ref{th:aoss}) is able to detect the change point under the weakest condition 
\begin{equation}\label{eq:opt}
\text{minimal jump size} \,\times\, \sqrt{\text{minimal segment length}}\, \gtrsim \,\sqrt{\log\log(n)/n} 
\end{equation}
thus being optimal in minimax sense. 
In Section~\ref{Methodolgy_multiple} we extend the methodology for multiple change points {and derive a}
minimax optimal performance result. Namely, under nearly the same condition as in \eqref{eq:opt}, the number of change points is identified correctly, and 
the location of each change point is estimated at the best rate {(up to possible log-factors)} that is available in literature, see Theorem~\ref{th:osbs}.

{We further examine general \emph{multivariate} and \emph{high-dimensional} Gaussian mean changes in \cref{s:mhdgauss}. We again obtain {via aOS} the detection and the localization rates of change points, which are (nearly) the best in the literature. Interestingly, unlike the univariate and multivarite cases, the localization rate may be much faster than the rate induced by the detection problem in several high-dimensional setups, including both sparse and dense scenarios, for instance, when the signal-to-noise ratio is much larger than the minimax detection rate. 
}

{\cref{Simulations,Discussion} contain empirical simulation results and conclusions.}
Additional material as well as proofs are given in the Appendix. {There we present several deviation inequalities for randomly weighted chi-squares which are of independent interest.}

\paragraph{Notation} For a real number $r$, we define downward rounding as $\lfloor r \rfloor = \max\{n\in \mathbb{Z} \,:\, n \le r\}$ and upward rounding as $\lceil r \rceil = \min\{n \in \mathbb{Z}\,:\,n \ge r\}$, {and also define $(r)_+ := \max\{r, 0\}$.} For two sequences of positive real numbers $\{a_n\}_{n=1}^\infty$ and $\{b_n\}_{n=1}^\infty$, we write $a_n \ll b_n$, or $a_n = o(b_n)$, if $\lim_{n \to \infty} a_n / b_n = 0$, write $a_n \lesssim b_n$, or $a_n = O(b_n)$, if $\limsup_{n \to \infty} a_n / b_n < \infty$, and write $a_n \asymp b_n$ if both $a_n \lesssim b_n$ and $b_n \lesssim a_n$\,. {We use bold symbols for vectors and matrices to differentiate them from scalar values. }

\section{Gaussian mean shifts with constant variance}\label{s:gauss}
We start {our development by considering the simple} model of univariate Gaussian changing means (Model~\ref{Gaussian_setup}) {below}. 
{The understanding for this setup is
generalized to multivariate and high-dimensional 
scenarios 
in  \cref{s:mhdgauss}.}

\begin{Model}[Univariate Gaussian changing means]
\label{Gaussian_setup}
Assume that observations $X_1, \ldots, X_n$ are independent and 
\begin{equation*}
\begin{split}
 X_{\cpt_{0}n+1}(=X_1), \ldots, X_{\cpt_1 n} & \sim \mathcal{N}(\mu_0, \sigma^2)\,, \\
& \vdots	\\
 X_{\cpt_{\ncp}n+1}, \ldots, X_{\cpt_{\ncp+1}n}(=X_{n}) 	& \sim \mathcal{N}(\mu_{\ncp}, \sigma^2)\,,
\end{split}
\end{equation*}
where $\{\cpt_i \,:\, i = 1, \ldots, \ncp\}$ gives the location of change points satisfying
$$
0 = \cpt_0 < \tau_1 < \cdots < \cpt_{\ncp+1} = 1\quad \text{and}\quad \cpt_i n \in \N\,,
$$
means $\mu_i \neq \mu_{i-1}$ for $i=1,\ldots,\ncp$ give the levels on segments, and the common standard deviation $\sigma > 0$ is known.
Assume w.l.o.g.~$\sigma = 1$.
Moreover, define the minimal segment length $\lambda$ as
$$
\lambda \equiv \lambda_n = \min_{i = 0,\ldots,\ncp} (\cpt_{i+1} - \cpt_i),
$$
and the minimal jump size $\delta$ as
$$
\delta \equiv \delta_n = \min_{i = 1,\ldots, \ncp} \delta_i \qquad \text{with} \quad \delta_i = \abs{\mu_i - \mu_{i-1}}.
$$
\end{Model}

The goal of change point {inference} is to estimate the number $\ncp$ and the locations $\cpt_i$'s of the true underlying change points from realizations {of} $X_1, \ldots, X_{n}\,$. A common criterion for determining the best split point is the CUSUM statistics \citep{Page54}, defined for an interval $(l,r]$ and a split point $t$ as 
\begin{equation}
\label{CUSUM}
    \csm_{(l,r]}(t) = \sqrt{\frac{r-t}{(r-l)(t-l)}}\sum_{i=l+1}^t X_i - 
        \sqrt{\frac{t-l}{(r-l)(r-t)}}\sum_{i=t+1}^r X_i\,, 
\end{equation}
with integers $0 \leq l < t < r \leq n$. The CUSUM statistic is the likelihood ratio test for a single change point at location $t$ in the interval $(l,r]$ against a constant signal. The population counterpart of $|\csm_{(l,r]}(\cdot)|$, i.e.\ replacing $X_i$ by $\E{X_i}$, 
has its maximum at one of the underlying change points. In noisy cases, the best split point candidate when dividing the segment $(l,r]$ into two parts is the location of the maximal absolute CUSUM statistics
$$    \hat t_{(l,r]} = \argmax_{t\in{\{l+1, \dots, r-1\}}} |\csm_{(l,r]}(t)|\,.
$$
We refer to the function $|\csm_{(l,r]}(\cdot)|$ as a gain function, denoted by $G_{(l,r]}(\cdot)$, because the square of it describes gains, namely the reductions in squared errors when fitting separate means on the left and right segments for split points in the segment $(\ell,r]$. 
The gain functions $G_{(l,r]}(\cdot)$ are initially defined on a discrete grid of split points, but {it is convinient to extend them continuously to $t\in (l, r]$}
(via e.g.~linear interpolation).

\section{Optimistic search for a single change point}
\label{Methodology_single}

In this section, we focus on \cref{Gaussian_setup} with a single change point ($\ncp = 1$), and introduce two versions of optimistic search, which are at the core of our methodology for multiple change points, as well.

\subsection{Naive optimistic search}

\begin{algorithm}[t]
\caption{Naive Optimistic Search (OS)}\label{Alg:OS}
\begin{algorithmic}[1]
\Require $r - l > 2, ~l,r \in \N$; and step size $\nu \in (0, 1)$ with $ 1/2$ by default \\
\textbf{initialize:} $\tilde l\leftarrow l$, $\tilde r\leftarrow r$ and $t \leftarrow  \floor{(l+\nu r)/(1+\nu)}$
\Function{\textnormal{OS}}{$\tilde l, t, \tilde r \mid \nu, l, r$}
\If {$\tilde r-\tilde l \leq 5$} \Comment{Stopping condition for recursion}
\State $\hat t_{(l,r]} \leftarrow \argmax\limits_{t\in\{\tilde l+1, \dots,\tilde r-1\}} G_{(l,r]}(t)$ 
	\Comment{Search over all points if less than $5$ remain}
\State \textbf{return} $\hat t_{(l,r]}$
\EndIf
\If {$\tilde r-t>t-\tilde l$} \Comment{Pick a new probe point in larger segment}
\State $w \leftarrow \lceil \tilde r - (\tilde r - t)\nu \rceil$
\If {$G_{(l,r]}(w) \geq G_{(l,r]}(t)$}
\State \Call{\textnormal{OS}}{$t, w, \tilde r \mid \nu, l, r$}
\Else
\State \Call{\textnormal{OS}}{$\tilde l, t, w \mid \nu, l, r$}
\EndIf
\Else
\State $w \leftarrow \lfloor \tilde l + (t-\tilde l)\nu \rfloor$
\If {$G_{(l,r]}(w) \geq G_{(l,r]}(t)$}
\State \Call{\textnormal{OS}}{$\tilde l, w, t \mid \nu, l, r$}
\Else
\State \Call{\textnormal{OS}}{$w, t, \tilde r \mid \nu, l, r$}
\EndIf
\EndIf
\EndFunction
\end{algorithmic}
\end{algorithm}
We introduce first the \emph{naive} version of optimistic search (OS) within a segment $(l,r]$ 
in \cref{Alg:OS}, {\rev which is a key building block for later introduced statistically optimal methods.} The procedure is similar to the golden section search \cite{Kiefer1953, Avriel1966, Avriel1968}, typically used to find the global extremum of unimodal functions.  OS splits an interval into three segments recursively and discards one of the outer segments in each iteration. For unimodal functions with one peak this search {\rev converges to} the global maximum. If there is only a single change point contained in $(l,r]$, then the (continuously embedded) population gain function is unimodal with the single peak at the true underlying change point. 
Optimism is required in noisy scenarios, and hence the naming of the method, as noisy counterparts are rather ``wiggly'' functions following the shape of the population gain function only approximately.

When initializing by calling $\OS(l, \floor{(l+\nu r)/(1+\nu)}, r \mid \nu, l, r)$ for an interval $(l,r]$, the search first probes the points $t = l + (r-l) \nu/(1+\nu)$ and $w = r - (r-l) \nu/(1+\nu)$ (up to rounding), i.e.~the two first probe points are equally distant from $l$ and $r$ respectively. Depending on the gains at the probe points $t$ and $w$ either $(l,t]$ or $(w,r]$ is then discarded. The possible decisions for the general case when the search is already narrowed down to the sub-interval $(\tilde{l},\tilde{r}]$ is depicted in Figure~\ref{Fig:optim-search}. 
Note that in general, the lengths of the two candidate intervals for discarding are not necessarily equal. 
In case (a) we have $G_{(l,r]}(t) < G_{(l,r]}(w)$ and hence the less promising blue area is discarded. In case (b) we have $G_{(l,r]}(t) > G_{(l,r]}(w)$ and the red area is discarded. Also note that one of the previous probe points will be one of the new boundary points while the other probe point is going to be one of the new probe points in the middle with gain that is thus at least as high as for the new boundary. This leads to a ``triangular structure'' that one probe point in the middle has a higher gain than both boundary points, throughout the search.

\begin{figure}
\centering
\includegraphics[width=0.85\textwidth]
{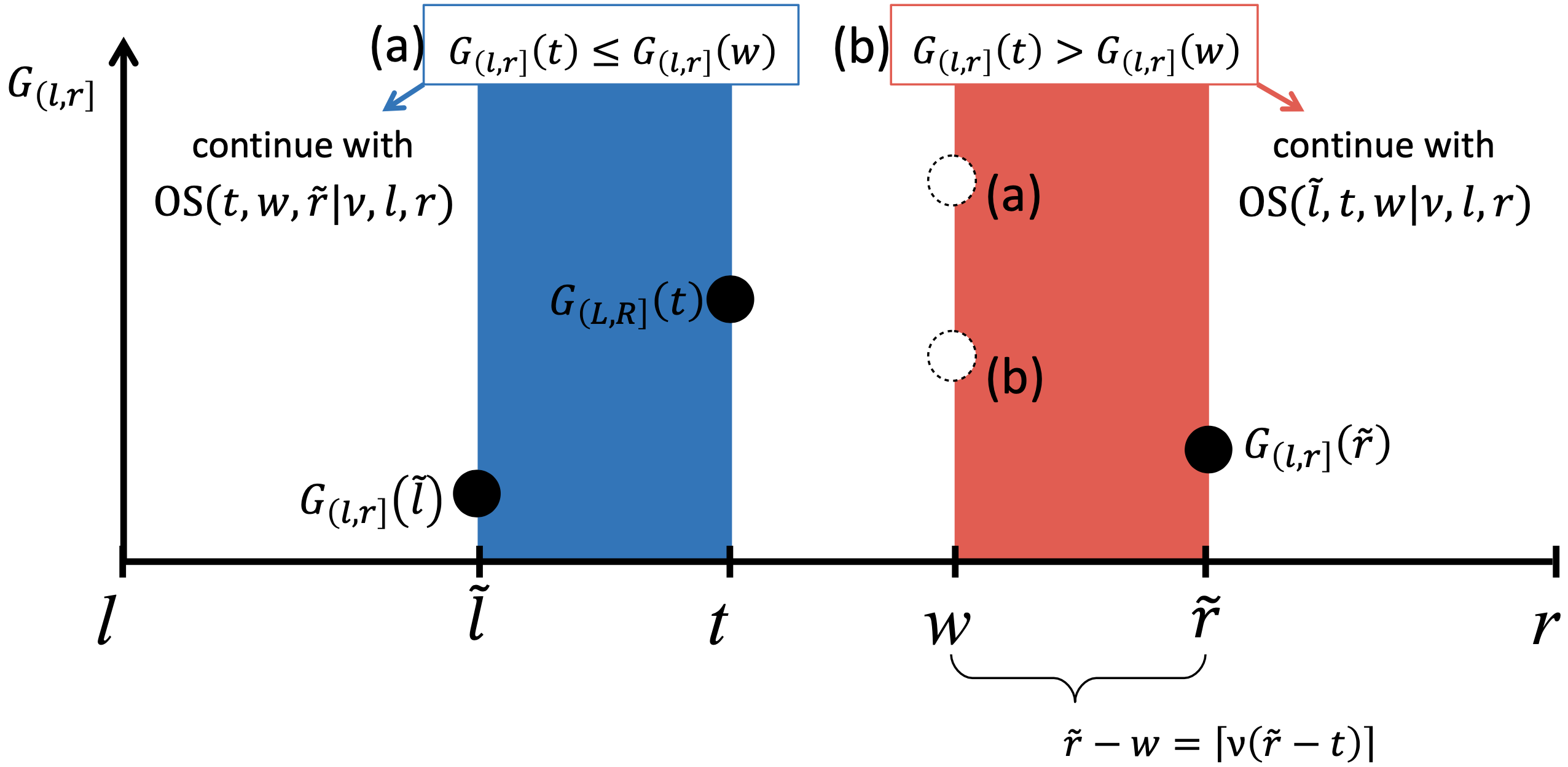} \\
\caption{Naive optimistic search step $\OS(\tilde l, t, \tilde r \mid \nu, l, r)$ within the current leftover segment $(\tilde l,\tilde r]\subseteq(l,r]$ given the previous evaluation at $t$ and step size $\nu$. As $\tilde r-t>t-\tilde l$, the new probe point $w$ is taken within $(t,\tilde r]$ as $w = \lceil \tilde r - (\tilde r - t)\nu \rceil$. Depending on the gain $G_{(l,r]}(w)$ vs.\ $G_{(l,r]}(t)$, one either continues with $\OS(t, w, \tilde r \mid \nu, l, r)$ (discarding the blue part) or $\OS(\tilde l, t, w \mid \nu, l, r)$ (discarding the red part).} 
\label{Fig:optim-search}
\end{figure}

We set $\nu=1/2$ by default, but in general $\nu$ can be interpreted as (relative) step size, expected to reflect some kind of trade off between computational performance and estimation accuracy. The choice of evaluating the last $5$ points remaining (in line $3$) is somewhat arbitrary and can be also set to e.g.~$10$ or~$20$.
In rare cases, when $r-t = t-l$ or $G_{(l,r]}(w) = G_{(l,r]}(t)$ one can also take a (pseudo) random choice or incorporate additional information (e.g.~variance in the segments for Model~\ref{Gaussian_setup}) to decide on the new probe point and the segment to discard.

\begin{theorem}[Naive optimistic search]\label{th:noss}
Under Model~\ref{Gaussian_setup} with a single change point, i.e.~$\ncp =1$, we assume that the minimal segment length $\lambda$ and the minimal jump size $\delta$ satisfy
\begin{equation}\label{eq:area}
\delta \lambda \sqrt{n} \ge C_0 \sqrt{{\rev \log\log n}} 
\end{equation}
for some large enough constant $C_0$. Let $\hat\cpt = \hat t_{(0,n]}/n$ be the estimated change point by OS (\cref{Alg:OS}) on $(0,n]$. Then:
$$
     \lim_{n \to \infty}\Prob{\abs{\hat\cpt - \cpt} \le C_1 \frac{{\rev \log\log n}}{\delta^2 n}} = 1 \qquad \text{with some constant } C_1\,.    
$$
\end{theorem}

\cref{th:noss} states that OS detects the only change point with a localization error of order $\log\log n / (\delta^2 n)$, which is minimax optimal up to a possible log-factor {\rev (see e.g.~Lemma~1 in \cite{VFLR20}).
In comparison with the weakest condition to ensure consistency of change point estimation, which is $\delta \sqrt{\lambda n} \gtrsim \sqrt{\log\log n}$ (see \cite{Liu_Gao_Samworth_dyadic}), OS is sub-optimal. 
However, in the particular case that the minimal segment length does not vanish, i.e.\ $\lambda \asymp 1$, the condition~\eqref{eq:area} becomes $\delta \sqrt{n}\gtrsim \sqrt{\log\log n}$ {and}
OS is then optimal (up to multiplying constants).} This is the situation where the length of the left segment is comparable to that of the right segment, which we thus refer to as a ``balanced'' scenario. In contrast, ``unbalanced'' scenarios are ones where the lengths of shorter and longer segments are very different. 
It is possible to show that the suboptimal condition \eqref{eq:area} cannot be improved
{and}
is intrinsic to OS, rather than an artifact in our theoretical analysis (see \cref{b:nos} in \cref{Appendix_Simulations}). 

\subsection{Advanced optimistic search}
In \cref{Alg:aOS} we propose \emph{advanced optimistic search} (aOS) that improves on the naive version to tackle also unbalanced cases with the change point {being} very close to the boundary. The main idea is to check a preliminary set of dyadic locations (up to rounding) to localize the change point approximately and then apply OS in a suitable (balanced) neighborhood around the preliminary estimate in order to achieve a better localization. The preliminary estimate and the two locations marking its neighborhood are chosen from the dyadic locations, namely, as the location of the biggest gain as well as the closest dyadic neighbors thereof (to the left and to the right). 
Intuitively, as the dyadic points are denser on the boundaries, the advanced search is suitable even in very unbalanced scenarios where the naive version fails. From a theoretical perspective, this modification 
leads to minimax optimality.

\begin{algorithm}[H]
\caption{Advanced Optimistic Search (aOS)}\label{Alg:aOS}
\begin{algorithmic}[1]
\Require $r-l > 2; ~l,r \in \mathbb{N}$ and step size $\nu \in (0,1)$ with $1/2$ by default
\Function{\textnormal{aOS}}{$\nu, l, r$}
\State $k \leftarrow \lfloor\log_2((r-l)/2)\rfloor$
\State $\mathcal{D} \leftarrow \bigl\{ \floor{l+2^{-i}(r-l)},\,\ceil{r-2^{-i}(r-l)}\,:\, i =1, \ldots, k\bigr\}$ \Comment{Dyadic locations}
\State $t_* \leftarrow \argmax\limits_{t \in \mathcal{D}} G_{(l,r]}(t)$ \Comment{Find the best split point on the ``dyadic'' grid}
\If{$t_* \le (r+l)/2$}
\State $\tilde l \gets \floor{t_* -(t_*-l)/2}$  and  $\tilde r \leftarrow \ceil{t_*+(t_*- l)}$ \Comment{``dyadic neighbours'' of $t_*$}
\Else
\State $\tilde l \leftarrow \floor{t_* - (r-t_*)}$  and $\tilde r \leftarrow \ceil{t_*+(r-t_*)/2}$ \Comment{``dyadic neighbours'' of $t_*$}
\EndIf
\State $\hat t_{(l,r]} \leftarrow$ \Call{\textnormal{OS}}{$\tilde l, t_*, \tilde r \mid \nu, l, r$} \Comment{Naive optimistic search on $(\tilde l,\tilde r]$ containing $t_*$}
\State \textbf{return} $\hat t_{(l,r]}$
\EndFunction
\end{algorithmic}
\end{algorithm}

\begin{theorem}[Advanced optimistic search]\label{th:aoss}
Under Model~\ref{Gaussian_setup} with a single change point, i.e.~$\ncp =1$, we assume that the minimal segment length $\lambda$ and the minimal jump size $\delta$ satisfy
\begin{equation}\label{eq:best}
\delta  \sqrt{\lambda n} \ge C_0 \sqrt{{\rev\log\log n}} 
\end{equation}
for some large enough constant $C_0$. Let $\hat\cpt = \hat t_{(0,n]}/n$ be the estimated change point by aOS (\cref{Alg:aOS}) on $(0,n]$. Then:
$$
\lim_{n \to \infty}\Prob{\abs{\hat\cpt - \cpt} \le C_1 \frac{{\rev \log\log n}}{\delta^2 n}} = 1 \qquad \text{with some constant } C_1\,.
$$
\end{theorem}

Similar to OS (Theorem~\ref{th:noss}), it is shown in Theorem~\ref{th:aoss} that aOS is able to localize the only change point at the best possible rate up to a log-factor, but now under a much weaker condition~\eqref{eq:best} instead. In comparison with the weakest condition $\delta \sqrt{\lambda n} \gtrsim \sqrt{\log \log n}$ in~\cite{Liu_Gao_Samworth_dyadic}, {\rev we lose nothing except for a possibly larger multiplying constant.} Therefore, aOS possesses the (nearly) statistical minimax optimality like the full grid search, which checks every possible split point in $\{1, \ldots, n\}$. Note that aOS (and OS) only requires $O(\log n)$ evaluations of the gain function (\cref{Lem:cOS_speed} later), in sharp contrast to $O(n)$ required by the full grid search. It is a surprising fact that computational speed-ups come at almost no cost of statistical performance at all. {In this sense.}
``free lunch'' is possible!

{\rev The idea of preliminary check of dyadic locations dated back to \cite{Rufibach_Walther} (or even earlier to wavelets) and was recently explored in \cite{Liu_Gao_Samworth_dyadic,SeedBS}. }

In practice, variants of (a)OS might be equally viable e.g.\ the combination of OS and aOS, referred to as \emph{combined OS,} see \cref{Appendix_cOS} for details. 

\begin{lemma} \label{Lem:cOS_speed}
OS and aOS (\cref{Alg:OS,Alg:aOS}) terminate in $O(\log(r-l))$ and thus at most $O(\log n)$ steps (i.e.\ number of gain function evaluations).
\end{lemma}

For the univariate Gaussian setting, the overall computational cost is only $O(\log n)$ if cumulative sums have been pre-computed and are freely available, as in that case each evaluation is possible in $O(1)$ time. Otherwise the $O(n)$ cost of calculating the cumulative sums becomes dominant. We remark that availability of cumulative sums (or similar ``sufficient statistics'' for the evaluation of the gain function) is 
a practical recommendation
to store 
data for off- and online change point problems.

\section{Methodology and theory for multiple change points}
\label{Methodolgy_multiple}

We consider now the setup of multiple change points, and investigate how our methodology can be extended to such a more ambitious setup in order to still have a sublinear number of evaluations of the gain function and yet with theoretical {optimality} guarantees for the estimation performance. 

{Obviously, the extension of optimistic searches to multiple change points is not straight-forward.} Here we adopt the idea of Seeded Binary Segmentation (SeedBS, \cite{SeedBS}), which 
searches for a single change point in various intervals with the hope that some of these intervals contain only a single change point, where the detection is ``easy''. While the best split point in each interval is a candidate, the decision which candidates to declare finally as change points depends on a subsequent selection step. The intervals are called seeded intervals and they are constructed deterministically (see \cref{def:seeded_intervals} below). SeedBS is thus very similar to wild binary segmentation (WBS, \cite{Fryzlewicz_WBS}) and the narrowest over threshold method \cite{Baranowski}. The latter two procedures use random intervals instead of the deterministic ones
and 
in general leads to total length and number of considered intervals to be larger and thus computationally more expensive.

\begin{definition}[Seeded intervals;  \cite{SeedBS}]
\label{def:seeded_intervals}
Let $a \in [1/2,1)$ denote a given decay parameter. Let $I_1=(0,n]$. For $k = 2,\ldots, \lceil\log_{1/a}(n)\rceil$ (i.e.~logarithm with base $1/a$) define the $k$-th layer $\I_k$ as the collection of $n_k$ intervals of initial length $l_k$ that are evenly shifted by $s_k$:
\begin{equation*}
    \I_k = \bigcup\limits_{i=1}^{n_k} \{ (\lfloor (i-1)s_k \rfloor,
        \lceil(i-1)s_k + l_k \rceil ] \},  
\end{equation*}
where
$n_k = 2\lceil (1/a)^{k-1} \rceil - 1$,  
$l_k  = n a^{k-1}$ and
$s_k  = (n-l_k)/(n_k-1)$.
The overall collection of seeded intervals is
\begin{equation*}
    \I = \bigcup\limits_{k=1}^{\lceil\log_{1/a}(n)\rceil} \I_k.
\end{equation*}
\end{definition}

Note that $\I$ covers the whole range of scales and locations in an efficient way such that there are $O(n)$ intervals, which is constructed to guarantee appropriate background for different types of change points. When there is only one change point, all intervals that do not have a starting point at $1$ or do not have an end point at $n$ can be discarded, reducing the number of intervals to $O(\log n)$. In the case of multiple change points, assuming a certain minimal spacing between change points also allows to discard intervals that are too short. 

\begin{algorithm}[H]
\caption{Optimistic Seeded Binary Segmentation {(OSeedBS)}}\label{Alg:OSeedBS}
\begin{algorithmic}[1]
\Require a decay parameter $a\in[1/2,1)$, a minimal segment length $m \ge 2$, and tuning parameters for the selected optimistic search (naive or advanced)
\Function{OSeedBS}{}
 \State $\I \gets$  seeded intervals with decay $a$ and at least $m$ observations.
\For{$(l,r] \in \I$}
\State \qquad $\hat t_{(l,r]}\gets$ the split point returned by the optimistic search on $(l,r]$.
\EndFor
\State Apply some selection method to $\bigl(\hat t_{(l,r]},G_{(l,r]}(\hat t_{(l,r]})\bigr), (l,r] \in \I$, to output the change point estimates $\hat\cpt_1,\ldots, \hat \cpt_{\hat\ncp}$.
\State {Post-process: refine the estimated change points by applying the optimistic search to intervals $\bigl((\hat\tau_{i-1}+\hat\tau_i)/2,\;(\hat\tau_i + \hat\tau_{i+1})/2\bigr]$, $i = 1,\ldots, \ncp$.}
\EndFunction
\end{algorithmic}
\end{algorithm}

The difference between SeedBS \citep{SeedBS} and its optimistic counterpart OSeedBS is essentially in line~$4$ of \cref{Alg:OSeedBS}, where we perform either OS or aOS rather than full grid search. 
The selection method in line~$5$ can be for example greedy or narrowest-over-threshold (NOT) selection, see \cite{SeedBS}. The computational times of OSeedBS depend critically on the minimal segment length $m$. If $m=O(n)$, only a handful of intervals are considered, with $O(\log n)$ evaluations each, and thus $O(\log n)$ evaluations overall. For the other extreme, when $m$ is very small, many intervals need to be generated and thus the main driver of the number of evaluations (and hence, the computational cost) is the number of considered intervals. For $m=2$, the number of intervals and also the total number of evaluations is $O(n)$. The estimation performance of course also depends on the choice of $m$. If chosen too large, estimation performance will be bad as change points within short segments may not be detected. Thus, 
$m$ offers some kind of trade-off between estimation performance and computational efforts. Such trade-offs are inherent also in other methods, see e.g.~the number of random intervals chosen in WBS.

\begin{theorem}\label{th:osbs}
Under \cref{Gaussian_setup}, we assume that the minimal segment length $\lambda$ and the minimal jump size $\delta$ satisfy \begin{equation}\label{eq:bestm}
\delta  \sqrt{\lambda n} \ge C_0 \sqrt{\log n} 
\end{equation}
for some large enough constant $C_0$. Assume further that there is an a-priori known lower bound $\lambda_*$ of all segment lengths, i.e. 
\begin{equation}\label{eq:mlen}
\lambda \ge \lambda_* \asymp n^{-\omega} \qquad \text{for some constant } \omega \in [0,\,1]\,.
\end{equation}
By $\hat\ncp$ and $\hat\cpt_1 < \cdots < \hat\cpt_{\hat \ncp}$ denote respectively the number and the locations of estimated change points by OSeedBS (Algorithm~\ref{Alg:OSeedBS}), with the NOT selection method, and the seeded intervals of lengths larger than $m = \lfloor \lambda_* n/3\rfloor$. Then:
\begin{enumerate}[i.]
\item\label{i:osbs1}
There exist constants $C_1$, $C_2$, independent of $n$, $\omega$, $a$ and $m$, such that, given the threshold for the selection method 
$\gamma = C_1 \sqrt{\log n}$,
$$
\lim_{n \to \infty}\Prob{\hat \ncp = \ncp,\,\, \max_{i = 1, \ldots, \ncp} \delta_i^2\abs{\hat \cpt_i - \cpt_i} \le C_2 \frac{\log n}{n}} = 1\,.
$$
\item\label{i:osbs2}
The number of evaluations is $O(\min\{n^\omega \log n,\, n\})$.
\end{enumerate}
\end{theorem}

We emphasize that the assumption~\eqref{eq:mlen} is only needed for computational {efficiency ensuring}
a sublinear number of evaluations as specified in part~\ref{i:osbs2}~of Theorem~\ref{th:osbs}. If the data are stored in the format of cumulative sums, then the overall computational cost itself is also $O(\min\{n^\omega \log n,\, n\})$, i.e.\ it equals the number of evaluations. However, if cumulative sums are not available, then the $O(n)$ cost of calculating cumulative sums becomes dominant and the overall computational cost is $O(n)$, see \cref{App:Computation}.
For the statistical guarantee in part~\ref{i:osbs1}, an assumption on the minimal spacing, i.e.~\eqref{eq:mlen}, becomes obvious if we choose $\omega = 1$ and $m =2$, since it is pointless to work on a higher resolution than the sampling rate $1/n$ without further model assumption, and thus in this sense it imposes no restriction at all. In case of multiple change points, the signal strength condition \eqref{eq:bestm} is the weakest one that still allows for detection. It
coincides with the best known results (e.g.~\cite{Frick_etal,Baranowski,Chan_Chen2017,VFLR20})
with {the} only difference in multiplying constants.
Following the proof in \cref{ss:mcpt}, we can easily replace it by 
$$
\min_{i=1,\ldots,\ncp} \Bigl(\min\{\cpt_{i+1}-\cpt_i,\,\cpt_i - \cpt_{i-1}\}\delta_i^2 n\Bigr)\,\, \gtrsim \,\,\log n\qquad \text{as} \quad n \to \infty.
$$
This is slightly more general, as it allows for frequent large jumps and {rare}
small jumps over long segments (cf.~\cite{ChKi19}). However, we prefer the current version as in \cref{th:osbs}, for notational simplicity. Note, moreover, that the localization rate reported in part~\ref{i:osbs1} of \cref{th:osbs} is minimax optimal up to a possible log factor. {In the particular case of $\kappa \asymp n^\theta$ with some constant $\theta > 0$, the derived localization rate is indeed optimal (namely, the log factor being necessary, see \cite{VFLR20}).
}

WBS \citep{Fryzlewicz_WBS}, and the similar narrowest over threshold method \citep{Baranowski}, 
can also be sped up using OS (or aOS). However, in the worst case with very short segments, e.g.~in frequent change point scenarios with up to $O(n)$ change points, these two methods need to draw up to $O(n^2)$ random intervals, which prohibits sublinear number of evaluations overall. Nonetheless, we expect substantial computational gains using OS (or aOS) in connection with many other multiple change point detection techniques compared to the respective full grid search based counterparts.

\section{{Extension to multivariate and high-dimensional scenarios}}\label{s:mhdgauss}
{
In the previous sections, the theoretical findings on the univariate Gaussian {changing means}
\cref{Gaussian_setup}) reveal the \emph{statistical insight} that the computational efficiency can be improved by almost one order (more precisely, from $O(n)$ to $O(\log n)$ evaluations) with nearly no loss of statistical efficiency, using optimistic search strategy.  {We show that this is also true for Gaussian changing means problems}
of general {and potentially high} dimension.

\subsection{{The multivariate model}
and {some} technical simplification}\label{ss:mats}
\begin{Model}[Gaussian changing means]\label{m:mhgauss}
Assume that vectors $\vec{X}_1, \ldots, \vec{X}_n \in \R^p$ are independent and 
\begin{equation*}
\begin{split}
 \vec{X}_{\cpt_{0}n+1}(=\vec{X}_1), \ldots, \vec{X}_{\cpt_1 n} & \sim \mathcal{N}(\vec{\mu}_0,  \vec{I}_p)\,, \\
& \vdots	\\
 \vec{X}_{\cpt_{\ncp}n+1}, \ldots, \vec{X}_{\cpt_{\ncp+1}n}(=\vec{X}_{n}) 	& \sim \mathcal{N}(\vec{\mu}_{\ncp}, \vec{I}_p)\,,
\end{split}
\end{equation*}
where $\{\cpt_i \,:\, i = 1, \ldots, \ncp\}$ gives the points satisfying
$$ 
0 = \cpt_0 < \tau_1 < \cdots < \cpt_{\ncp+1} = 1\quad \text{and}\quad \cpt_i n \in \N\,,
$$
mean vectors $\vec{\mu}_i \neq \vec{\mu}_{i-1} \in \R^p$ for $i=1,\ldots,\ncp$ give the levels on segments, and the common covariance matrix is $\vec{I}_p\in \R^{p \times p}$ the identity matrix. 
Define the minimal segment length $\lambda$ as
$$
\lambda \equiv \lambda_n = \min_{i = 0,\ldots,\ncp} (\cpt_{i+1} - \cpt_i)\,,
$$
and the minimal jump size $\delta$ as
$$
\delta \equiv \delta_n = \min_{i = 1,\ldots, \ncp} \delta_i \qquad \text{with} \quad \delta_i = \norm{\vec{\mu}_i - \vec{\mu}_{i-1}}\,,
$$
with $\norm{\cdot}$ the Euclidean norm. In addition, assume that there is a \emph{known} integer $s \in \{1, \ldots, p\}$ such that, for $i = 0, \ldots, \kappa$, 
$$
\norm{\vec{\mu}_i - \vec{\mu}_{i-1}}_0 := \#\{j = 1, \ldots, p\; \mid\;  \mu_{i,j} \neq \mu_{i-1,j}\} \le s,
$$
with $\mu_{i,j}$ the $j$-th entry of $\vec{\mu}_i$. 
\end{Model}

In \cref{m:mhgauss} the locations of change points are shared over $p$ coordinates, and thus it potentially allows aggregation of detection power among different coordinates. In case that the change of means happens in only a sparse {fraction}
of coordinates (i.e.\ $s \ll p$), one {should focus only}
on the coordinates where the mean changes.
The selection of changing coordinates can be achieved by a simple thresholding {rule}, see \cite{Liu_Gao_Samworth_dyadic}. This motivates us to consider the following gain function:
\begin{equation}\label{e:mhgain}
    G^o_{(l,r]}(t) := \sum_{j =1}^p\left(\csm_{(l,r],j}(t; \vec{X})^2 - \thd^2\right)_+,
\end{equation}
where $0 \leq l < t < r \leq n$ are integers, $\alpha \ge 0$ is a user-specified threshold, and  $\csm_{(l,r],j}(t; \vec{X})$ is the CUSUM statistics in the $j$-th coordinate of $\vec{X} = (\vec{X}_1, \ldots, \vec{X}_n)$, namely,
$$
\csm_{(l,r],j}(t; \vec{X}) = \sqrt{\frac{r-t}{(r-l)(t-l)}}\sum_{i=l+1}^t X_{i,j} -\sqrt{\frac{t-l}{(r-l)(r-t)}}\sum_{i=t+1}^r X_{i,j}, 
$$
with $X_{i,j}$ the $j$-th entry of $\vec{X}_i$.

In the gain function \eqref{e:mhgain} the CUSUM statistics is utilized for change point estimation as well as for coordinate selection. This entanglement of change point estimation and coordinate selection complicates the theoretical analysis. We employ \emph{two technical modifications} to ease the theoretical analysis.

The first is a sample splitting trick that removes the aforementioned entanglement. We split the samples from \cref{m:mhgauss} into two independent groups, with one group at odd times, and the other at even times. One group of samples is then used for the estimation of change points, and the other for the selection of coordinates. For simplicity, we assume that there are two independent copies of samples, denoted as $\{\vec{X}_1, \ldots, \vec{X}_n\}$ and $\{\vec{Y}_1, \ldots, \vec{Y}_n\}$, from \cref{m:mhgauss}. Then the modified gain function is defined as 
\begin{equation}\label{e:mhgm}
    G_{(l,r]}(t) := \sum_{j=1}^p \left(\csm_{(l,r],j}(t; \vec{X})^2-1\right)\ind\left\{\abs{\csm_{(l,r],j}(t; \vec{Y})} \ge \thd\right\},
\end{equation}
with threshold $\alpha \ge 0$ and integers $0\le l < t< r \le n$.

Recall that the basic operation in optimistic searches is the comparison between a pair of locations to determine which one is more likely to be a change point. Such a comparison is done via the absolute scores determined by the gain function, but it is also feasible whenever relative scores are available. Thus, as the second modification, we introduce a relative score between two locations $t$ and $w$ in the form of {a} \emph{comparison function}
\begin{multline}\label{e:mhcmp}
    \cmp_{(l,r]}(t,w) := 
    \sum_{j=1}^p \biggl(\Bigl(\csm_{(l,r],j}(t; \vec{X})^2-\csm_{(l,r],j}(w; \vec{X})^2\Bigr) \times \\ \ind\Bigl\{\max\bigl(\abs{\csm_{(l,r],j}(t; \vec{Y})}, \abs{\csm_{(l,r],j}(w; \vec{Y})} \bigr)\ge \thd\Bigr\}\biggr).
\end{multline}
The location $t$ is preferred as a change point candidate rather than $w$, if and only if $\cmp_{(l,r]}(t,w) \ge 0$. This second modification means that in OS and aOS (\cref{Alg:OS,Alg:aOS}) we use the comparison function in \eqref{e:mhcmp} instead of the gain function in \eqref{e:mhgm} to decide which location is preferred. Besides, in the dyadic search in aOS (\cref{Alg:aOS}, line 4),
the maximum of the gain function should be replaced by the location at which it is preferable in terms of comparison function over all other dyadic locations. However, for a found change point candidate by OS or aOS, the gain function in \eqref{e:mhgm} is {still} used to decide whether it should {be} selected as an estimated change point, in OSeedBS (\cref{Alg:OSeedBS}, line 5).

These two modifications are needed only when $\alpha > 0$.
In case of $\alpha = 0$, the modified gain function \eqref{e:mhgm} is the same as the original gain \eqref{e:mhgain}, and the comparison function \eqref{e:mhcmp} is simply the difference of {the} original gain function \eqref{e:mhgain} at two locations. 

\subsection{Statistical guarantees}\label{ss:stgr}
Given the above two technical modifications, we can establish the following statistical properties of the optimistic searches.  

\begin{theorem}[Single change point]\label{th:mh1cp} 
Under \cref{m:mhgauss} with $\kappa = 1$ (i.e.\ a single change point), assume that  the minimal segment length $\lambda$ and the minimal jump size $\delta$ satisfy
\begin{subequations}\label{e:db1cp}
\begin{equation}\label{e:db1cp:a}
  n \lambda \delta^2\ge C_0 \rho_{\circ}(n, p, s),
\end{equation}
where $C_0 > 0$ is a sufficiently large constant, and 
\begin{equation}\label{e:db1cp:b}
\rho_{\circ}(n, p, s) := \begin{cases}
\sqrt{p \log\log n} & \text{if } s \ge \sqrt{p \log\log n},\\
\max\left(s\log\frac{e \sqrt{p \log\log n}}{s},\log\log n\right)& \text{if } s \le \sqrt{p \log\log n}.
\end{cases}
\end{equation}
\end{subequations}
Let $\hat\cpt = \hat t_{(0,n]}/n$ be the estimated change point by aOS (\cref{Alg:aOS}) on $(0,n]$, with the gain \eqref{e:mhgm}, the comparison function \eqref{e:mhcmp} and 
$$
\thd \equiv \thd(n,p,s) = 
\begin{cases}
0 & \text{ if } s \ge \sqrt{p\log\log n}\text{ or } s = p,\\
\sqrt{2\log \frac{e^2{p\log\log n}}{s^2}}& \text{ otherwise}.
\end{cases}
$$
Then, for some constant $C_1> 0$, it holds that
$$
\lim_{n \to \infty}\Prob{\abs{\hat\cpt - \cpt} \le C_1 \max\Bigl\{\frac{ \log\log n}{\delta^2 n},\, \frac{\min\{s^2, p \log\log n\}}{n^2\lambda \delta^4}\Bigr\}} = 1.
$$
\end{theorem}

\cref{th:mh1cp} includes \cref{th:aoss} as a special case when $p = 1$, {with}
the threshold $\alpha = 0$. In general, the condition \eqref{e:db1cp} is the weakest possible for the detection of a single change point (cf.~\cite{Liu_Gao_Samworth_dyadic}), except that the 
constant $C_0$ might be suboptimal. For the localization of the change point, since $\rho_{\circ}(n, p, s) \ge \sqrt{\min\{s^2, p\log\log n\}}$,  \cref{th:mh1cp} implies
\begin{equation}\label{e:lcd}
\Prob{\abs{\hat\cpt - \cpt} \le C_1\frac{\rho_{\circ}(n,p,s)}{n\delta^2}} \to 1,\quad \text{as }n \to \infty.
\end{equation} 
This is the induced localization rate from the detection of change points, as reported in \cite{PiCV20}.  Intuitively, the induced rate can be obtained since one may treat the localization of a single change point up to an accuracy $\varepsilon_n$ \enquote{equivalently} as the detection of a single change point at $t = n\varepsilon_n$ with the same jump size. This connection between detection and localization of change points leads to almost sharp localization rates in univariate and multivariate cases, but may yield suboptimal rates in case of high dimensions. In fact, when $p \gg \log\log n$, the localization rate in \cref{th:mh1cp} is strictly faster than the rate in \eqref{e:lcd}, in the dense scenario ($s \ge \sqrt{p\log\log n}$) if $n\lambda\delta^2 \gg \sqrt{p \log\log n}$, and in the sparse scenario ($s \le \sqrt{p\log\log n}$) if $s \gtrsim \log\log n$. 

Moreover, \cref{th:mh1cp}  together with the condition
\begin{equation}\label{e:brcp1}
n \lambda \delta^2 \gtrsim \min\Bigl\{\frac{s^2}{\log\log n}, p\Bigr\},
\end{equation}
leads to 
$$
\Prob{\abs{\hat\cpt - \cpt} \lesssim \frac{\log\log n}{n\delta^2}} \to 1,\quad \text{as }n \to \infty,
$$
which is not improvable except for the factor of $\log\log n$ (cf.\ \cite[Proposition~3]{Wang_highdim_mean_change}). In the literature, stricter conditions than \eqref{e:brcp1} are requested for the same localization rate (ignoring the log factor), see e.g.\  \cite{BhBM17} and \cite{KFJS21}. 
We stress that in the low dimensional case of $p \lesssim \log\log n$, or in the highly sparse case of $s \lesssim \log\log n$, the  condition~\eqref{e:brcp1} is simply a consequence of \eqref{e:db1cp}, and thus the localization rate in \cref{th:mh1cp} is minimax optimal (up to a $\log\log n$ factor). However, in general, it remains unclear, whether the localization rate in \cref{th:mh1cp} is optimal or not, under the weakest detection condition~\eqref{e:db1cp}, see \cref{Discussion}.

The optimistic searches can be applied to the inference of multiple change points, if one incorporates the idea of 
SeedBS, which results in 
OSeedBS (\cref{Alg:OSeedBS}), 
as in the univariate setup of \cref{Gaussian_setup}. In particular, because of multiscale nature of seeded intervals, OSeedBS extends the statistical optimality of optimistic searches for a single change point to the general case of multiple change points.   
\begin{theorem}[Multiple change points]\label{th:mhmcp}
Under \cref{m:mhgauss}, we assume that the minimal segment length $\lambda$ and the minimal jump size $\delta$ satisfy 
\begin{subequations}\label{e:dbmcp}
\begin{equation}\label{e:dbmcp:a}
  n \lambda \delta^2\ge C_0 \rho(n, p, s),
\end{equation}
where $C_0 > 0$ is a sufficiently large constant, and 
\begin{equation}\label{e:dbmcp:b}
\rho(n, p, s) := \begin{cases}
\sqrt{p \log n} & \text{ if } s \ge \sqrt{p \log n},\\
\max\left(s\log\frac{e \sqrt{p \log n}}{s},\log n\right)& \text{ if } s \le \sqrt{p \log n}.
\end{cases}
\end{equation}
\end{subequations}
By $\hat\ncp$ and $\hat\cpt_1 < \cdots < \hat\cpt_{\hat \ncp}$ denote respectively the number and the locations of estimated change points by OSeedBS (\cref{Alg:OSeedBS}) with the NOT selection. Set the threshold $\thd$ in the gain \eqref{e:mhgm} and the comparison function \eqref{e:mhcmp} as
$$
\thd \equiv \thd(n,p,s) = 
\begin{cases}
0 & \text{ if } s \ge \sqrt{p\log n}\text{ or } s= p,\\
\sqrt{2\log \frac{e^2{p\log n}}{s^2}}& \text{ otherwise},
\end{cases}
$$
and the selection threshold $\gamma$ in NOT as $\gamma = C_1 \rho(n,p,s)$ for some constant $C_1 > 0$.
Then, there exists a constant $C_2$, such that, as $n\to\infty$,
$$
\Prob{\hat \ncp = \ncp,\,\, \abs{\hat \cpt_i - \cpt_i} \le C_2\max\Bigl\{\frac{ \log n}{\delta_i^2 n}, \frac{\min\{s^2, p \log n\}}{n^2\lambda \delta_i^4}\Bigr\}, \, i =1, \ldots, \ncp} \to 1.
$$
\end{theorem}

It is clear from the proof (\cref{ss:mcpt}) that if all segment lengths are lower bounded by $\lambda_*$, then \cref{th:mhmcp} remains valid even if the minimal length $m$ of seeded intervals is chosen as $m = \lfloor \lambda_* n/3\rfloor$. As a consequence, it covers part \ref{i:osbs1} of \cref{th:osbs} as a special case of $p = 1$. Such a-priori knowledge of $\lambda_*$ will lead to computational speed-ups, see \cref{ss:cmp} later.  
 
Since $\rho(n,p,s)\ge \sqrt{\min\{s^2, p\log n\}}$, the localization rate of \cref{th:mhmcp} implies the induced rate from the detection of change points, namely,  
\begin{equation}\label{e:lcdm}
\Prob{\hat\ncp = \ncp,\, \abs{\hat\cpt_i - \cpt_i} \le C_2\frac{\rho(n,p,s)}{n\delta_i^2},\, i=1,\ldots, \ncp} \to 1,\quad \text{as }n \to \infty,
\end{equation} 
which was reported in \cite{PiCV20}. Similar to the case of a single change point, the localization rate in \cref{th:mhmcp} can be strictly faster than the induced rate in \eqref{e:lcdm}. For instance, in the high-dimensional setup of $p \gg \log n$, this occurs in the dense scenario ($s \ge \sqrt{p\log n}$) when $n\lambda_i\delta_i^2 \gg \sqrt{p\log n}$, and in the sparse scenario ($s \le \sqrt{p\log n}$) when $s \gtrsim \log n$. 

The condition~\eqref{e:dbmcp} is minimax optimal in detection of two or more change points, while the minimax optimality of localization rates remains unclear. An exception is the case of $n\lambda_i\delta_i^2 \gtrsim \min\{s^2/\log n,\, p\}$, the localization rate of \cref{th:mhmcp} is of order $(\log n) / (\delta_i^2n)$ and thus not improvable except a possible log factor. In high dimensions, the optimal localization rates remain yet unknown, similar to the case of a single change point, see \cref{Discussion}.

Inspecting the proof of \cref{th:mhmcp} in \cref{ss:mcpt}, we can show that with no post-processing (line~6, \cref{Alg:OSeedBS}), the localization error rate is 
$$
\eps_i \asymp \max\set{\frac{\log n}{n\delta_i^2},\;\frac{\min\set{s^2,\, p\log n}}{n\delta_i^2 \gamma}},
$$
provided that the selection threshold $\gamma$ satisfies
$$
\rho(n,p,s)\lesssim \gamma \lesssim n\lambda \delta^2. 
$$
Then, if $\gamma \asymp n\lambda \delta_i^2$, the same localization rate as in \cref{th:mhmcp} can be achieved, but this is not practical or feasible, as $n\lambda \delta^2_i$ is often unknown, and may be larger than $n\lambda \delta^2$. Thus, if 
$$
\frac{\log n}{n\delta_i^2} \ll \frac{\min\set{s^2,\,p\log n}}{n\lambda\delta_i^4},
$$
the post-processing is necessary for the faster rate in \cref{th:mhmcp}.  Otherwise, e.g.\ in the univariate case, we can drop the post-processing step for the sake of saving computation. 

One restriction in the sparse scenario is that the level of sparsity $s$ is required (which appears in the threshold $\alpha$). One may adjust the gain function by considering a  proper selection of guesses on $s$, e.g.\
$$
\{1, 2, \ldots, 2^{\lceil \log_2\sqrt{p\log n}\rceil}\} \cup \{p\}
$$
and aggregating over such a selection, in a similar way as in \cite{Liu_Gao_Samworth_dyadic}. The careful investigation is left as future research. 

\subsection{Computational complexity}\label{ss:cmp}
The number of gain function evaluations remains the same as in the univariate case (\cref{Lem:cOS_speed} and \cref{th:osbs}\ref{i:osbs2}). If the data is stored in cumulative sums, each evaluation of gain function involves $O(p)$ computations, which leads to an additional factor of $p$ in run time; Otherwise, each evaluation of gain function will cost $O(np)$ computations, in which case the calculation of cumulative sums is recommended as a preprocessing of the data, which requires $O(np)$ computations only once.  

\begin{prop}\label{p:cmp}
Assume that the data from \cref{m:mhgauss} is stored in the format of cumulative sums, that is, $S_{t,j} := \sum_{i =1}^t X_{i,j}$ with $t = 1, \ldots, n$ and $j = 1,\ldots, p$. Then:
\begin{enumerate}[i.]
\item\label{i:cmp1}
In case of a single change point, the optimistic searches (\cref{Alg:OS,Alg:aOS}) require $O(p\log n)$ computations in the worst case. 
\item\label{i:cmp2}
In case of multiple change points, assume further that 
$$
\lambda \ge \lambda_* \asymp n^{-\omega} \qquad \text{for some constant } \omega \in [0,\,1]\,.
$$
OSeedBS (\cref{Alg:OSeedBS}) with  $m = \lfloor \lambda_* n/3\rfloor$ requires $O(p\min\{n^\omega\log n,\, n\})$ computations in the worst case.
\end{enumerate}
\end{prop}

Clearly, \cref{p:cmp} includes \cref{Lem:cOS_speed} and  part \ref{i:osbs2} of \cref{th:osbs} as a special case of $p =1$. In comparison, the full grid search has $O(pn)$ run time for a single change point, and SeedBS has $O(pn\log n)$ run time for multiple change points. Thus, the computational speedup based on optimistic searches can be polynomial in sample size. 

Moreover, we stress that OSeedBS can be sped up by {a} slight modification of the procedure as follows. On every seeded interval, we first select $s$ coordinates such that their corresponding squared CUSUM statistics evaluated at the middle point of this interval are the largest among all coordinates. Afterwards, we only consider the gain function restricted to the selected $s$ coordinates.  With this modification, OSeedBS will have an improved  worst case run time $O(s\min\{n^\omega\log n,\, n\} + p n^\omega)$, and will still enjoy the same statistical guarantee of OSeedBS established in \cref{th:mhmcp} (see \cref{ss:mcpt}).
}

\section{Simulations}
\label{Simulations}

{
We provide a simulation study of our optimistic search methods (including combined OS in \cref{Appendix_cOS}) on univariate changing means as well as higher dimensional changing covariance problems.}

\subsection{Single change point in univariate Gaussian means}
\begin{example} \label{setup:single_mean_shift}
Let $X_{1}, \ldots, X_{100}\sim\mathcal{N}(0,\sigma^2)$ and $X_{101}, \ldots, X_{100+n}\sim\mathcal{N}(0.5,\sigma^2)$ be independent observations with a single change in the mean value at observation $100$.  
\end{example}

{Simulation results are reported in \cref{tab:simulation_results} and \cref{Fig:single_cpt} in \cref{Appendix_Simulations}.}
While OS clearly struggles when the lengths of the two segments are very unbalanced ($n$ large, in particular with 
high noise level),
aOS has a much better
performance.
However, for the more balanced scenarios (up to $n=400$ for $\sigma=1.5$ for example), OS performs well.
The combined OS (\cref{Appendix_cOS}) 
has a slightly improved performance compared to aOS (in particular for the rather balanced scenarios). 
The full grid search has the best performance for the rather challenging scenarios that are very unbalanced and/or have a high noise level, but aOS and combined OS come remarkably close. We note that increasing absolute errors in change point location for higher values of $n$ despite having more available observations is actually reasonable, as there are meanwhile more potential candidates for change points on the grid. This is also compatible with the theoretical bound of order $\log \log n /\delta^2$.  


\subsection{Multiple change points in univariate Gaussian means}
\cref{setup:blocks_signal} {below} describes the blocks signal \cite{DoJo94} with the noise level as used by \cite{Fryzlewicz_WBS}.

\begin{example} \label{setup:blocks_signal}
Consider a total of $2048$ observations with $11$ change points at locations $205$, $267$, $308$, $472$, $512$, $820$, $902$, $1332$, $1557$, $1598$ and $1659$ as well as mean values $0$, $14.64$, $-3.66$, $7.32$, $-7.32$, $10.98$, $-4.39$, $3.29$, $19.03$, $7.68$, $15.37$ and $0$ between the change points to which independent Gaussian noise with a standard deviation of $\sigma = 10$ is added.
\end{example}

{The results are displayed in \cref{Fig:blocks_signal}}.
The average performances of both OS and combined OS are very close to the one from the full search. 
Overall, it turns out that the optimistic variants of OSeedBS have a competitive average performance compared to the full grid search based SeedBS. Moreover, as long as the minimal segment length constraints are short enough to guarantee coverage of each single change point, both SeedBS and variants of OSeedBS perform well.

\begin{figure}[!htb]
\centering
\includegraphics[width=0.8\textwidth]
{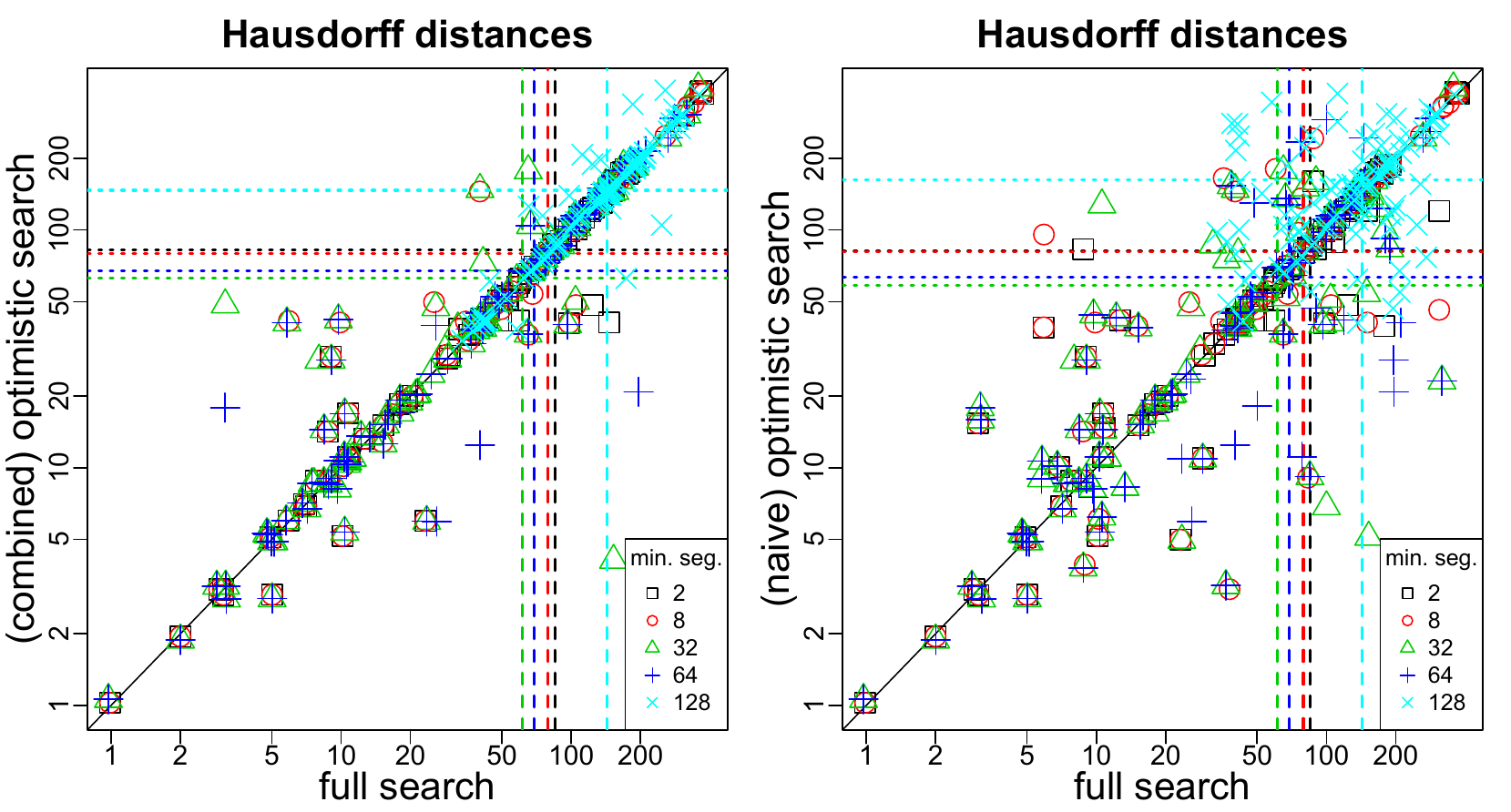} \\
\caption{{Results on Example 6.2.} Pairwise plots of Hausdorff distances of the locations of the best $11$ change point candidates (with greedy selection) compared to the true ones in $100$ simulations for SeedBS (decay $a=1/\sqrt{2}$) with various minimal segment length constraints and full grid search in each seeded interval (horizontal-axis) versus combined OS (vertical-axis, left) and OS (vertical-axis, right). 
The vertical and horizontal dashed lines indicate the average Hausdorff distances
for the respective minimal segment length constraint and search method within the seeded intervals. Note the logarithmic scales on both axes.} 
\label{Fig:blocks_signal}
\end{figure}


\subsection{{High-dimensional Gaussian covariance changes}}
\label{highdim_simulations}
{As an exploration on the potential of optimistic searches,} we introduce a changing covariance setup in \cref{GGM_setup} with specific 
instances for 
simulations 
in \cref{setup:high-dim}. 

\begin{Model}
\label{GGM_setup}
Assume that observations $\vec{X}_1, \ldots, \vec{X}_n$ are independent and 
\begin{equation*}
\begin{split}
 \vec{X}_{\cpt_{0}n+1}(=\vec{X}_1), \ldots, \vec{X}_{\cpt_1 n} & \sim \mathcal{N}(\vec{0}, \vec{\Sigma}_0)\,, \\
& \vdots	\\
 \vec{X}_{\cpt_{\ncp}n+1}, \ldots, \vec{X}_{\cpt_{\ncp+1}n}(=\vec{X}_{n}) 	& \sim \mathcal{N}(\vec{0}, \vec{\Sigma}_\ncp)\,,
\end{split}
\end{equation*}
where $\{\cpt_i \,:\, i = 1, \ldots, \ncp\}$ gives the locations of change points satisfying
$$
0 = \cpt_0 < \tau_1 < \cdots < \cpt_{\ncp+1} = 1\quad \text{and}\quad \cpt_i n \in \N\,.
$$
The means are $\vec{0}\in\mathbb{R}^p$ while $\vec{\Sigma}_i \neq \vec{\Sigma}_{i-1}\in\mathbb{R}^{p\times p}$ for $i=1,\ldots,\ncp$ give the covariances on the segments.
\end{Model}

A simulation setup of \cref{GGM_setup} 
was considered in \cite{SeedBS} as an example to demonstrate the computational efficiency of seeded intervals over random intervals utilized in WBS. We will show that further speed-ups in such computationally challenging setup for many available algorithms can be easily obtained utilizing our optimistic search strategies. 

\begin{example}
\label{setup:high-dim}
Let ${\Sigma}_{ij} = \exp{(-\frac{1}{2} \abs{t_i-t_j})}$ with $t_i - t_{i-1} = 0.75, i = 2,\ldots,20$ be a chain network model (see e.g.~Example~4.1 in \cite{fan2009_chain_network}) with $p=20$ variables. A modified version $\tilde{\vec{\Sigma}}$ is obtained by replacing the top left $5\times5$ block of $\vec{\Sigma}$ by a $5$-dimensional identity matrix. 
\begin{enumerate}
    \item[\emph{(a)}] In the setup from \cite{SeedBS} we set in \cref{GGM_setup} $\vec{\Sigma}_0=\vec{\Sigma}$, $\vec{\Sigma}_1=\tilde{\vec{\Sigma}}$, $\vec{\Sigma}_2=\vec{\Sigma}$, $\vec{\Sigma}_3=\tilde{\vec{\Sigma}}$, etc.~and draw $100$ observations for each segment until obtaining a total of $n=2,000$ observations. Hence, there are $20$ segments of length $100$ each, with a total of $\ncp=19$ change points.
    \item[\emph{(b)}] In \cref{GGM_setup} we set $\vec{\Sigma}_0=\vec{\Sigma}_2=\vec{\Sigma}_4=\vec{\Sigma}$, $\vec{\Sigma}_1=\vec{\Sigma}_3=\vec{\Sigma}_5=\tilde{\vec{\Sigma}}$, and draw $550$, $300$, $700$, $250$, $100$ and $100$ observations for the respective segments, obtaining again a total of $n=2,000$ observations, but this time with $6$ segments, i.e., a total of $\ncp=5$ change points.
\end{enumerate}
\end{example}

We consider a gain function (detailed definition in \cref{Appendix_highdim})
based on the multivariate Gaussian log-likelihood where the underlying precision matrices are obtained by
the graphical lasso
\cite{glasso}. The graphical lasso is rather costly especially when repeatedly fitting at each possible split point $s$ on a grid. The essential problem is that the estimator $\hat{\vec{\Omega}}_{(u,s]}^{\mathrm{glasso}}$ of precision matrix for a segment $(u, s]$ cannot be efficiently updated (not even using warm starts) to obtain $\hat{\vec{\Omega}}_{(u,s+1]}^{\mathrm{glasso}}$ for the segment $(u,s+1]$. Hence, the overall number of graphical lasso fits is the main driver of computational time.

This chosen gain function is motivated by the fact that its population version attains local maxima only at change points. More precisely, the population gain has the form of
\begin{equation}\label{e:gainHD}
    G^*_{(l,r]}(t)
    =
    \frac{r-l}{n}\log(|\vec{\Sigma}_{(l,r]}|) 
    -\frac{t-l}{n}\log(|\vec{\Sigma}_{(l,t]}|)
    -\frac{r-t}{n}\log(|\vec{\Sigma}_{(t,r]}|)\,,
\end{equation}
where $\vec{\Sigma}_{(l,r]}$ is the average covariance matrix on the segment $(l,r]\subseteq(0,1]$,
$$
\vec{\Sigma}_{(l,r]} = \sum_{i = 0}^\ncp \frac{\abs{(\cpt_i,\,\cpt_{i+1}]\cap (l,\,r]}}{\abs{(l,\,r]}}\, \vec{\Sigma}_i\qquad\text{with} \;\;\abs{(a,\,b]} = b-a,
$$
and $\abs{\vec{A}}$ is the determinant of a matrix $\vec{A}$.

\begin{lemma} \label{Lem:GGM_convexity}
The function $G^{*}_{(l,r]}(\cdot)$ defined in \eqref{e:gainHD} is piecewise (i.e.~in between change points) convex, and up to some special cases as detailed in the proof (see Appendix~\ref{App:GGM_convexity_proof}) even strictly convex.
\end{lemma}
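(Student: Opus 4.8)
The plan is to strip away the part of $G^*_{(l,r]}(s)$ that does not depend on $s$ and then reduce the claim to a clean second-derivative computation on each region where $s$ lies strictly between two consecutive change points. Since $\frac{r-l}{T}\log(|\Sigma_{(l,r]}|)$ is constant in $s$, it suffices to show that the two maps $g(s):=(s-l)\log(|\Sigma_{(l,s]}|)$ and $h(s):=(r-s)\log(|\Sigma_{(s,r]}|)$ are each concave (and strictly so away from the special cases), because $G^*_{(l,r]}=\text{const}-\frac1T(g+h)$ and the sum of concave functions is concave. The two terms are mirror images under reversing the orientation of $(l,r]$, so I would carry out the computation for $g$ and only record the analogous result for $h$.

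The key observation is that the \emph{unnormalized} matrix $(s-l)\,\Sigma_{(l,s]}=\sum_i \abs{(\cpt_i,\cpt_{i+1}]\cap(l,s]}\,\Sigma_i$ is affine in $s$ as $s$ ranges over a single segment with covariance $\Sigma_j$, with constant derivative $\Sigma_j$ (only the contribution of the segment containing $s$ changes, and it changes linearly). Writing $t=s-l$ and $S=S(s)=\Sigma_{(l,s]}$, this yields the Riccati-type identity $\dot S=(\Sigma_j-S)/t$. I would then differentiate $g=t\log\abs{S}$ twice using the standard identities $\frac{d}{ds}\log\abs{S}=\Tr(S^{-1}\dot S)$ and $\frac{d}{ds}S^{-1}=-S^{-1}\dot S\,S^{-1}$; after substituting $\dot S$ and cancelling, the first derivative collapses to $g'(s)=\log\abs{S}+\Tr(S^{-1}\Sigma_j)-p$ (with $p$ the dimension), and a second differentiation gives
\[
g''(s)=-\tfrac1t\bigl(\Tr(S^{-1}\Sigma_j S^{-1}\Sigma_j)-2\Tr(S^{-1}\Sigma_j)+p\bigr)=-\tfrac1t\,\Tr\bigl[(A-I)^2\bigr],\quad A:=S^{-1/2}\Sigma_j S^{-1/2}.
\]

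Since $(A-I)^2\succeq 0$ and $t=s-l>0$, we have $g''(s)\le 0$; the same computation for the right term produces $h''(s)=-\frac{1}{r-s}\Tr[(B-I)^2]$ with $B:=\Sigma_{(s,r]}^{-1/2}\Sigma_j\,\Sigma_{(s,r]}^{-1/2}$, so $(G^*_{(l,r]})''=-\frac1T(g''+h'')\ge 0$ and convexity between change points follows. For strictness, note that $(G^*_{(l,r]})''(s)=0$ forces both traces to vanish, i.e.\ $A=I$ and $B=I$, equivalently $\Sigma_{(l,s]}=\Sigma_j=\Sigma_{(s,r]}$; these are exactly the degenerate configurations (e.g.\ $(l,r]$ lying inside one segment, where $G^*_{(l,r]}\equiv 0$, or situations in which the left/right averages happen to coincide with the covariance of the segment at $s$) that the statement flags as special cases. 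I expect the main obstacle to be the matrix-calculus bookkeeping that reorganizes the raw second derivative into the manifestly nonpositive form $-\frac1t\Tr[(A-I)^2]$, together with a careful enumeration of the equality cases. A conceptual safeguard is that $g(s)=t\log\abs{S}=t\log\abs{X/t}$ with $X=(s-l)\Sigma_{(l,s]}$ is precisely the perspective of the concave map $X\mapsto\log\abs{X}$ precomposed with an affine function of $s$, so concavity is guaranteed a priori by convex-analytic principles and only the strictness claim genuinely requires the explicit Hessian.
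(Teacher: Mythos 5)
Your proof is correct and follows essentially the same route as the paper's: decompose $G^*_{(l,r]}$ into the constant term plus the left and right log-determinant terms, differentiate twice with the standard matrix-calculus identities, and exhibit the second derivative of each non-constant term as a nonnegative multiple of $\Tr[(A-I)^2]$ --- the paper writes this as $\frac{s-l}{T}\|\Sigma_{(l,s]}^{-1/2}\,\dot\Sigma_{(l,s]}\,\Sigma_{(l,s]}^{-1/2}\|_F^2$, which coincides with your expression after substituting $\dot\Sigma_{(l,s]}=(\Sigma_j-\Sigma_{(l,s]})/(s-l)$ --- with the same identification of the flat special cases ($\Sigma_{(l,s]}=\Sigma_j=\Sigma_{(s,r]}$, equivalently the paper's condition on the segment averages to the left and right). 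Your closing observation that $(s-l)\log\abs{\Sigma_{(l,s]}}$ is the perspective of the concave map $\log\det$ precomposed with an affine function of $s$ is a clean conceptual shortcut for the non-strict convexity that the paper does not record, but the substance of the argument is identical.
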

{
\cref{Lem:GGM_convexity} implies that in the presence of a single change point in $(l,r]$, $G^*_{(l,r]}$ is unimodal and in case $(l,r]$ contains multiple change points, then each strict local maximum corresponds to a change point.}

We compare the estimation performance and computational times of {various methods.} In order to eliminate the effect of model selection, for all algorithms we selected greedily as many change points as the true underlying number (with some exceptions for WBS with small~$M$).

\begin{figure}[!ht]
\centering
\includegraphics[width=0.9\textwidth]
{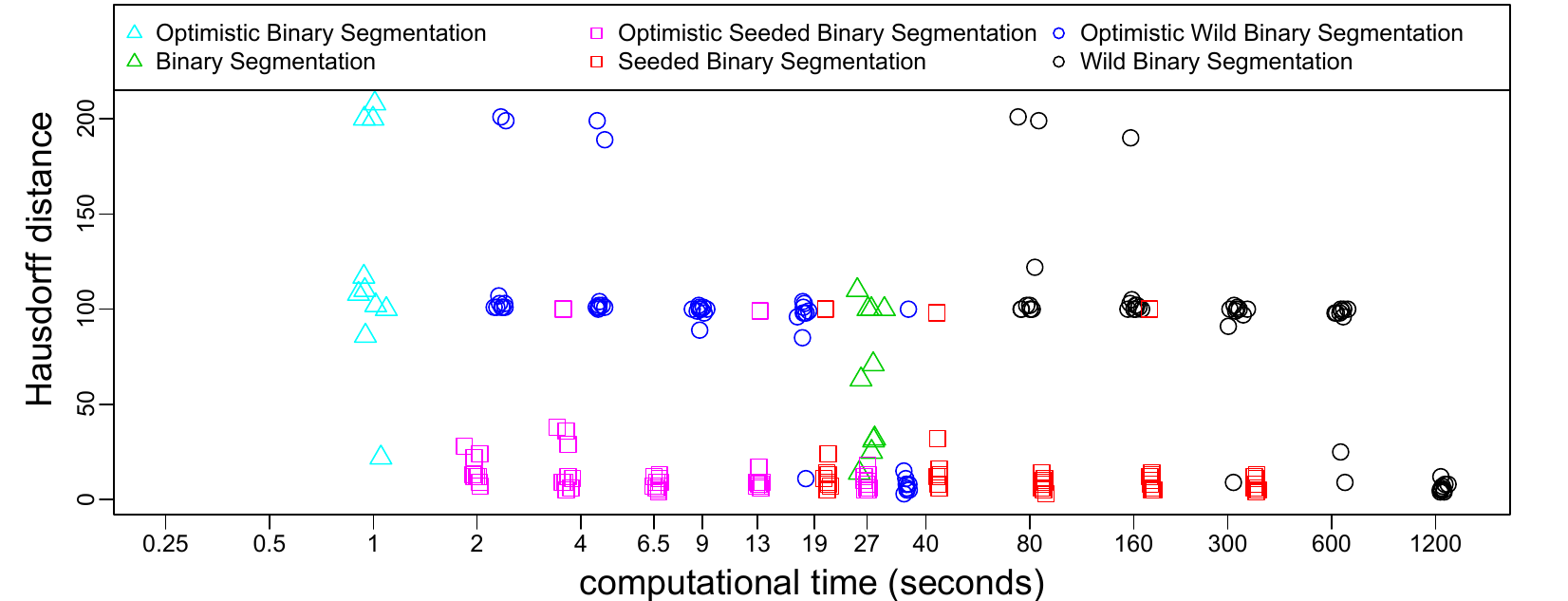} \\
\includegraphics[width=0.9\textwidth]
{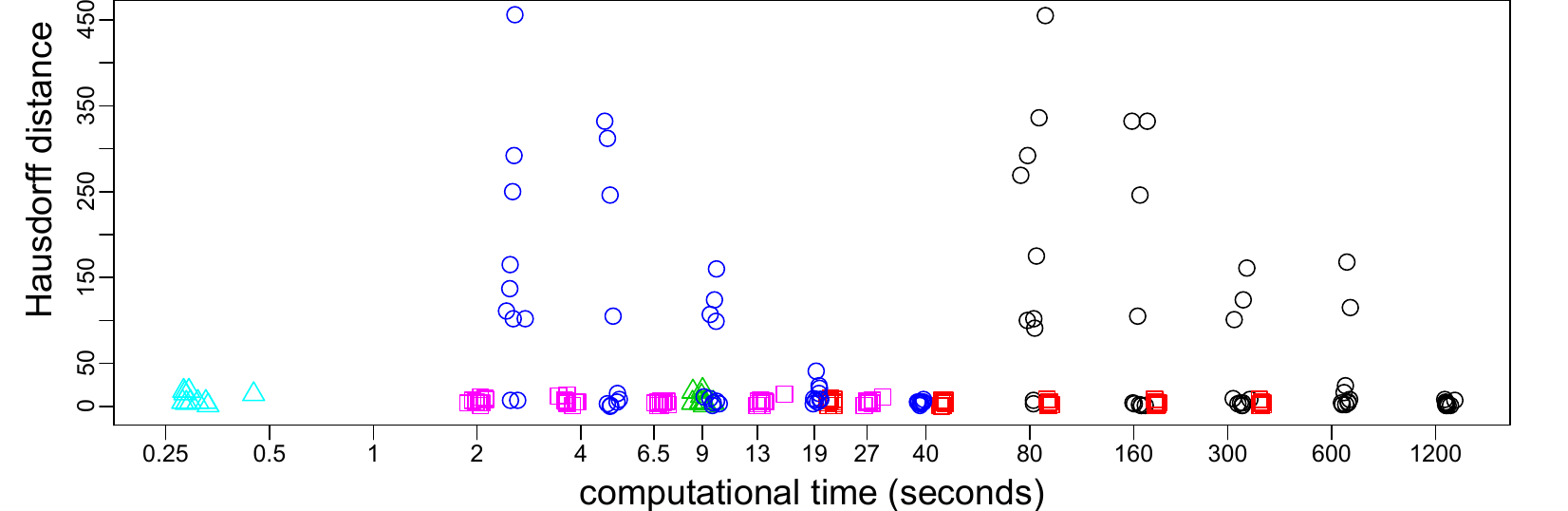} \\
\vspace{-0.2cm}
\caption{Estimation performances (in terms of Hausdorff distance) and computational times on the changing Gaussian graphical model (\cref{GGM_setup}) from \cref{setup:high-dim} (based on 10 simulations) with setups~(a) on the top and~(b) on the bottom. 
The symbols differentiate between the basic algorithms and the colors indicate whether full grid search or optimistic search was used. The five point clouds for SeedBS and OSeedBS correspond to decay parameters $a=2^{-1}, 2^{-1/2}, 2^{-1/4}, 2^{-1/8}, 2^{-1/16}$ for the seeded intervals, while the five point clouds for WBS and OWBS correspond to $M = 100, 200, 400, 800, 1600$ random intervals. {For all algorithms, the true number of change points (or the maximally many if not achieved) has been used .}}
\label{Fig:glasso_multiple_changes}
\end{figure}

We display the results in \cref{Fig:glasso_multiple_changes} which show roughly speedups of factor~$30$ for OBS compared to BS, factor~$35$ for OWBS versus WBS and factor~10--14 for OSeedBS versus SeedBS in both of the considered setups. 
We provide a further discussion on the speedups of different approaches and potential benefits of combining them in Appendix~\ref{Appendix_computational_gains_highdim}.

\section{Discussion}
\label{Discussion}
We introduced optimistic search strategies that avoid the full grid search and thus lead to computationally fast change point detection methods in great generality. 
For {univariate, multivariate and high-dimensional} Gaussian changing means setups we proved that aOS is asymptotically minimax optimal for detecting a single change point with only a logarithmic number of evaluations of the gain function. For multiple change point problems we combined optimistic searches (OS and aOS) with seeded binary segmentation, leading to asymptotically minimax optimal detection while having superior runtime compared to existing approaches. {In addition, the localization rate of change points is by far the sharpest, given the weakest possible condition on the signal-to-noise ratios. It is unclear though whether our localization rate is optimal or not in certain high-dimensional scenarios. In the literature, a faster localization rate is shown to be possible in certain regimes with much {larger} signal-to-noise ratio (see \cite[Theorems~1 and~2]{Wang_highdim_mean_change}). In particular, it indicates that our localization rate is \emph{not adaptively} minimax optimal over all possible ranges of signal-to-noise ratios. The complete understanding of localization rates is, to the best of our knowledge, still open for high-dimensional Gaussian mean changes, which offers an interesting avenue for future research in this direction. Overall, our theoretical results reveal a surprising fact that the computational acceleration up to one order in sample size can be achieved (by optimistic searches) with nearly no loss of statistical efficiency.} 

Our methodology is also most relevant for complex change point detection problems with computationally expensive model fits, as demonstrated by the massive computational gains in examples involving high-dimensional graphical models.

\section*{Acknowledgements}
Solt Kov\'acs and Peter B\"uhlmann have received funding from the European Research Council (ERC) under the European Union's Horizon 2020 research and innovation programme (Grant agreement No. 786461 CausalStats - ERC-2017-ADG). Axel Munk and Housen Li are funded by the Deutsche Forschungsgemeinschaft (DFG, German Research Foundation) under Germany’s Excellence Strategy - EXC 2067/1- 390729940. Axel Munk is also funded by DFG - FOR 5381. {\rev The authors thank Alexandre M\"{o}sching, editors and referees for helpful and careful comments.}


\clearpage
\appendix
\renewcommand\thefigure{\thesection\arabic{figure}}    
\setcounter{figure}{0}
\renewcommand\thetable{\thesection\arabic{table}}    
\setcounter{table}{0} 
\renewcommand\theequation{A\arabic{equation}}   
\setcounter{equation}{0}    

{
\section{Combined optimistic search}
\label{Appendix_cOS}

Combining the results of naive and advanced optimistic search (i.e.\ OS and aOS), thus referred to as the \emph{combined optimistic search}, leads to slightly better empirical performance than the individual searches, but at a slightly higher computational cost, see \cref{Alg:combinedOS}. Also from a theory point of view, the combined optimistic search enjoys the same statistical minimax optimality as the advanced version, see \cref{r:cos} later in \cref{App:proofs}.

\begin{algorithm}[H]
\caption{Combined Optimistic Search}\label{Alg:combinedOS}
\begin{algorithmic}[1]
\Require $r-l > 2; ~l,r \in \mathbb{N}$ and step size $\nu \in (0,1)$ with $1/2$ by default
\Function{\textnormal{cOS}}{$\nu, l, r$}
\State $\hat t_0\leftarrow \aOS(\nu, l, r)$ \Comment{Advanced optimistic search}
\State $\hat t_1 \leftarrow \OS(l, \floor{(l + \nu r)/(1+\nu)}, r\mid\nu, l, r)$ \Comment{Naive optimistic search}
\If{$G_{(l,r]} (\hat t_0) \ge G_{(l, r]}(\hat t_1)$}
\State $\hat t_{(l,r]} \leftarrow \hat t_0$
\Else
\State $\hat t_{(l,r]} \leftarrow \hat t_1$
\EndIf
\State \textbf{return} $\hat t_{(l,r]}$
\EndFunction
\end{algorithmic}
\end{algorithm}
}

\section{Additional material on the univariate Gaussian simulations}
\label{Appendix_Simulations}

{
\begin{example}\label{b:nos}
Consider a specific example of Model~\ref{Gaussian_setup} with $\ncp = 1$ and $\cpt_1 = \lambda \le 1/3$. For  simplicity, let  $\nu = 1/2$  and $n/3 \in \N$. In the first step of OS (naive optimistic search), we check the gain function at $n/3$ and $2n/3$. In order to avoid wrongly discarding $(0, n/3]$, we have to ensure $$\abs{\csm_{(0,n]}\Bigl(\frac{n}{3}\Bigr)} \ge \abs{\csm_{(0,n]}\Bigl(\frac{2n}{3}\Bigr)}\,.$$
In fact, we have
\begin{multline*}
    \Prob{\abs{\csm_{(0,n]}\Bigl(\frac{n}{3}\Bigr)} <\csm_{(0,n]}\Bigl(\frac{2n}{3}\Bigr)}
    \le\Prob{\abs{\csm_{(0,n]}\Bigl(\frac{n}{3}\Bigr)} < \abs{\csm_{(0,n]}\Bigl(\frac{2n}{3}\Bigr)}} \\
    \le 2\,\Prob{\abs{\csm_{(0,n]}\Bigl(\frac{n}{3}\Bigr)} <\csm_{(0,n]}\Bigl(\frac{2n}{3}\Bigr)}\,.
\end{multline*}
Elementary calculation using 
properties of the Gaussian distribution reveals 
$$
\Prob{\abs{\csm_{(0,n]}\Bigl(\frac{n}{3}\Bigr)} <\csm_{(0,n]}\Bigl(\frac{2n}{3}\Bigr)} = \Phi\Bigl(-\delta \lambda\sqrt{\frac{n}{2}}\Bigr)\Phi\Bigl(\delta \lambda\sqrt{\frac{3n}{2}}\Bigr)\,,
$$
with $\Phi$ the distribution function of a standard Gaussian random variable. Thus, if and only if $\delta \lambda \sqrt{n} \to \infty$ it holds that 
$$
\Prob{\abs{\csm_{(0,n]}\Bigl(\frac{n}{3}\Bigr)} \ge \abs{\csm_{(0,n]}\Bigl(\frac{2n}{3}\Bigr)}} \to 1 \qquad \text{as}\quad n \to \infty\,.
$$
Note that $\delta \lambda \sqrt{n} \to \infty$ is, up to a log factor, equivalent to the condition~\eqref{eq:area}, which guarantees that the probability of making a mistake in the first step of OS vanishes eventually. 
\end{example}}

\cref{tab:simulation_results} displays various results for the \cref{setup:single_mean_shift} in the main text. The top part of \cref{tab:simulation_results} shows the localization error of the change point estimates found by the naive, advanced and combined optimistic search, as well as the full grid search for various choices of $n$ (from 100 to 5,000) for three different noise levels ($\sigma = 0.5, 1, 1.5)$. The bottom part of \cref{tab:simulation_results} shows the number of evaluations as a measure of computation times.

\begin{table}
\caption{\label{tab:simulation_results} Simulation results for \cref{setup:single_mean_shift} for various choices of noise level $\sigma$ and number of observations $n$ from the second segment. Reported are average absolute differences between the true change point at location $100$ and the location of the best single split point found by the respective search method (top three blocks) as well as the average number of evaluations (bottom block). Values are averaged over $10000$ simulations (rounded to two digits) and in parentheses the corresponding standard deviations (rounded to integers). The average number of evaluations for noise levels $\sigma=0.5$ and $\sigma=1.5$ are not reported as they are very similar to the case $\sigma=1$.}
\centering
\fbox{\begin{tabular}{llcccc} 
&   &   &   &   & \\
&  & \multicolumn{4}{c}{average absolute estimation error for search methods}\\
\hline
\em noise level& $n$& \em naive        & \em advanced     & \em combined       & \em full search\\  
\hline
&               100  &   3.38 (7)   & 2.77 (4)     & 2.88 (5)       &  3.24 (5) \\
&               200  &   2.72 (4)   & 4.22 (7)     & 2.95 (5)       &  3.17 (5) \\
&               300  &   3.43 (7)   & 4.45 (8)     & 3.21 (5)       &  3.16 (5) \\
$\sigma=0.5$&   400  &   4.68 (10)  & 3.95 (6)     & 3.37 (5)       &  3.16 (5) \\
&               500  &   6.55 (27)  & 4.24 (8)     & 3.09 (5)       &  3.08 (5) \\
&               1000 &  13.75 (74)  & 3.84 (6)     & 3.35 (5)       &  3.08 (5) \\
&               2000 & 171.74 (387) & 3.92 (7)     & 3.26 (6)       &  3.01 (4) \\
&               5000 &1021.12 (1338)& 3.92 (7)     & 3.52 (6)       &  3.05 (5) \\
\hline
&               100  &  15.86   (20) &  15.26  (23) &   15.07  (21)  &     16.79  (22) \\
&               200  &  12.37   (18) &  28.93  (43) &   15.78  (26)  &     17.44  (28) \\
&               300  &  19.50   (34) &  26.91  (45) &   19.30  (35)  &     17.73  (33) \\
$\sigma=1$&     400  &  30.58   (56) &  26.02  (54) &   20.14  (42)  &     17.85  (37) \\
&               500  &  50.09   (87) &  26.97  (59) &   21.06  (49)  &     18.80  (44) \\
&               1000 & 136.75  (240) &  29.70  (94) &   24.59  (81)  &     21.24  (72) \\
&               2000 & 544.70  (547) &  35.73 (160) &   34.16 (156)  &     24.21 (116) \\
&               5000 &1948.79 (1328) &  48.08 (341) &   51.94 (354)  &     38.34 (298) \\
\hline
&               100  &  25.24   (25)  &   33.95   (35) &  31.70    (32)   &   34.19    (33) \\
&               200  &  23.77   (29)  &   60.82   (62) &  39.03    (50)   &   42.05    (52) \\
&               300  &  41.23   (54)  &   65.17   (82) &  50.79    (72)   &   48.55    (72) \\
$\sigma=1.5$&   400  &  62.98   (85)  &   70.69  (107) &  58.85    (95)   &   56.11    (93) \\
&               500  &  96.54  (114)  &   82.27  (134) &  70.03   (121)   &   62.41   (115) \\
&               1000 & 253.11  (291)  &  121.14  (256) &  114.73  (243)   &   98.52   (226) \\
&               2000 & 739.92  (534)  &  202.01  (504) &  203.74  (493)   &   156.51  (434) \\
&               5000 &2171.28 (1211)  &  436.96 (1269) &  455.99 (1260)   &   355.35 (1123) \\
\hline
\hline
&   &   &   &   & \\
&  & \multicolumn{4}{c}{average number of evaluations for search methods}\\
\hline
\em noise level& $n$& \em naive        & \em advanced     & \em combined       & \em full search\\  
\hline
&            100 & 16.18 (1)  &  25.10 (1)  &  41.28 (2)    &     199 (0) \\
&            200 & 17.31 (1)  &  25.92 (2)  &  43.24 (2)    &     299 (0) \\
$\sigma=1$&  500 & 19.08 (1)  &  29.34 (2)  &  48.43 (2)    &     599 (0) \\
&           1000 & 19.36 (1)  &  30.95 (1)  &  50.31 (2)    &    1099 (0) \\
&           2000 & 21.37 (1)  &  33.00 (1)  &  54.36 (2)    &    2099 (0) \\
&           5000 & 23.69 (1)  &  35.02 (1)  &  58.71 (2)    &    5099 (0) \\
\end{tabular}}
\end{table}

\cref{Fig:single_cpt} shows found change points using various search methods in each 1000 simulations for a balanced ($n=200$) and an unbalanced ($n={5000}$) scenario. The failure of the naive optimistic search in most cases for the unbalanced scenario is again clearly visible, while for the advanced, and in particular the combined optimistic search, the found change points very often lie exactly on the diagonal when compared to the full grid search and hence exactly the candidate proposed by the full grid search were found. 

The simulation results in \cref{tab:simulation_results} and \cref{Fig:single_cpt} confirm our theoretical results that the naive optimistic search is not consistent for very unbalanced signals, while the advanced and combined versions are. In terms of computation, the number of evaluations for optimistic search variants can be orders of magnitude smaller compared to full grid search, in particular if $n$ is large.

\begin{figure}[H]
\centering
\includegraphics[width=\textwidth]
{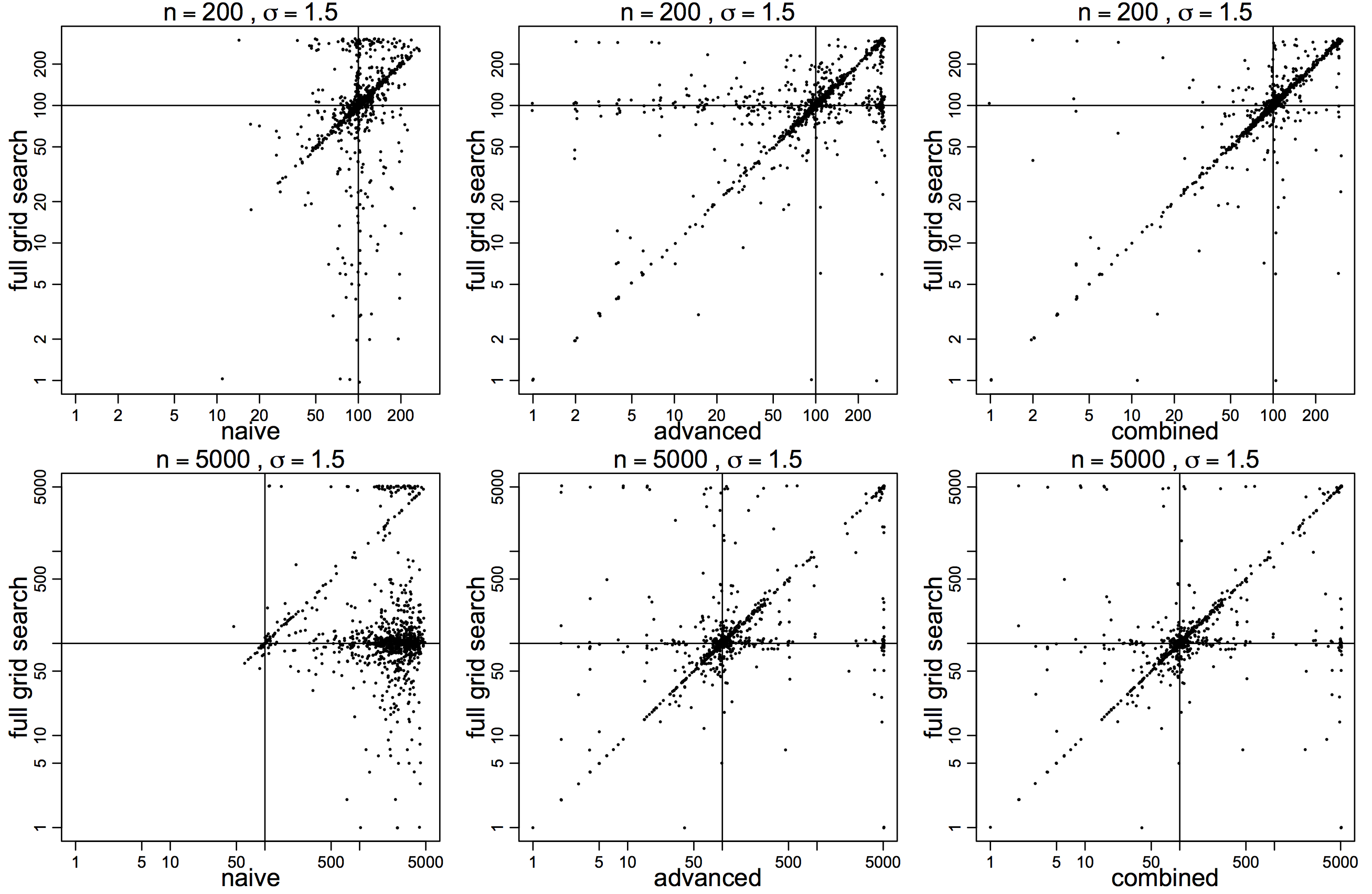} \\
\caption{Pairwise plots of found change points using different optimistic search methods (horizontal-axis) versus the ones returned by the full grid search (vertical-axis) for a noise level $\sigma = 1.5$ and $n=200$ (top) as well as $n={5000}$ (bottom) in $1000$ simulations from \cref{setup:single_mean_shift}. The vertical and horizontal lines indicate the location of the true change point at observation~$100$.} 
\label{Fig:single_cpt}
\end{figure}

\section{Change point detection for high-dimensional Gaussian graphical models}
\label{Appendix_highdim}

In the following, we briefly describe an estimator introduced by \cite{Londschien} for change point detection in high-dimensional Gaussian graphical models, as this is the basis of all change point detection algorithms (BS, SeedBS, WBS and their optimistic variants) that we investigate in \cref{highdim_simulations}. For a segment $(u,v]$ with $0 \leq u < v \leq n$ let $\vec{S}_{(u,v]}$ denote the empirical covariance matrix within that segment. Let $0 < \epsilon < 1/2$ be some required minimal relative segment length and $\gamma>0$ be a regularization parameter. According to a proposal of \cite{Londschien}, for a segment $(u,v]$ with $v-u>2\epsilon n$ we define the split point candidate as
\begin{multline}
    \label{glasso_change_est}
    \hat\eta_{(u,v]} 
= \argmax\limits_{t \in \{ u+\epsilon n,\ldots, v-\epsilon n\} } 
L_n(\hat{\vec{\Omega}}_{(u, v]}^{\mathrm{glasso}}; \vec{S}_{(u, v]}) \\
- \left(L_n(\hat{\vec{\Omega}}_{(u, t]}^{\mathrm{glasso}}; \vec{S}_{(u,t]}) + 
L_n(\hat{\vec{\Omega}}_{(t, v]}^{\mathrm{glasso}}; \vec{S}_{(t, v]}) \right),
\end{multline}
where 
\begin{equation*}
L_n(\vec{\Omega}; \vec{S}_{(u,v]}) =
\frac{v-u}{n}\left(\Tr(\vec{\Omega} \vec{S}_{(u,v]}) - \log(|\vec{\Omega}|) \right)
\end{equation*}
is a multivariate Gaussian log-likelihood based loss in the considered segment $(u,v]$ (scaled according to its length), and $\hat{\vec{\Omega}}_{(u,v]}^{\mathrm{glasso}}$ is the graphical lasso precision matrix estimator \citep{glasso} with a scaled regularization parameter $\sqrt{n/(v-u)}\gamma$, i.e.,
\begin{equation*}
\label{eq:glasso_estimator}
\hat{\vec{\Omega}}_{(u,v]}^{\mathrm{glasso}} = \argmin\limits_{\mathbb{R}^{p \times p} \ni \vec{\Omega} \succ 0} \Tr(\vec{\Omega} \vec{S}_{(u,v]}) - \log(|\vec{\Omega}|) + \sqrt{(n / (v- u))}\gamma \|\vec{\Omega}\|_1.
\end{equation*}

Each split point $t$ in \eqref{glasso_change_est} requires fitting two graphical lasso estimators. While various algorithms for computing exact or approximate solutions for the graphical lasso estimator exist, scalings of $O(p^3)$ (or worse) are common (assuming that the input covariance matrices have been pre-computed and that no special structures such as the block diagonal screening proposed by \cite{Witten2011_connected_components} as well as \cite{Mazumder2012_connected_components} can be exploited). Hence, the graphical lasso is rather costly especially when repeatedly fitting at each possible split point $t$ on the grid $u+\epsilon n,\ldots, v-\epsilon n$. Note that the essential problem is that it is not easy to re-use the estimator $\hat{\vec{\Omega}}_{(u,t]}^{\mathrm{glasso}}$ for the segment $(u, t]$ to obtain $\hat{\vec{\Omega}}_{(u,t+1]}^{\mathrm{glasso}}$ for the segment $(u,t+1]$. One could use $\hat{\vec{\Omega}}_{(u,v]}^{\mathrm{glasso}}$ as a warm start, but not all algorithms that have been developed to compute graphical lasso fits are guaranteed to converge with warm starts (see \citealp{Mazumder2012_dpglasso}) and even the ones that do converge would not save orders of magnitude in terms of runtime. Note that this lack of efficient updates is common for more complex (e.g.~high-dimensional) scenarios and it is in sharp contrast with {e.g.~change point detection in means for  $p$-dimensional Gaussian variables.} There, one needs to calculate means, but the mean for the segment $(u,t+1]$ can be updated in {$O(p)$} cost if the mean for the segment $(u,t]$ is already available, and hence the computational cost is typically proportional to the total length of considered segments. In contrast, in the estimator in~\eqref{glasso_change_est}, the number of graphical lasso fits, as given by the number of considered split points, is the main driver of computational cost. Our optimistic search techniques rely on evaluating far fewer split points $t$ than the full grid search and thus provide an option for massive computational speedups. Of course, the price to pay is having no guarantee to obtain exactly the optimal split point, but the ``optimistic'' approximation to $\hat\eta_{(u,v]}$, i.e., the one obtained via the optimistic search, is still fairly good, see the simulations in \cref{highdim_simulations}. 

In the simulations, we used the \textbf{glasso} \textsf{R} package, available on CRAN, for the graphical lasso fits. For all six methods, we set $\epsilon=0.01$, i.e.\ skipping $20$ observations on the boundaries of each considered interval and overall, no change points were searched in intervals containing less than $60$ observations. We set $\gamma = 0.007$. Regularization in these examples is not essential in the sense that we do not have truly high-dimensional scenarios, but for split points close to the boundaries of the search interval and in short intervals, where the number of observations is close to $p$, regularization can be still helpful. We could have increased $p$ in order to cover truly high-dimensional setups in our simulations, but given the scaling $O(p^3)$ of the graphical lasso, this very quickly goes beyond reasonable computational times for the full grid search based approaches that we want to include as references in terms of achievable estimation error. 

\subsection{Proof of \cref{Lem:GGM_convexity}}
\label{App:GGM_convexity_proof}

First note that in the following we only consider the case $l<\cpt_i < t < \cpt_{i+1}< r$, but similar arguments can be used also in the presence of a single change point in $(l,r]$ or when considering a split point $t$ in the segment from $l$ to the first change point within $(l,r]$. 
Recall that $\vec{\Sigma}_{(l,r]}$ denotes the convex combination of the covariance matrices within the segment $(l,r]\subseteq(0,1]$ with the weights given by the relative segment lengths within $(l,r]$. In particular, for $l<\cpt_i < t < \cpt_{i+1}< r$,
\begin{equation*}
\vec{\Sigma}_{(l,t]} 
=\frac{1}{t-l}
\left(
(\cpt_i - l)\vec{\Sigma}_{(l,\cpt_i]} 
+
(t-\cpt_i)\vec{\Sigma}_{i+1}
\right)
\end{equation*}
and 
\begin{equation*}
\vec{\Sigma}_{(t,r]} 
=\frac{1}{r-t}
\left(
(\cpt_{i+1}-t)\vec{\Sigma}_{i+1} 
+ 
(r - \cpt_{i+1})\vec{\Sigma}_{(\cpt_{i+1}, r]} 
\right),
\end{equation*}
where $\vec{\Sigma}_{i+1} = \vec{\Sigma}_{(\cpt_{i}, \cpt_{i+1}]}$ is the covariance matrix in the $i+1$-st segment $(\cpt_{i}, \cpt_{i+1}]$. We seek to find the first and second derivatives of $G^*_{(l,\,r]}(t)$. First note that
\begin{equation*}
\frac{\partial}{\partial t} \vec{\Sigma}_{(l,t]} 
=
\frac{\cpt_{i} - l}{(t-l)^2}
\left(\vec{\Sigma}_{i+1} - \vec{\Sigma}_{(l,\cpt_i]} \right)
\end{equation*}
and
\begin{equation*}
\frac{\partial^2}{\partial t^2} \vec{\Sigma}_{(l,t]} 
=
-\frac{2(\cpt_{i} - l)}{(t-l)^3}
\left(\vec{\Sigma}_{i+1} - \vec{\Sigma}_{(l,\cpt_i]} \right)
=
-\frac{2}{t-l}\frac{\partial}{\partial t} \vec{\Sigma}_{(l,t]}.
\end{equation*}
We need the following expressions (see e.g.~\citealp{matrix_cookbook}) for derivatives of an invertible matrix $A(s)$ depending on $s$,
\begin{equation*}
\frac{\partial}{\partial t} \log(|\vec{A}(t)|) = \Tr\left(\vec{A}(t)^{-1} \cdot \frac{\partial}{\partial t} \vec{A}(t)\right);
\end{equation*}
\begin{equation*}
\frac{\partial}{\partial t} \Tr(\vec{A}(t)) = \Tr\left( \frac{\partial}{\partial t} \vec{A}(t)\right);
\end{equation*}
\begin{equation*}
\frac{\partial}{\partial t} \vec{A}(t)^{-1} = -\vec{A}(t)^{-1} \cdot \frac{\partial}{\partial t} \vec{A}(t)  \cdot \vec{A}(t)^{-1}.
\end{equation*}
We next compute the first and second derivatives of $G^{*}_{(l,r]}(t)$. Recall from equation~\eqref{e:gainHD} that
\begin{equation*}
    G^*_{(l,r]}(t)
    =
    \frac{r-l}{n}\log(|\vec{\Sigma}_{(l,r]}|) 
    -\frac{t-l}{n}\log(|\vec{\Sigma}_{(l,t]}|)
    -\frac{r-t}{n}\log(|\vec{\Sigma}_{(t,r]}|)\,.
\end{equation*}
Consider first the middle part of $G^{*}_{(l,r]}(t)$, i.e.\ 
$$
L^*(t) := -\frac{t-l}{n}\log(\abs{\vec{\Sigma}_{(l,t]}})\,.
$$
Then for the first derivative
$$
\frac{\dif}{\dif t}L^{*}(t)
=
-\frac{1}{n}\log(|\vec{\Sigma}_{(l,t]}|)
-\frac{t-l}{n} 
\Tr\left( \vec{\Sigma}_{(l,t]}^{-1} \cdot 
\frac{\partial}{\partial t} \vec{\Sigma}_{(l,t]} 
\right),
$$
and for the second derivative
\begin{align*}
\frac{\dif^2}{\dif t^2}L^{*}(t) = &
-\frac{1}{n}\Tr\left( \vec{\Sigma}_{(l,t]}^{-1} \cdot 
\frac{\partial}{\partial t} \vec{\Sigma}_{(l,t]} 
\right)
-\frac{1}{n} 
\Tr\left( \vec{\Sigma}_{(l,t]}^{-1} \cdot 
\frac{\partial}{\partial t} \vec{\Sigma}_{(l,t]} 
\right) \\
& \mbox{}\qquad{}\qquad{}\qquad{}
-\frac{t-l}{n} 
\frac{\partial}{\partial t}
\left(
\Tr\left( \vec{\Sigma}_{(l,t]}^{-1} \cdot 
\frac{\partial}{\partial t} \vec{\Sigma}_{(l,t]} 
\right)
\right)\\
= &
-\frac{2}{n}\Tr\left( \vec{\Sigma}_{(l,t]}^{-1} \cdot 
\frac{\partial}{\partial t} \vec{\Sigma}_{(l,t]} 
\right)
-\frac{t-l}{n} 
\Tr\left(
-\vec{\Sigma}_{(l,t]}^{-1} \cdot \frac{\partial}{\partial t} \vec{\Sigma}_{(l,t]}  \cdot \vec{\Sigma}_{(l,t]}^{-1}
\cdot 
\frac{\partial}{\partial t} \vec{\Sigma}_{(l,t]}
\right)\\
& \mbox{}\qquad{}\qquad{}\qquad{}
-\frac{t-l}{n} 
\Tr\left( \vec{\Sigma}_{(l,t]}^{-1} \cdot 
\frac{\partial^2}{\partial t^2} \vec{\Sigma}_{(l,t]}
\right)\\
= &
\frac{t-l}{n} 
\Tr\left(
\vec{\Sigma}_{(l,t]}^{-1} \cdot \frac{\partial}{\partial t} \vec{\Sigma}_{(l,t]}  \cdot \vec{\Sigma}_{(l,t]}^{-1}
\cdot 
\frac{\partial}{\partial t} \vec{\Sigma}_{(l,t]}
\right)
\\
= & \frac{t-l}{n} 
\left\|{\vec{\Sigma}_{(l,t]}^{-1/2} \cdot \frac{\partial}{\partial t} \vec{\Sigma}_{(l,t]}  \cdot \vec{\Sigma}_{(l,t]}^{-1/2}}\right\|_{F}^2 \ge 0.
\end{align*}
By symmetry, we can obtain similarly for the right part of $G^{*}_{(l,r]}(t)$,
$$
\frac{\partial^2}{\partial t^2}\left(-\frac{r-t}{n}\log(|\vec{\Sigma}_{(t,r]}|)\right) = \frac{r-t}{n} 
\left\|{\vec{\Sigma}_{(t,r]}^{-1/2} \cdot \frac{\partial}{\partial t} \vec{\Sigma}_{(t,r]}  \cdot \vec{\Sigma}_{(t,r]}^{-1/2}}\right\|_{F}^2 \ge 0.
$$
As the left part of $G^{*}_{(l,r]}(t)$ is constant, we have 
$G^{*''}_{(l,r]}(t)\geq 0$ for $\cpt_i < t < \cpt_{i+1}$ and hence, $G^{*}_{(l,r]}$ is convex in between change points $\cpt_i$ and $\cpt_{i+1}$. Moreover, we see that with the exception of the special cases when $\vec{\Sigma}_{(l,\cpt_i]} = \vec{\Sigma}_{i+1} = \vec{\Sigma}_{(\cpt_{i+1},t]}$, $G^{*}_{(l,r]}$ is even strictly convex in the interval $(\cpt_{i}, \cpt_{i+1}]$. Note that for such special cases $G^{*}_{(l,r]}(t)=0$ for arbitrary $t \in (\cpt_i, \cpt_{i+1}]$, i.e.~the population gain function is flat in between change points $\cpt_i$ and $\cpt_{i+1}$. 
Note that special cases can only occur in the presence of two or more change points within the considered segment. In particular, in case $\cpt_i$ is the single change point contained in $(l,r]$, $G^{*}_{(l,r]}$ is strictly convex in $(l,\cpt_i]$ and strictly convex in $(\cpt_i,r]$. \hfill \qedsymbol

\section{Comments on computational gains for high-dimensional simulations}
\label{Appendix_computational_gains_highdim}

The achievable speedups using optimistic search in general are dependent on the cost of the model fit in each segment (how they depend on the number of observations $n$ and the dimensionality $p$), whether there are possibilities to update neighboring fits efficiently, but also on the length of the series, the number of change points, which basic algorithm (BS, SeedBS, WBS or yet another one) is used with which specific tuning parameters, etc. Nonetheless, we would like to further comment on some of the observed computational gains in the high-dimensional simulations presented in \cref{highdim_simulations} in the main text.

The biggest computational gains for optimistic search occur when the underlying search intervals are long. Random intervals have expected length $O(n)$ and thus many of them are comparably long. For these long intervals we gain a lot by optimistic searches. However, the lengths of the intervals in lower layers of seeded intervals are quite short (decaying exponentially) and what becomes dominant in that case is the number of very short intervals. For example, while there is only a single interval containing $2,000$ observations (first layer), there were more than sixty intervals on the lowest layer we considered with the minimally required segment length of $m=60$ observations. This explains why the speedup for OSeedBS versus SeedBS is a factor 10--14, while for BS and WBS we could achieve factor $30$ or more. Skipping the last few layers of seeded intervals would have saved considerable computational time for OSeedBS, which is important to keep in mind when interpreting the results from \cref{Fig:glasso_multiple_changes} in the main text. From a practical perspective, when utilizing OSeedBS, one should thus limit the number of covered layers in order to consider fewer of the very short intervals that are a driver of computational cost. However, the minimal segment length in seeded intervals cannot be too large either, as in that case one is risking not covering each single change points sufficiently (similar to what happened in the shown examples for WBS and OWBS with a small number of random intervals $M$). The choice for the minimal segment length for seeded intervals might come fairly naturally in some applications, where segments below a certain size are uninteresting or when considering high-dimensional problems requiring a minimal number of observations for fitting reasonable models. 

A pragmatic approach could be to combine the best of both worlds from OBS and OSeedBS. For example, find a first set of change points with fewer number of seeded intervals and then, to protect against the possibility that there could be even further change points that were not discovered due to having chosen a too large minimal segment length, in between the first found change points from the seeded intervals, one could perform a further OBS-like search that adapts better to the number of change points within these shorter search intervals. This way adaptively one could invest more computational effort if there is evidence for further change points beyond the ones found by the rough first set of seeded intervals, but without the need to go over each and every very short interval as would be the case with further layers of seeded intervals containing very short intervals. Thus, one could keep computational advantages from OBS and at the same time exploit the better expected statistical performance of OSeedBS.

{\rev
\allowdisplaybreaks

\section{Proofs of statistical guarantees}\label{App:proofs}

Here we provide proofs of statistical guarantees for optimistic searches in terms of consistency and localization rates. For ease of reading, we rewrite \cref{m:mhgauss} (i.e.\ a $p$-dimensional version of \cref{Gaussian_setup}) as
\begin{equation}\label{e:addform}
\vec{X}_t = \vec{f}_t + \vec{\xi}_t \qquad \text{for}\quad t = 1, \ldots, n,
\end{equation}
where $\vec{f}_t \equiv \vec{f}(t/n)$ with $\vec{f}:(0,1] \to \R^p$ defined as $\vec{f}(x) := \sum_{i = 0}^\ncp \vec{\mu}_i \ind_{(\cpt_i, \cpt_{i+1}]}(x)$, and $\vec{\xi}_t \iidsim \mathcal{N}(0,\vec{I}_p)$, i.e.~independently, standard $p$-dimensional Gaussian distributed. Let $\vec{X} := (\vec{X}_1,\ldots, \vec{X}_n)^{\top} \in \R^{n \times p}$ be the data matrix, $\vec{F} := (\vec{f}_1, \ldots, \vec{f}_n)^{\top} \in \R^{n\times p}$ the signal matrix and $\vec{\Xi}:= (\vec{\xi}_1, \ldots, \vec{\xi}_n)^{\top}\in \R^{n\times p}$ the noise matrix. Then, in an equivalent matrix form of \eqref{e:addform}, it holds that $\vec{X}  =  \vec{F} + \vec{\Xi}$. Similarly, we denote another independent sample as $\vec{Y}  =  \vec{F} + \tilde{\vec{\Xi}}$ (which is needed for sample splitting, see \cref{ss:mats}). 

Towards a matrix-vector formulation of CUSUM statistics, we introduce
$$
\vec{e}_t := \Biggl(\underbrace{\sqrt{\frac{n-t}{nt}},\cdots, \sqrt{\frac{n-t}{nt}}}_{t}, \underbrace{-\sqrt{\frac{t}{n(n-t)}},\cdots, -\sqrt{\frac{t}{n(n-t)}}}_{n-t}\Biggr)^\top \in \R^n,
$$
for $t \in\set{1,\ldots,n-1}$, and $\vec{e}_n :=(1/\sqrt{n}, \ldots, 1/\sqrt{n})^\top \in \R^n$.  Then
$$
\inner{\vec{e}_t}{\vec{e}_n} = 0,\qquad \norm{\vec{e}_t} = \norm{\vec{e}_n} = 1,\qquad t = 1, \ldots, n-1,
$$
where $\inner{\cdot}{\cdot}$ and $\norm{\cdot}$ are the inner product and the norm, respectively, in Euclidean spaces. Further notation is as follows. Let  $\norm{\cdot}_\infty$ denote the supremum norm of vectors.  For real numbers $a, b$, let $a \vee b := \max\set{a,b}$ and $a \wedge b :=\min\set{a,b}$. Let also $\Phi(\cdot)$ be the distribution function of a standard Gaussian random variable. 

In all the proofs, we try to give constants as explicitly as possible, but those constants may not be the best ones. The limiting behaviour is considered as the sample size $n\to \infty$, and the involved quantities, including the sparsity level $s$, the dimension $p$, the underlying signal $\vec{f}$, and thus the minimal segment length $\lambda$ and the minimal jump size $\delta$, are allowed to depend on $n$. 

\subsection{Technical tools}
We need several deviation bounds on chi-squares related quantities. 

\begin{lemma}[Tail of chi-squares]\label{lm:tail.chi2}
Let $Y := \sum_{i = 1}^k w_i (X_i + \mu_i)^2$, with $X_i \iidsim \gauss(0, 1)$, and constants $w_i \ge 0$ and $\mu_i \in \R$. Then
$$
\E{Y} = \sum_{i=1}^k w_i(1+ \mu_i^2)\quad \text{and}\quad \var{Y} = 2 \sum_{i = 1}^k w_i^2(1 + 2 \mu_i^2).
$$
Further, for every $x \ge 0$, it holds
\begin{align*}
\Prob{Y \le \E{Y} - x} &\le \exp\left(-\frac{x^2}{2\var{Y}}\right),\\
\text{and }\qquad \Prob{Y \ge \E{Y} + x} &\le \exp\left(-\frac{\var{Y}}{4\norm{\vec{w}}_\infty^2} \psi\Bigl(\frac{2\norm{x\vec{w}}_\infty}{\var{Y}}\Bigr)\right) \\
&\le \exp\left(-\frac{x^2}{2\var{Y} + 4x\norm{\vec{w}}_\infty}\right),
\end{align*}
where $\psi(x) = 1 + x -\sqrt{1+2x}$.
\end{lemma}

\begin{proof}
The expectation and variance are easy to compute. Note that 
$$
\psi(x) = \frac{x^2}{1+x+\sqrt{1+2x}} \ge \frac{x^2}{2+2x}. 
$$
Then the second assertion is a reformulation of \cite[Lemma 2]{LLM12} or the Hanson--Wright inequality.
\end{proof}

\begin{lemma}[Tail of Bernoulli weighted chi-squares]\label{lm:bwcs}
Let $X_i \iidsim \gauss(0,1)$, $B_i \stackrel{\text{ind.}}{\sim} \ber(\beta_i)$, with $0 \le \beta_i \le 1$, $i = 1, \ldots, k$, and $(X_i)_{i = 1}^k$ and $(B_i)_{i =1}^k$ be independent. Let also
$$
Y:=\sum_{i = 1}^k\left( w_i (X_i + \mu_i)^2 - w_i(1 + \mu_i^2) \right)B_i
$$
with constants $w_i \ge 0$, $\mu_i \in \R$. Then
$$
\E{Y} = 0 \quad \text{and}\quad \var{Y} = 2\sum_{i=1}^k\beta_i w_i^2(1+2\mu_i^2),
$$
and, for every $x \ge 0$, it holds
\begin{subequations}\label{e:gbBwC}
\begin{align}
\Prob{Y \ge x}  & \le \label{e:gbBwCa}\\
\exp&\left(-\min\left\{\frac{x}{8 \max_i \left(w_i({1+2\mu_i^2})^{1/2}\right)},\;\frac{ x^2}{6\var{Y}}\right\}\right), \nonumber\\
\Prob{Y \le -x}  & \le \label{e:gbBwCb}\\
\exp&\left(-\min\left\{\frac{x}{2 \max_i \left(w_i({1+2\mu_i^2})^{1/2}\right)},\;\frac{ x^2}{4\var{Y}}\right\}\right).\nonumber
\end{align}
\end{subequations}
Further, if $\min_{1\le i\le k}\beta_i \ge 1/2$, then for every $x \ge 0$,
\begin{subequations}\label{e:bBwC}
\begin{align}
\Prob{Y \ge x}  & \le \exp\left(-\min\left\{ \frac{x}{8\norm{\vec{w}}_{\infty}},\,\frac{x^2}{8\var{Y}}\right\}\right),\\
\Prob{Y \le -x}  & \le  \exp\left(-\frac{x^2}{4\var{Y}}\right).
\end{align}
\end{subequations}

\end{lemma}

\begin{proof}
We introduce the shorthand notation 
$$Z_i :=  w_i (X_i + \mu_i)^2 - w_i(1 + \mu_i^2).
$$
Then $Y = \sum_{i = 1}^k Z_i B_i$, and $\E{Y} = \sum_{i=1}^k \E{Z_i}\Prob{B_i = 1} = 0$ and
$$
\var{Y} = \sum_{i=1}^k \E{Z_i^2} \Prob{B_i = 1} =  2\sum_{i=1}^k\beta_iw_i^2(1+2\mu_i^2).
$$

Note that, for $-\infty < a<1/(2w_i)$,
$$
\E{\exp(a Z_i)} = \exp\left(\frac{2a^2w_i^2\mu_i^2}{1-2aw_i} - aw_i - \frac{1}{2}\log(1-2aw_i)\right),
$$
and also that 
\begin{equation}\label{e:logb}
-x - \frac{1}{2}\log(1-2x) \le 
\begin{cases}
\frac{x^2}{1-2x} & \text{ if } 0 \le x < 1/2,\\
x^2 & \text{ if } x \le 0.
\end{cases}
\end{equation}
We consider two separate cases:
\begin{itemize}
\item The case of general $\beta_i\in [0,1]$. For $0 \le x \le (1+2\mu^2)^{-1/2}/4$, it holds 
\begin{align*} 
\exp\left(\frac{2x^2\mu^2}{1-2x} - x - \frac{1}{2}\log(1-2x)\right) -1 & \le \exp\left(\frac{(1+2\mu^2)x^2}{1-2x} \right) -1\\ 
& \le \exp\left(2{(1+2\mu^2)x^2}\right) -1\\
& \le 3(1+2\mu^2)x^2.
\end{align*}
Thus, for $a$ such that  $0 \le  {4 a \max_i\left( w_i({1+2\mu_i^2})^{1/2}\right)} \le 1$, we obtain 
\begin{align*}
\log(\E{\exp(aY)})& = \sum_{i = 1}^k \log\bigl(\beta_i \E{\exp(a Z_i)} + 1-\beta_i\bigr) \\
& \le \sum_{i = 1}^k \log\bigl(1 + 3 \beta_i (1+2\mu_i^2)a^2 w_i^2\bigr) \\
& \le \sum_{i = 1}^k 3\beta_i (1+2\mu_i^2)a^2 w_i^2.
\end{align*}
By the Chernoff bound, we obtain
\begin{align*}
& \log \left(\Prob{Y \ge x }\right) \\
\le\; & \inf_{0 \le  {4 a \max_i\left( w_i({1+2\mu_i^2})^{1/2}\right)} \le 1} \left(-ax + \sum_{i=1}^k3\beta_i (1+2\mu_i^2)a^2 w_i^2\right)\\
\le\; & -\min\left\{\frac{x}{8 \max_i \left(w_i({1+2\mu_i^2})^{1/2}\right)},\;\frac{x^2}{12\sum_{i=1}^k\beta_i(1+2\mu_i^2)w_i^2}\right\}.
\end{align*}
Similarly, we have, for $-1\le a \max_i \left( w_i (1+2\mu_i^2)^{1/2} \right)\le 0$,
$$
\log(\E{\exp(aY)}) \le \sum_{i = 1}^k \log\bigl(1 + 2 \beta_i (1+2\mu_i^2)a^2 w_i^2\bigr)  \le \sum_{i = 1}^k 2\beta_i (1+2\mu_i^2)a^2 w_i^2,
$$
and
\begin{align*}
&\log\left(\Prob{Y \le -x }\right) \\
 \le\;& \inf_{-1\le a \max_i \left( w_i (1+2\mu_i^2)^{1/2} \right)\le 0} \left(ax + \sum_{i=1}^k2\beta_i (1+2\mu_i^2)a^2 w_i^2\right)\\
  \le\;& -\min\left\{\frac{x}{2 \max_i \left(w_i({1+2\mu_i^2})^{1/2}\right)},\;\frac{x^2}{8\sum_{i=1}^k\beta_i(1+2\mu_i^2)w_i^2}\right\}.
\end{align*}

\item The case of $\beta_i \ge 1/2$ for all $i\in \{1,\ldots,k\}$. For $0 \le x < 1/2 \le \beta \le 1$,
$$
1+ \beta\exp\left(\frac{2x^2\mu^2}{1-2x} - x - \frac{1}{2}\log(1-2x)\right) -\beta \le \exp\left(\frac{2\beta x^2(1+2\mu^2)}{1-2x}\right).
$$
This implies, for $0 \le 2a\norm{\vec{w}}_{\infty} <  1$,
$$
\log(\E{\exp(aY)}) = \sum_{i = 1}^k \log\bigl(\beta_i \E{\exp(a Z_i)} + 1-\beta_i\bigr) \le \sum_{i = 1}^k \frac{2\beta_i a^2w_i^2(1+2\mu_i^2)}{1-2aw_i}.
$$
Then, again by the Chernoff bound, we have 
\begin{align*}
\log\left(\Prob{Y \ge x }\right) & \le \inf_{0 \le   2a\norm{\vec{w}}_\infty <1} \left(-ax + \sum_{i=1}^k\frac{2\beta_i a^2w_i^2(1+2\mu_i^2)}{1-2aw_i}\right)\\
& \le \inf_{0 \le   4a\norm{\vec{w}}_\infty <1} \exp\left(-ax + \sum_{i=1}^k{4\beta_i a^2w_i^2(1+2\mu_i^2)}\right)\\
& \le -\min\left\{ \frac{x}{8\norm{\vec{w}}_{\infty}},\,\frac{x^2}{16\sum_{i=1}^k\beta_i(1+2\mu_i^2)w_i^2}\right\}.
\end{align*}
Similarly, for $a \le 0$, we have
\begin{align*}
\log(\E{\exp(aY)}) &= \sum_{i = 1}^k \log\bigl(\beta_i \E{\exp(a Z_i)} + 1-\beta_i\bigr) \\
& \le  \sum_{i = 1}^k \log\Bigl(\beta_i \exp\left(2\beta_i a^2 w_i^2(1+2\mu_i^2)\right) + 1-\beta_i\Bigr) \\
& \le \sum_{i = 1}^k 2\beta_i a^2 w_i^2(1+2\mu_i^2),
\end{align*}
and 
\begin{align*}
\log\left(\Prob{Y \le -x }\right) & \le \inf_{a \le 0} \left(ax + \sum_{i=1}^k2\beta_i (1+2\mu_i^2)a^2 w_i^2\right)\\
& = -\frac{x^2}{8\sum_{i=1}^k\beta_i (1+2\mu_i^2) w_i^2}.
\end{align*}
\end{itemize}
Therefore, the assertions follow, as $\var{Y} = {2\sum_{i=1}^k\beta_i (1+2\mu_i^2) w_i^2}$.
\end{proof}

\begin{remark}
In comparison with \cref{lm:tail.chi2}, there is an additional term $\max_i \left(w_i({1+2\mu_i^2})^{1/2}\right)$ in the bound of lower tail probability, see \eqref{e:gbBwC}, when there are Bernoulli weights. We stress that such a term is necessary, especially when $\beta_i \searrow 0$ for every $i\in\{1,\ldots,k\}$. However, in case of $\min_i \beta_i \ge 1/2$, $Y$ behaves the same as if there are no Bernoulli weights (i.e.\ $\beta_1 = \cdots = \beta_p = 1$), see \eqref{e:bBwC}. In particular, up to difference in constants, \cref{lm:bwcs} includes \cref{lm:tail.chi2} as a special case. 
\end{remark}

\begin{lemma}[Lower tail of Bernoulli weighted non-central chi-squares]\label{lm:bwcs2}
Let $X_i \iidsim \gauss(0,1)$, $B_i \stackrel{\text{ind.}}{\sim} \ber(\beta_i)$, with $0 \le \beta_i \le 1$, $i = 1, \ldots, k$, and $(X_i)_{i = 1}^k$ and $(B_i)_{i =1}^k$ be independent. Let also
$$
Y:=\sum_{i = 1}^k B_iw_i (X_i + \mu_i)^2 
$$
with constants $w_i \ge 0$, $\mu_i \in \R$. Assume that, for some constant $C > 0$, 
\begin{equation}\label{e:beta1ass}
1-\beta_i\le \exp\bigl(-Cw_i(1+\mu_i^2)\bigr),\quad i = 1,\ldots, k.
\end{equation}
Then, for every $x \ge 0$, it holds that
\begin{align*}
&\Prob{Y -\sum_{i=1}^kw_i(1+\mu_i^2) \le -x} \\
\le\;&\exp\left(-\min\left\{\frac{Cx}{4},\,\frac{x^2}{4\sum_{i=1}^k w_i^2(1+2\mu_i^2)}\right\}\right)\cdot \prod_{i=1}^k\left( 1+\exp\Bigl(-\frac{Cw_i (1+\mu_i^2)}{2}\Bigr)\right)\\
\le\;& 2^k\exp\left(-\min\left\{\frac{Cx}{4},\,\frac{x^2}{4\sum_{i=1}^k w_i^2(1+2\mu_i^2)}\right\}\right).
\end{align*}
\end{lemma}

\begin{proof}
Note first that, for $a \le 0$,
\begin{align*}
&\E{\exp\left(aY -a\sum_{i=1}^k w_i (1+\mu_i^2)\right)} \\
=\,& \prod_{i=1}^k \Bigl(\beta_i \E{\exp\bigl(a w_i(X_i+\mu_i)^2 - a w_i (1+\mu_i^2)\bigr)}+(1-\beta_i) \exp\bigl(- a w_i (1+\mu_i^2)\bigr)\Bigr)\\
=\, & \prod_{i=1}^k \left(\beta_i \exp\left(\frac{2a^2w_i^2\mu_i^2}{1-2aw_i} - aw_i - \frac{1}{2}\log(1-2aw_i)\right)+(1-\beta_i) \exp\bigl(- a w_i (1+\mu_i^2)\bigr)\right) \\
\le \, & \prod_{i=1}^k \Bigl( \exp\left(a^2w_i^2(1+2\mu_i^2)\right)+\exp\bigl(-(a +C)w_i (1+\mu_i^2)\bigr)\Bigr),
\end{align*}
where the last inequality is due to \eqref{e:logb} and the assumption in \eqref{e:beta1ass}. Then, we apply the Chernoff bound and obtain
\begin{align*}
&\Prob{Y -\sum_{i=1}^kw_i(1+\mu_i^2) \le -x} \\
\le\, & \inf_{a \le 0}\E{\exp\left(aY -a\sum_{i=1}^k w_i (1+\mu_i^2) +ax \right)}\\
\le\,& \inf_{a \le 0} e^{ax}\prod_{i=1}^k \Bigl( \exp\left(a^2w_i^2(1+2\mu_i^2)\right)+\exp\bigl(-(a +C)w_i (1+\mu_i^2)\bigr)\Bigr)\\
\le \, & \inf_{-C/2 \le a \le 0} e^{ax}\prod_{i=1}^k \left( \exp\left(a^2w_i^2(1+2\mu_i^2)\right)+\exp\Bigl(-\frac{Cw_i (1+\mu_i^2)}{2}\Bigr)\right)\\
\le \, & \inf_{-C/2 \le a \le 0} e^{ax}\prod_{i=1}^k\left( 1+\exp\Bigl(-\frac{Cw_i (1+\mu_i^2)}{2}\Bigr)\right) \exp\left(a^2w_i^2(1+2\mu_i^2)\right)\\
\le\, & \inf_{-C/2 \le a \le 0} \exp\left(ax +a^2\sum_{i=1}^kw_i^2(1+2\mu_i^2)\right)\cdot\prod_{i=1}^k\left( 1+\exp\Bigl(-\frac{Cw_i (1+\mu_i^2)}{2}\Bigr)\right) \\
\le \, &  \exp\left(-\min\left\{\frac{Cx}{4},\frac{x^2}{4\sum_{i=1}^kw_i^2(1+2\mu_i^2)}\right\}\right)\cdot \prod_{i=1}^k\left( 1+\exp\Bigl(-\frac{Cw_i (1+\mu_i^2)}{2}\Bigr)\right),
\end{align*}
which concludes the proof, since $\exp\bigl(-{Cw_i (1+\mu_i^2)}/{2}\bigr)\le 1$.
\end{proof}

\begin{remark}[Upper tail]\label{r:utb}
We stress that the bound of the upper tail probability of $Y$ follows readily from \cref{lm:tail.chi2}, since
$$
\Prob{Y -\sum_{i=1}^kw_i(1+\mu_i^2)\ge x}  \le \Prob{\sum_{i = 1}^k w_i (X_i + \mu_i)^2- \sum_{i=1}^k w_i(1+\mu_i^2)\ge x}.
$$
It is a bit unusual that the concentration inequalities here are centered at $\sum_{i=1}^k w_i(1+\mu_i^2)$ rather than $\E{Y} =\sum_{i=1}^k\beta_i w_i(1+\mu_i^2)$, but this makes little difference as $\beta_i$'s are fairly close to one, which is assumed in \eqref{e:beta1ass}. The current version is chosen in order to ease our later proofs. 
\end{remark}

We consider next some concentration inequalities on the difference of (Bernoulli weighted) chi-squares. The crucial part is to decouple the possible correlation between the involved chi-squares. 

\begin{lemma}[Tail of difference of chi-squares]\label{lm:diff}
Let $\vec{X}_j \sim \gauss(\vec{\mu}_j, \vec{I}_n)$ be independent, with $\vec{\mu}_j\in \R^n$, $j =1,\ldots, k$, and $t,r\in \set{1, \ldots, n}$ be arbitrary. Define the \emph{relative difference} of $t,r$ within the background $(0,n]$ as
\begin{equation}\label{e:reld}
d(t,r) := \frac{\abs{t- r}}{\min\set{r\vee t,\; n - (r \wedge t)}}, 
\end{equation}
which is always in $[0,\;1]$. Then, for every $x \ge 0$, it holds that 
\begin{multline*}
\Prob{\sum_{j=1}^k\inner{\vec{e}_t}{\vec{X}_j}^2-\sum_{j=1}^k\inner{\vec{e}_r}{\vec{X}_j}^2 \le \sum_{j=1}^k\inner{\vec{e}_t}{\vec{\mu}_j}^2-\sum_{j=1}^k\inner{\vec{e}_r}{\vec{\mu}_j}^2  - x}\\ 
\le 2\exp\left(-\min\set{\frac{x^2}{32 \bigl(2d(t,r)\wedge 1\bigr)  \left(k + 2 \sum_{j=1}^k\snorm{\vec{\mu}_j}^2\right)},\;\frac{x}{16\sqrt{2d(t,r)\wedge 1}}}\right).
\end{multline*}
\end{lemma}

\begin{proof}
We first compute the eigendecomposition
$$
\vec{e}_t\vec{e}_t^\top - \vec{e}_r\vec{e}_r^\top = 
\begin{pmatrix}
\vec{u}_+ & \vec{u}_-
\end{pmatrix}
\begin{pmatrix}
\theta & 0 \\
0 & - \theta
\end{pmatrix}
\begin{pmatrix}
\vec{u}_+ & \vec{u}_-
\end{pmatrix}^\top  = \theta \vec{u}_+\vec{u}_+^\top - \theta \vec{u}_-\vec{u}_-^\top,
$$
with $\theta = \sqrt{1- \inner{\vec{e}_t}{\vec{e}_r}^2} \ge 0$, $\norm{\vec{u}_+} = \norm{\vec{u}_-} = 1$ and $\inner{\vec{u}_+}{\vec{u}_-} = 0$. More precisely, 
\begin{align*}
&\vec{u}_+ = \frac{1}{2\sqrt{1 + \inner{\vec{e}_t}{\vec{e}_r}}}(\vec{e}_t + \vec{e}_r) + \frac{1}{2\sqrt{1 - \inner{\vec{e}_t}{\vec{e}_r}}}(\vec{e}_t - \vec{e}_r), \\
\text{and}\quad &\vec{u}_- = \frac{1}{2\sqrt{1 + \inner{\vec{e}_t}{\vec{e}_r}}}(\vec{e}_t + \vec{e}_r) - \frac{1}{2\sqrt{1 - \inner{\vec{e}_t}{\vec{e}_r}}}(\vec{e}_t - \vec{e}_r).
\end{align*}
Elementary calculation shows that 
$$
d(t,r) \le \theta^2 = \frac{n \abs{t - r}}{(r\vee t) \bigl(n - (r \wedge t)\bigr)} \le \min\set{2 d(t,r),\;1}.
$$
Define $\xi_+ := \sum_{j =1}^k \inner{\vec{u}_+}{\vec{X}_j}^2$ and $\xi_- := \sum_{j =1}^k \inner{\vec{u}_-}{\vec{X}_j}^2$. Then $\xi_+$ and $\xi_-$ are independent, because $\vec{u}_+^\top \bigl(\vec{X}_1,\ldots,\vec{X}_k\bigr)$ and $\vec{u}_-^\top \bigl(\vec{X}_1,\ldots,\vec{X}_k\bigr)$ are jointly Gaussian and uncorrelated, thus independent.

Note that 
\begin{align*}
& \left(\sum_{j=1}^k\inner{\vec{e}_t}{\vec{X}_j}^2-\sum_{j=1}^k\inner{\vec{e}_r}{\vec{X}_j}^2 \right)-\left( \sum_{j=1}^k\inner{\vec{e}_t}{\vec{\mu}_j}^2-\sum_{j=1}^k\inner{\vec{e}_r}{\vec{\mu}_j}^2 \right)\\
= \; & \sum_{j = 1}^k \vec{X}_j^\top (\vec{e}_t\vec{e}_t^\top - \vec{e}_r\vec{e}_r^\top) \vec{X}_j - \sum_{j = 1}^k \vec{\mu}_j^\top (\vec{e}_t\vec{e}_t^\top - \vec{e}_r\vec{e}_r^\top) \vec{\mu}_j \\
= \; & \theta (\xi_+ -  \xi_-) - \theta (\E{\xi_+} - \E{\xi_-}).
\end{align*}
Thus, we have
\begin{align}
& \Prob{ \theta (\xi_+ -  \xi_-) \le \theta (\E{\xi_+} - \E{\xi_-}) -x}\nonumber\\
\le\; & \Prob{ \xi_+  \le \E{\xi_+} - \frac{x}{2\theta}} + \Prob{ \xi_-  \ge \E{\xi_-} + \frac{x}{2\theta}}.\label{e:bnd}
\end{align}
By \cref{lm:tail.chi2}, we obtain, for the first term in \eqref{e:bnd}, 
\begin{align*}
\Prob{ \xi_+  \le \E{\xi_+} - \frac{x}{2\theta}} \; & \le \exp\left(-\frac{x^2}{8\theta^2\var{\xi_+}}\right)  \\
& = \exp\left(- \frac{x^2}{16 \theta^2\left(k + 2 \sum_{j=1}^k\inner{\vec{u}_+}{\vec{\mu}_j}^2\right)}\right) \\
& \le \exp\left(- \frac{x^2}{16 \theta^2\left(k + 2 \sum_{j=1}^k\snorm{\vec{\mu}_j}^2\right)}\right)\\
& \le  \exp\left(- \frac{x^2}{16\bigl(2d(t,r)\wedge 1\bigr) \left(k + 2 \sum_{j=1}^k\snorm{\vec{\mu}_j}^2\right)}\right),
\end{align*}
and, similarly, for the second term in \eqref{e:bnd}, 
\begin{align*}
& \Prob{ \xi_-  \ge \E{\xi_-} + \frac{x}{2\theta}} \\
\le\; &  \exp\left(-\frac{x^2}{8 \theta^2 \var{\xi_-} + 8 \theta x}\right)\\
\le\; &  \exp\left(-\frac{x^2}{16 \theta^2  \left(k + 2 \sum_{j=1}^k\snorm{\vec{\mu}_j}^2\right) + 8 \theta x}\right)\\
\le\; & \exp\left(-\min\set{\frac{x^2}{32 \bigl(2d(t,r)\wedge 1\bigr)  \left(k + 2 \sum_{j=1}^k\snorm{\vec{\mu}_j}^2\right)},\;\frac{x}{16\sqrt{2d(t,r)\wedge 1}}}\right).
\end{align*}

These two upper bounds, together with \eqref{e:bnd}, conclude the proof.
\end{proof}

\begin{remark}[Centering]\label{r:mean0}
Note that $\vec{u}_+, \vec{u}_- \in \mathrm{span}\set{\vec{e}_t,\vec{e}_r} \subseteq \bigl(\mathrm{span}\set{\vec{e}_n}\bigr)^\perp$. Then, it holds
$$
\inner{\vec{u}_+}{\vec{\mu}_j}^2 = \inner{\vec{u}_+}{(\vec{I}_n - \vec{e}_n\vec{e}_n^\top)\vec{\mu}_j}^2 \le \norm{(\vec{I}_n - \vec{e}_n\vec{e}_n^\top)\vec{\mu}_j}^2
$$
and a similar bound holds for $\inner{\vec{u}_-}{\vec{\mu}_j}^2$. Thus, \cref{lm:diff} can be slightly improved if every $\snorm{\vec{\mu}_j}^2$ is replaced by $\snorm{(\vec{I}_n - \vec{e}_n\vec{e}_n^\top)\vec{\mu}_j}^2$. 
\end{remark}
\begin{remark}[Simple bound]\label{r:ebnd}
A simple bound on the tail of difference of chi-squares can be derived directly from \cref{lm:tail.chi2} as follows: 
\begin{align}
&\mathbb{P}\Biggl\{\sum_{j=1}^k\inner{\vec{e}_t}{\vec{X}_j}^2-\sum_{j=1}^k\inner{\vec{e}_r}{\vec{X}_j}^2 \le \nonumber\\
&\mbox{}\qquad\qquad\qquad\qquad\qquad\sum_{j=1}^k\inner{\vec{e}_t}{\vec{\mu}_j}^2-\sum_{j=1}^k\inner{\vec{e}_r}{\vec{\mu}_j}^2  - x\Biggr\} \nonumber\\ 
\le\;& \Prob{\sum_{j=1}^k\inner{\vec{e}_t}{\vec{X}_j}^2-\sum_{j=1}^k\left(\inner{\vec{e}_t}{\vec{\mu}_j}^2+1\right) \le  - \frac{x}{2}} \nonumber\\
&\mbox{}\qquad\qquad\quad + \Prob{\sum_{j=1}^k\inner{\vec{e}_r}{\vec{X}_j}^2 -\sum_{j=1}^k\left(\inner{\vec{e}_r}{\vec{\mu}_j}^2+1\right)  \ge\frac{x}{2}} \nonumber\\
\le \; & 2 \exp\left(-\min\set{\frac{x^2}{32 \left(k +2\sum_{j=1}^k\bigl(\inner{\vec{e}_r}{\vec{\mu}_j}^2\vee\inner{\vec{e}_t}{\vec{\mu}_j}^2\bigr)\right)},\;\frac{x}{16}}\right) \label{e:smpbnd} \\
\le\;&  2 \exp\left(-\min\set{\frac{x^2}{32 \left(k +2\sum_{j=1}^k \norm{\vec{\mu}_j}^2\right)},\;\frac{x}{16}}\right).\label{e:smpbnd2}
\end{align}
This simple bound above is no better than that in \cref{lm:diff}, and the one in \cref{lm:diff} is strictly sharper if $d(t,r) < 1/2$, and particularly, if $d(t,r) = o(1)$.
\end{remark}

In exactly the same way as \cref{lm:diff}, we can derive the tail bound on the differences of Bernoulli weighted chi-squares. 

\begin{lemma}[Tail of difference of Bernoulli weighted chi-squares]\label{lm:difb}
Assume the same setup as in \cref{lm:diff}, and let $B_i \stackrel{\text{ind.}}{\sim} \ber(\beta_i)$, with $0 \le \beta_i \le 1$, $i = 1, \ldots, k$, be independent from $X_1, \ldots, X_k$. Then, for every $x \ge 0$:
\begin{enumerate}[i.]
\item\label{i:difb1}
It holds that 
\begin{multline*}
\Prob{\sum_{j=1}^kB_j\left(\inner{\vec{e}_t}{\vec{X}_j}^2-\inner{\vec{e}_t}{\vec{\mu}_j}^2\right)-\sum_{j=1}^kB_j\left(\inner{\vec{e}_r}{\vec{X}_j}^2-\inner{\vec{e}_r}{\vec{\mu}_j}^2\right) \le - x}\\ 
\le2\exp\Biggl(-\min\biggl\{\frac{x}{16\sqrt{(2d(t,r)\wedge 1)\max_i \bigl({1+2\norm{\vec{\mu}_i}^2}\bigr)} },\\\frac{ x^2}{48\bigl(2d(t,r)\wedge 1\bigr) \sum_{j=1}^k\beta_j\left(1 +2 \norm{\vec{\mu}_j}^2\right)}\biggr\}\Biggr).
\end{multline*}
\item\label{i:difb2}
If $\beta_i\ge 1/2$ for every $i = 1,\ldots, k$, then
\begin{multline*}
\Prob{\sum_{j=1}^kB_j\left(\inner{\vec{e}_t}{\vec{X}_j}^2-\inner{\vec{e}_t}{\vec{\mu}_j}^2\right)-\sum_{j=1}^kB_j\left(\inner{\vec{e}_r}{\vec{X}_j}^2-\inner{\vec{e}_r}{\vec{\mu}_j}^2\right) \le - x}\\ 
\le2\exp\Biggl(-\min\biggl\{\frac{x}{16\sqrt{2d(t,r)\wedge 1}},\frac{ x^2}{64\bigl(2d(t,r)\wedge 1\bigr)\left(k +2\sum_{j=1}^k \norm{\vec{\mu}_j}^2\right)}\biggr\}\Biggr).
\end{multline*}
\item\label{i:difb3}
If the condition \eqref{e:beta1ass} holds, then
\begin{multline*}
\Prob{\sum_{j=1}^kB_j\inner{\vec{e}_t}{\vec{X}_j}^2-\sum_{j=1}^kB_j\inner{\vec{e}_r}{\vec{X}_j}^2 \le \sum_{j=1}^k\inner{\vec{e}_t}{\vec{\mu}_j}^2-\sum_{j=1}^k\inner{\vec{e}_r}{\vec{\mu}_j}^2  - x}\\ 
\le 2\exp\left(-\min\set{\frac{x^2}{32 \bigl(2d(t,r)\wedge 1\bigr)  \left(k + 2 \sum_{j=1}^k\snorm{\vec{\mu}_j}^2\right)},\;\frac{(C\wedge 1/2) x }{8\sqrt{2d(t,r)\wedge 1}}}\right)\\
\times  \prod_{i=1}^k\left( 1+\exp\Bigl(-\frac{Cw_i (1+\mu_i^2)}{2}\Bigr)\right).
\end{multline*}
\end{enumerate}
\end{lemma}

The proof of \cref{lm:difb} is omitted for brevity, since the only difference to that of \cref{lm:diff} is to employ \cref{lm:bwcs,lm:bwcs2} instead of \cref{lm:tail.chi2}. \cref{r:mean0,r:ebnd} apply here as well. 

\subsection{Single change point}\label{ss:scp}
We consider here the particular case of $\ncp = 1$, i.e.\ a single change point, in Models \ref{Gaussian_setup} and \ref{m:mhgauss}, and provide proofs for \cref{th:noss,th:aoss,th:mh1cp}. Since the involved CUSUM statistic is invariant to constant shifts, we assume, without loss of generality, in \eqref{e:addform}, 
$$
\vec{f}(x) = \vec{\delta}\ind_{(0, \lambda]}(x), \qquad x \in (0,1],
$$
where $\vec{\delta} = (\delta_1,\ldots, \delta_p)^\top \in \R^p$, the change point $\cpt = \lambda \in (0, 1/2]$, and the jump size $\delta = \norm{\vec{\delta}}$, cf.~\cref{m:mhgauss}.

The following proofs rely on the observation that the localization error is no larger than the minimal length of search intervals that contain the only true change point. Moreover, assume that the search interval in a step still contains the change point, i.e.~no mistake has been made yet. Then excluding the segment containing the true change point, and thus making a mistake, can only happen when both probe points lie to the left of the change point or when both probe points lie to the right of the change point. In such cases, in order to avoid wrongly excluding the segment containing the true change point, we have to ensure that the difference of population gain function at two investigated probe points is larger than the random oscillation caused by the noise with high probability.

For ease of notation, we assume the default step size $\nu = 1/2$ in optimistic searches, which implies that the three parts within each search window have relative lengths $1:1:1$, $1:1:2$ or $2:1:1$. We further assume that there is no rounding in determining the dyadic search locations and the probe points in all search intervals.

\subsubsection{Naive optimistic search}

\begin{proof}[Proof of \cref{th:noss}]
We will prove that the assertion of theorem holds with constants
\begin{equation*}
C_0 = 20\quad\text{and}\quad C_1 = 3^2 2^{17}.
\end{equation*}
The proof consists of two steps. 

Firstly, we show that, with probability tending to one, the naive optimistic searches makes no mistake whenever a search window is of length no shorter than  $O(n\lambda)$. To this end, we introduce $I_1,I_2,\ldots$, such that $I_1 \supset I_2 \supset \cdots$, as the search windows of the naive optimistic search in the population case, i.e.~$I_1 = (0,\,n]$, $I_2 = (0,\,2n/3]$ and so on. 
It is easy to see that the change point $n\lambda$ lies in every $I_k$, thus no mistakes. Let 
$$
k_* := \min\bigl\{k \,:\, (t \wedge w) \le n\lambda  \text{ with $t,w$ the probe points of $I_k$}\bigr\},
$$
i.e., when one probe point drops in $[0, n\lambda]$ for the first time. Note that the left end point of $I_k$, $k \le k_*$, is always 0, and $\abs{I_k} \ge \abs{I_{k_*}} \asymp n\lambda$. Then it is sufficient to show that the first $k_*$ steps of the naive optimistic search coincide with the population case, with probability tending to one. 

We thus define 
$$
\PP_1 := \set{(t,w)\;:\; \text{$t,w$ are probe points of $I_k$ with $t\le w$ for $k < k_*$}}.
$$
Note that $\PP_1$ is deterministic, and only depends on the signal $f$. Recall that $\vec{F} = \bigl(f(1/n), \ldots, f(n/n)\bigr)^\top \in \R^n$. Fix an arbitrary pair of probe points $(t,w) \in \PP_1$. Then it holds that 
$$
\inner{\vec{e}_{w}}{\vec{F}}^2 \le \inner{\vec{e}_{t}}{\vec{F}}^2 = n\lambda^2\delta^2 \frac{n-t}{t}\le n^2\lambda^2\delta^2 \frac{1}{t},
$$
and
$$
\inner{\vec{e}_{t}}{\vec{F}}^2 - \inner{\vec{e}_{w}}{\vec{F}}^2 = n^2\lambda^2\delta^2\frac{\abs{t-w}}{tw} \ge n^2\lambda^2\delta^2\frac{1}{3t}.
$$
We apply inequality \eqref{e:smpbnd2} in \cref{r:ebnd} and obtain
\begin{align*}
& \Prob{\abs{\csm_{(0,n]}(t)} \le \abs{\csm_{(0,n]}(w)}} = \Prob{\inner{\vec{e}_t}{\vec{X}}^2 \le \inner{\vec{e}_w}{\vec{X}}^2} \\
\le \; & \Prob{\inner{\vec{e}_t}{\vec{X}}^2 - \inner{\vec{e}_w}{\vec{X}}^2 \le \inner{\vec{e}_t}{\vec{F}}^2 - \inner{\vec{e}_w}{\vec{F}}^2 - n^2\lambda^2\delta^2\frac{1}{3t}}\\
\le \; & 2\exp\left(-\min\set{\frac{n^4 \lambda^4 \delta^4}{288(t^2 + 2t n^2\lambda^2\delta^2)},\;\frac{n^2 \lambda^2\delta^2}{48 t}}\right)\\
\le \; & 2\exp\left(-\min\set{\frac{n^2 \lambda^4 \delta^4}{32(1 + 6 n\lambda^2\delta^2)},\;\frac{n \lambda^2\delta^2}{16}}\right),
\end{align*}
where the last inequality is due to $t \le n/3$. Since $\abs{\PP_1} \le \log(2\lambda)/\log(3/4) \le 4 \log n$, the bound above in combination with the union bound implies
\begin{align*}
& \Prob{\abs{\csm_{(0,n]}(t)} \le \abs{\csm_{(0,n]}(w)} \text{ for some }(t,w)\in \PP_1} \\
\le \; & \sum_{(t,w)\in \PP_1}\Prob{\abs{\csm_{(0,n]}(t)} \le \abs{\csm_{(0,n]}(w)}}\\
\le \; & 8 \exp\left(-\min\set{\frac{n^2 \lambda^4 \delta^4}{32(1 + 6 n\lambda^2\delta^2)},\;\frac{n \lambda^2\delta^2}{16}}+\log\log n\right).
\end{align*}
Thus, if 
$$
n\lambda^2\delta^2\ge 400\,\log\log n = C_0^2 \log\log n,
$$
the probability that the first $k_*$ steps of naive optimistic search differ from the population version satisfies
$$
\Prob{\abs{\csm_{(0,n]}(t)} \le \abs{\csm_{(0,n]}(w)} \text{ for some }(t,w)\in \PP_1} \le \frac{8}{\log n} \to 0. 
$$

Secondly, we consider the search windows in the naive optimistic search that are shorter than $O(n\lambda)$, namely, later steps $k \ge k_*$.   Recall that the true change point $n \lambda$ can be wrongly excluded from consecutive search intervals only when both pairs of probe points $t,w$ lie on the same side of $n\lambda$.  It is thus sufficient to consider 
\begin{multline*}
\PP_2:=\Bigl\{(t,w) \;:\; \text{$t,w$ are pairs of probe points from steps $k\ge k_*$ of}\\
 \text{the naive optimistic search, such that $t,w$ lie on the same side of}\\
 \text{the change point, and $t$ is closer to the change point}\Bigr\}.
\end{multline*}
We fix arbitrarily a pair of probe points $(t,w) \in \PP_2$,  and assume that the first $k_*$ steps of the naive optimistic search coincide with the population case, which happens with probability towards one as shown earlier.  It follows that $t,w \le 4n\lambda$ and further that 
$$
\inner{\vec{e}_{t}}{\vec{F}}^2 - \inner{\vec{e}_{w}}{\vec{F}}^2 =
\begin{cases}
 n^2\lambda^2\delta^2\frac{\abs{t-w}}{tw} \ge \frac{1}{16}\abs{t-w}\delta^2&\text{ if }t,w \ge n\lambda,\\
 n^2\delta^2\frac{\abs{t-w}(1-\lambda)^2}{(n-t)(n-w)} \ge  \frac{1}{4}\abs{t-w}\delta^2&\text{ if }t,w \le n\lambda.
\end{cases}
$$
The relative distance $d(\cdot, \cdot)$ in \eqref{e:reld} satisfies
$$
d(t,w) = \frac{\abs{t-w}}{\min\set{t\wedge w, (n-t)\wedge (n-w)}}\le \frac{4\abs{t-w}}{n\lambda}.
$$
Then, by \cref{lm:diff} we have, for $(t,w)\in \PP_2$,
\begin{align*}
&\Prob{\abs{\csm_{(0,n]}(t)} \le \abs{\csm_{(0,n]}(w)}}= \Prob{\inner{\vec{e}_t}{\vec{X}}^2 \le \inner{\vec{e}_w}{\vec{X}}^2} \\
\le \; & \Prob{\inner{\vec{e}_t}{\vec{X}}^2 - \inner{\vec{e}_w}{\vec{X}}^2 \le \inner{\vec{e}_t}{\vec{F}}^2 - \inner{\vec{e}_w}{\vec{F}}^2 - \frac{1}{16} \abs{t-w}\delta^2}\\
\le\;& 2\exp\left(-\min\set{\frac{x^2}{64 d(t,r)  \left(1 + 2 n\lambda\delta^2\right)},\;\frac{x}{16\sqrt{2d(t,r)}}}\right)\\
\le\;& 2\exp\left(-\min\set{\frac{n\lambda\abs{t-w}\delta^4}{ 2^{16} \left(1 + 2 n\lambda\delta^2\right)},\;\frac{\sqrt{n\lambda\abs{t-w}}\delta^2}{2^{19/2}}}\right).
\end{align*}
Note that $\PP_2$ is contained in a mother set $\PP_2^*$ of size $\le (4 \log n)^2$, and that $\PP_2^*$ is determined only by the signal $f$, see later Part~2 in the proof of \cref{th:mh1cp} for a formal proof. Let 
$$
\eps_* := \frac{C_1}{2} \frac{\log\log n}{\delta^2} = 3^2 2^{16}\frac{\log\log n}{\delta^2}.
$$
Then, by the union bound again, we obtain
\begin{align*}
& \Prob{\abs{\csm_{(0,n]}(t)} \le \abs{\csm_{(0,n]}(w)} \text{ for some }(t,w)\in \PP_2 \text{ with }\abs{t-w}\ge\eps_*} \\
\le\;& \Prob{\abs{\csm_{(0,n]}(t)} \le \abs{\csm_{(0,n]}(w)} \text{ for some }(t,w)\in \PP_2^* \text{ with }\abs{t-w}\ge\eps_*} \\
\le \; & \sum_{(t,w)\in \PP_2^*,\,\abs{t-w}\ge\eps_*}\Prob{\abs{\csm_{(0,n]}(t)} \le \abs{\csm_{(0,n]}(w)}}\\
\le\; & 32\exp\left(-\min\set{\frac{n\lambda\eps_*\delta^4}{ 2^{16} \left(1 + 2 n\lambda\delta^2\right)},\;\frac{\sqrt{n\lambda\eps}\delta^2}{2^{19/2}}}+2\log\log n\right) \\
\le\;&\frac{32}{\log n} \to 0.
\end{align*}
This implies an upper bound of $2\eps_*$ on the localization error of the change point, which concludes the proof.  
\end{proof}

\subsubsection{Advanced and combined optimistic searches}

Since \cref{th:aoss} is a special of \cref{th:mh1cp} when $p =1$, we only need to prove \cref{th:mh1cp}.

\begin{proof}[Proof of \cref{th:mh1cp}]
We divide the $p$ coordinates of observations in \cref{m:mhgauss} into three groups:
\begin{enumerate}[i.]
\item The set of coordinates with large jump sizes
$$
L := \set{j\; :\;  \abs{\delta_j}^2 \ge \frac{1}{32}\frac{\norm{\vec{\delta}}^2}{s}};
$$
\item The set of coordinates with small jump sizes
$$
S := \set{j\; :\; 0< \abs{\delta_j}^2 <  \frac{1}{32}\frac{\norm{\vec{\delta}}^2}{s}};
$$
\item The set of coordinates with no jumps 
$$
N := \set{j\;:\; \delta_j = 0}.
$$
\end{enumerate}
The constant $1/32$ above can be replaced by any constant that is sufficiently small. 
Clearly, $L,S, N$ are disjoint, $\abs{L} + \abs{S} \le s$ and $L \cup S \cup N = \set{1,\ldots, p}$.  It further holds that $L \neq \emptyset$, since 
$$
\sum_{j \in S}\abs{\delta_j}^2 \le \frac{1}{32} \norm{\vec{\delta}}^2 <  \norm{\vec{\delta}}^2 =\sum_{j \in S\cup L}\abs{\delta_j}^2.
$$
The remaining proof is split into two parts. 

\emph{Part 1. Global search over dyadic locations.} In this part, we will show
\begin{equation}\label{e:dyadic}
\Prob{n\lambda \in (t_*/2,\;2t_*] } \to 1,
\end{equation}
with $t_*$ the output of the dyadic search (i.e.\ line 4 in \cref{Alg:aOS}), provided that the constant $C_0$ in \eqref{e:db1cp:a} is sufficiently large. One possible choice is 
\begin{equation}\label{e:chC0}
C_0 = 2^{18}\cdot3^2.
\end{equation}

Since $0 < \lambda \le 1/2$, there exists an integer $k_0 \ge 1$ such that $\lambda \in ( 2^{-k_0  -1}, 2^{-k_0}]$. Then \eqref{e:dyadic} is equivalent to 
$$
\Prob{t_* = 2^{-k_0-1}n \;\text{ or }\; 2^{-k_0}n} \to 1.
$$
Thus, we only need to show that 
\begin{subequations}
\begin{equation}
\Prob{\cmp(t_{k_0+1}, t_{k}) \le 0 \quad \text{for some } \; k = k_0+2, \ldots, \log_2 n} \to 0, \label{e:ldgs}
\end{equation}
and 
\begin{equation}
\Prob{\cmp(t_{k_0}, t_{k}) \le 0 \quad \text{for some } \; k = 1, \ldots, k_0-1} \to 0,\label{e:rdgs}
\end{equation}
where  $\cmp(\cdot,\cdot)\equiv \cmp_{(0,n]}(\cdot,\cdot)$ is the comparison function defined in \eqref{e:mhcmp}, and $t_k := 2^{-k} n$ for $k \in \set{1, \ldots, \log_2 n}$ the dyadic locations.
\end{subequations}

We first prove \eqref{e:ldgs}. Note that for $k \in \set{k_0 + 2, \ldots, \log_2 n}$, i.e.\ on the left side of the change point, and for $j \in \{1,\ldots, p\}$,
\begin{align*}
\csm_{j}(t_{k_0+1}; \vec{F})^2 -  \csm_{j}(t_{k}; \vec{F})^2 & \ge \csm_{j}(t_{k_0+1}; \vec{F})^2 -  \csm_{j}(t_{k_0+2}; \vec{F})^2 \\
&  = n ^2(1-\lambda)^2{\delta}_j^2 \frac{t_{k_0+1} - t_{k_0+2}}{(n-t_{k_0+1})(n-t_{k_0+2})}\\
& \ge  (1-\lambda)^2{\delta}_j^2 \frac{n\lambda}{4(1-\lambda/2)(1-\lambda/4)}\\
& \ge \frac{2}{21}n \lambda {\delta}_j^2,
\end{align*}
where $\csm_{j}(\cdot; \vec{F}) \equiv \csm_{(0,n],j}(\cdot; \vec{F}) $ is the CUSUM statistics in the $j$-th coordinate applied to the signal matrix $\vec{F}$. Thus, it holds that
\begin{align*}
& \sum_{j\in L} \Bigl(\csm_{j}(t_{k_0+1}; \vec{F})^2 -  \csm_{j}(t_{k}; \vec{F})^2\Bigr)\\
 =\;& \sum_{j\in L \cup S} \Bigl(\csm_{j}(t_{k_0+1}; \vec{F})^2 -  \csm_{j}(t_{k}; \vec{F})^2\Bigr) \\
 &\qquad\qquad\qquad- \sum_{j\in S} \Bigl(\csm_{j}(t_{k_0+1}; \vec{F})^2 -  \csm_{j}(t_{k}; \vec{F})^2\Bigr) \\
\ge\; &  \frac{2}{21}n\lambda\sum_{j \in L \cup S} \delta_j^2 -  \sum_{j\in S} \csm_{j}(t_{k_0+1}; \vec{F})^2 \\
\ge\; &  \frac{2}{21}n\lambda \norm{\vec{\delta}}^2 - \abs{S} \frac{1}{32}\frac{n\lambda\norm{\vec{\delta}^2}}{s} \;\ge\;  \frac{1}{16}n\lambda \norm{\vec{\delta}}^2.
\end{align*}
Fix an arbitrary $k \in \set{k_0 + 2, \ldots, \log_2 n}$, and let 
$$
B_j := \ind\Bigl\{\max\bigl(\abs{\csm_{j}(t_{k_0+1}; \vec{Y})}, \abs{\csm_{j}(t_k; \vec{Y})} \bigr)\ge \thd\Bigr\},$$ 
for $j =1, \ldots, p$. Introduce 
\begin{equation}\label{e:chx}
x := \frac{1}{16}n \lambda \norm{\vec{\delta}}^2,
\end{equation}
and we have 
\begin{subequations}
\begin{align}
& \Prob{\cmp(t_{k_0+1}, t_k) \le 0}\nonumber \\ 
\le\; & \mathbb{P}\biggl\{\sum_{j \in L \cup S\cup N }B_j\Bigl(\csm_{j}(t_{k_0+1}; \vec{X})^2 -  \csm_{j}(t_{k}; \vec{X})^2\Bigr) \nonumber\\
&\qquad\qquad\qquad\qquad \le \sum_{j \in L}\Bigl(\csm_{j}(t_{k_0+1}; \vec{F})^2 -  \csm_{j}(t_{k}; \vec{F})^2\Bigr) - x\biggr\} \nonumber\\
\le \;&\mathbb{P}\Biggl\{\sum_{j \in L }B_j\Bigl(\csm_{j}(t_{k_0+1}; \vec{X})^2 -  \csm_{j}(t_{k}; \vec{X})^2\Bigr) \nonumber\\
&\qquad\qquad\qquad\qquad\le \sum_{j\in L} \Bigl(\csm_{j}(t_{k_0+1}; \vec{F})^2 -  \csm_{j}(t_{k}; \vec{F})^2\Bigr)-\frac{x}{3}\Biggr\} \label{e:dyp1}\\
&  \qquad\quad + \Prob{\sum_{j \in S}B_j\Bigl(\csm_{j}(t_{k_0+1}; \vec{X})^2 -  \csm_{j}(t_{k}; \vec{X})^2\Bigr)\le -\frac{x}{3}} \label{e:dyp2}\\
&  \qquad\quad +   \Prob{\sum_{j \in N }B_j\Bigl(\csm_{j}(t_{k_0+1}; \vec{X})^2 -  \csm_{j}(t_{k}; \vec{X})^2\Bigr)\le -\frac{x}{3}}. \label{e:dyp3}
\end{align}
\end{subequations}
Next we bound the probabilities in \eqref{e:dyp1}, \eqref{e:dyp2} and \eqref{e:dyp3}, separately. 
\begin{enumerate}[i.]
\item
The probability in \eqref{e:dyp1} can be bounded from above by
\begin{multline}\label{e:dysptL}
\Prob{\sum_{j \in L }B_j\csm_{j}(t_{k_0+1}; \vec{X})^2-\sum_{j\in L}\Bigl(1+\csm_{j}(t_{k_0+1}; \vec{F})^2\Bigr)\le -\frac{x}{6}}\\
+\Prob{\sum_{j \in L }B_j\csm_{j}(t_{k}; \vec{X})^2-\sum_{j\in L}\Bigl(1+\csm_{j}(t_{k}; \vec{F})^2\Bigr)\ge \frac{x}{6}}.
\end{multline}
For the first term in \eqref{e:dysptL}, we consider two cases separately. 
\begin{itemize}
\item Dense case, i.e.\ when $s \ge \sqrt{p \log\log n}$ or $s=p$. It follows that $\thd = 1$ and thus $B_j = 1$ for $j \in L$. By \cref{lm:tail.chi2}, we have 
\begin{align*}
&\Prob{\sum_{j \in L }B_j\csm_{j}(t_{k_0+1}; \vec{X})^2-\sum_{j\in L}\Bigl(1+\csm_{j}(t_{k_0+1}; \vec{F})^2\Bigr)\le -\frac{x}{6}}\\
=\; & \Prob{\sum_{j \in L }\csm_{j}(t_{k_0+1}; \vec{X})^2-\sum_{j\in L}\Bigl(1+\csm_{j}(t_{k_0+1}; \vec{F})^2\Bigr)\le -\frac{x}{6}}\\
\le\;& \exp\left(-\frac{x^2}{144\bigl(s + 2n\lambda\norm{\vec{\delta}}^2\bigr)}\right).
\end{align*}
\item Sparse case, i.e.\ when $s < \sqrt{p \log\log n}$ and $s \neq p$. It implies $\thd \ge 2$. Note that, for every $j \in \set{1,\ldots,p}$, 
$$
\abs{\csm_{j}(t_{k_0+1}; \vec{F})}^2 \ge n\lambda\delta_j^2\frac{(1-\lambda)^2}{1-\lambda/2} \ge \frac{1}{3}  n\lambda\delta_j^2,
$$
and that $n\lambda\norm{\vec{\delta}}^2 \ge C_0 s\thd^2/4 \ge C_0 s$.  For $j \in L$, by the choice of $C_0$ in \eqref{e:chC0}, we have
$$
(\abs{\csm_{j}(t_{k_0+1}; \vec{F})} - \thd)^2 \ge \frac{2}{9} {n\lambda\delta_j^2} \ge\frac{1}{6} \bigl(1 + {n\lambda\delta_j^2}\bigr),
$$
and then, by Mill's ratio,
\begin{align*}
\Prob{B_j = 0} & \le \Prob{\abs{\csm_{j}(t_{k_0+1}; \vec{Y})} \le \thd}\\
& \le \Prob{\bigl|{\csm_{j}(t_{k_0+1}; \tilde{\vec{\Xi}})}\bigr| \ge \abs{\csm_{j}(t_{k_0+1}; \vec{F})} - \thd}\\
& \le 2\bigl(1-\Phi(\abs{\csm_{j}(t_{k_0+1}; \vec{F})} - \thd)\bigr)\\
& \le  \exp\left(-\frac{(\abs{\csm_{j}(t_{k_0+1}; \vec{F})} - \thd)^2}{2}\right)\\
& \le \exp\left(-\frac{1}{12}\bigl(1 + {n\lambda\delta_j^2}\bigr)\right).
\end{align*}
Recall that $\Phi(\cdot)$ denotes the distribution function of standard Gaussian random variable. Thus, by \cref{lm:bwcs2}, we obtain
\begin{multline*}
\Prob{\sum_{j \in L }B_j\csm_{j}(t_{k_0+1}; \vec{X})^2-\sum_{j\in L}\Bigl(1+\csm_{j}(t_{k_0+1}; \vec{F})^2\Bigr)\le -\frac{x}{6}}\\
\le 2^s \exp \left(-\min\set{\frac{x}{288},\;\frac{x^2}{144\bigl(s+2n\lambda\norm{\vec{\delta}}^2\bigr)}}\right).
\end{multline*}
\end{itemize}

For the second term in \eqref{e:dysptL}, because $\Prob{B_j = 1}\ge1/2$ for every $j \in L$, we apply \eqref{e:bBwC} in \cref{lm:bwcs} and obtain
\begin{align*}
& \Prob{\sum_{j \in L }B_j\csm_{j}(t_{k}; \vec{X})^2-\sum_{j\in L}\Bigl(1+\csm_{j}(t_{k}; \vec{F})^2\Bigr)\ge \frac{x}{6}} \\
\le\;& \Prob{\sum_{j \in L }B_j\csm_{j}(t_{k}; \vec{X})^2-\sum_{j\in L}B_j\Bigl(1+\csm_{j}(t_{k}; \vec{F})^2\Bigr)\ge \frac{x}{6}}\\
\le \; & \exp\left(-\min\set{\frac{x}{48},\; \frac{x^2}{576\bigl(s + 2n\lambda\norm{\vec{\delta}}^2\bigr)}}\right). 
\end{align*}

\item
We split the probability in \eqref{e:dyp2} according to all possible values of $(B_j)_{j \in S}$, namely, 
\begin{align*}
& \Prob{\sum_{j \in S}B_j\Bigl(\csm_{j}(t_{k_0+1}; \vec{X})^2 -  \csm_{j}(t_{k}; \vec{X})^2\Bigr)\le -\frac{x}{3}}\\
= \; & \sum_{\emptyset \neq J \subset S}\Biggl( \Prob{\sum_{j \in J}B_j \Bigl(\csm_{j}(t_{k_0+1}; \vec{X})^2 -  \csm_{j}(t_{k}; \vec{X})^2\Bigr)\le -\frac{x}{3} \;\Bigg\vert\; B_j = 1 \text{ iff } j \in J}\\
&\qquad\qquad\qquad\qquad\qquad\qquad  \times\Prob{B_j =1 \text{ iff } j \in J}\Biggr)\\
= \; & \sum_{\emptyset \neq J \subset S} \Prob{\sum_{j \in J} \Bigl(\csm_{j}(t_{k_0+1}; \vec{X})^2 -  \csm_{j}(t_{k}; \vec{X})^2\Bigr)\le -\frac{x}{3}}\Prob{B_j =1 \text{ iff } j \in J}.
\end{align*}
For every $\emptyset \neq J \subset S$, we apply \cref{lm:tail.chi2} (cf.\ \cref{r:ebnd}) and obtain
\begin{align*}
 &\Prob{\sum_{j \in J} \Bigl(\csm_{j}(t_{k_0+1}; \vec{X})^2 -  \csm_{j}(t_{k}; \vec{X})^2\Bigr)\le -\frac{x}{3}}\\ 
\le \;& \mathbb{P}\Biggl\{\sum_{j \in J} \biggl(\Bigl(\csm_{j}(t_{k_0+1}; \vec{X})^2 -  \csm_{j}(t_{k}; \vec{X})^2\Bigr) \\
&\qquad\qquad \qquad\qquad\qquad-\Bigl(\csm_{j}(t_{k_0+1}; \vec{F})^2 -  \csm_{j}(t_{k}; \vec{F})^2\Bigr) \biggr)\le -\frac{x}{3}\Biggr\}\\
\le \; &\Prob{\sum_{j \in J}\Bigl(\csm_{j}(t_{k_0+1}; \vec{X})^2 -\Bigl(\csm_{j}(t_{k_0+1}; \vec{F})^2 \Bigr)\le -\frac{x}{6}}\\
&\qquad\qquad\qquad\qquad\qquad +\Prob{\sum_{j \in J}\Bigl(\csm_{j}(t_{k}; \vec{X})^2- \csm_{j}(t_{k}; \vec{F})^2\Bigr)\ge \frac{x}{6}}\\
\le\; &  2\exp\left(-\min\set{\frac{x}{48},\;\frac{x^2}{288(s + 2 n\lambda\norm{\vec{\delta}}^2/21)}}\right).
\end{align*}
Thus, we have 
\begin{align*}
 &\Prob{\sum_{j \in J} \Bigl(\csm_{j}(t_{k_0+1}; \vec{X})^2 -  \csm_{j}(t_{k}; \vec{X})^2\Bigr)\le -\frac{x}{3}}\\ 
\le\; &  \sum_{\emptyset \neq J \subset S}2\exp\left(-\min\set{\frac{x}{48},\;\frac{x^2}{288(s + 2 n\lambda\norm{\vec{\delta}}^2/21)}}\right)\Prob{B_j =1 \text{ iff } j \in J}\\
=\;& 2\exp\left(-\min\set{\frac{x}{48},\;\frac{x^2}{288\bigl(s + 2 n\lambda\norm{\vec{\delta}}^2/21\bigr)}}\right).
\end{align*}
\item
Consider the probability in \eqref{e:dyp3}, and note that for $j \in N$,
\begin{align*}
\Prob{B_j =1} & \le \Prob{\abs{\csm_{j}(t_{k_0+1}; \tilde{\vec{\Xi}})} \ge \thd} +  \Prob{\abs{\csm_{j}(t_{k}; \tilde{\vec{\Xi}})} \ge \thd} \\
& = 2\bigl(1-\Phi(\thd)\bigr) \le \exp\left(-\frac{\thd^2}{2}\right).
\end{align*}
We apply \cref{lm:bwcs}, more precisely, \eqref{e:gbBwC}, and obtain 
\begin{align*}
&\Prob{\sum_{j \in N }B_j\Bigl(\csm_{j}(t_{k_0+1}; \vec{X})^2 -  \csm_{j}(t_{k}; \vec{X})^2\Bigr)\le -\frac{x}{3}} \\
= \; & \Prob{\sum_{j \in N }B_j\Bigl(\csm_{j}(t_{k_0+1}; \vec{\Xi})^2 -  \csm_{j}(t_{k}; \vec{\Xi})^2\Bigr)\le -\frac{x}{3}} \\
\le\; & \Prob{\sum_{j \in N }B_j\Bigl(\csm_{j}(t_{k_0+1}; \vec{\Xi})^2 -  1\Bigr)\le -\frac{x}{3}} \\
& \qquad\qquad\qquad\qquad+ \Prob{\sum_{j \in N }B_j\Bigl(1 -  \csm_{j}(t_{k}; \vec{\Xi})^2\Bigr)\le -\frac{x}{3}} \\
\le\; & 2\exp\left(-\min\set{\frac{x}{48},\;\frac{x^2}{432\sum_{j \in N} \Prob{B_j = 1} }}\right)\\
\le\;& 2\exp\left(-\min\set{\frac{x}{48},\;\frac{x^2}{432\, p\exp(-\thd^2/2)}}\right).
\end{align*}
\end{enumerate}
Therefore, combining all the bounds above and \eqref{e:chx}, we obtain 
\begin{align*}
&\Prob{\cmp(t_{k_0+1}, t_{k}) \le 0} \\
\le\;& 2 \exp\left(-\frac{x^2}{432\, p \exp(-\thd^2/2)}\right)\\
&\qquad+ \bigl(6 + 2^s\ind\set{\alpha >0}\bigr)\exp\left(-\min\set{\frac{x}{288},\; \frac{x^2}{576\bigl(s + 2n\lambda\norm{\vec{\delta}}^2\bigr)}}\right)\\
=\;& \exp\left(-\frac{n^2\lambda^2\norm{\vec{\delta}}^4}{2^{12}3^3 p \exp(-\thd^2/2)}\right)\\
&\qquad+ \bigl(6 + 2^s\ind\set{\alpha >0}\bigr)\exp\left(-\min\set{\frac{n\lambda\norm{\vec{\delta}}^2}{2^93^2},\; \frac{n^2\lambda^2\norm{\vec{\delta}}^4}{2^{14}3^2\bigl(s + 2n\lambda\norm{\vec{\delta}}^2\bigr)}}\right),
\end{align*}
for any fixed $k \in \set{k_0 + 2, \ldots, \log_2 n}$. This together with the union bound (i.e.~Boole's inequality) implies
\begin{align*}
&\Prob{\cmp(t_{k_0+1}, t_{k}) \le 0 \; \text{ for some } \; k = k_0+2, \ldots, \log_2 n} \\
\le\;& \sum_{k=k_0+2}^{\log_2 n}\Prob{\cmp(t_{k_0+1}, t_{k}) \le 0} \le \frac{12}{\log n} \to 0,
\end{align*}
where we use 
\begin{equation}\label{e:pthd}
 p\exp\left(-\frac{\thd^2}{2}\right) =
 \begin{cases}
 p & \text{ if } \thd = 0,\\
 \frac{s^2}{e^2 \log\log n} &  \text{ if } \thd > 0,
 \end{cases}
\end{equation}
and the assumption \eqref{e:db1cp} with $C_0$ in \eqref{e:chC0}.

Next, we consider \eqref{e:rdgs}. For $k \in \set{1,\ldots, k_0 -1}$, i.e.\ on the right side of the change point, and for $j \in \{1,\ldots, p\}$, it holds that
\begin{multline*}
\csm_{j}(t_{k_0}; \vec{F})^2 -  \csm_{j}(t_{k}; \vec{F})^2  \ge \csm_{j}(t_{k_0}; \vec{F})^2 -  \csm_{j}(t_{k_0-1}; \vec{F})^2 \\
 = \frac{1}{2t_{k_0}}n^2\lambda^2\delta_j^2
 \ge \frac{1}{4}n \lambda {\delta}_j^2 \ge \frac{2}{21}n \lambda {\delta}_j^2,
\end{multline*}
and also that 
$$
\abs{\csm_{j}(t_{k_0}; \vec{F})}^2 = n \lambda^2\delta_j^2\frac{n-t_{k_0}}{t_{k_0}} \ge \frac{1}{4}  n\lambda\delta_j^2.
$$
Thus, for $j \in L$, we have
$$
(\abs{\csm_{j}(t_{k_0+1}; \vec{F})} - \thd)^2 \ge \frac{2}{9} {n\lambda\delta_j^2} \ge \frac{1}{6} \bigl(1 + {n\lambda\delta_j^2}\bigr).
$$
Therefore, the same calculation as for \eqref{e:ldgs} remain valid, and thus 
$$
\Prob{\cmp(t_{k_0}, t_{k}) \le 0 \; \text{ for some } \; k = 1,\ldots, k_0-1} \le \frac{12}{\log n} \to 0.
$$

\emph{Part 2. Advanced optimistic search.} We will show that the assertion of \cref{th:mh1cp} holds with $C_0$ given in \eqref{e:chC0} and 
\begin{equation}\label{e:chC1}
C_1 = 2^{29}\, 3^{4}.
\end{equation}
Based on Part 1, we can start from the search interval $(t_*/2, \, 2t_*]$, which contains $n \lambda$ with probability towards one as $n \to \infty$.  

We only need to consider the pairs of probe points $r,t$ at the same side of $n\lambda$, since otherwise no matter which side is dropped off, no mistake will occur. Thus, we introduce
\begin{multline*}
\PP = \Bigl\{(t,r) \;:\; \text{$t,r$ are the pair of probe points in the same step of}\\
\text{optimistic search such that $t,r$ lie on the same side of}\\ \text{the change point, and $t$ is closer to the change point}\Bigr\}.
\end{multline*}
That is, for $(t,r)\in \PP$, it holds $(t -n\lambda)(r-n\lambda) \ge 0$ and  $\abs{t-n\lambda} \le \abs{r - n\lambda}$. 

Consider arbitrarily $(t,r) \in \PP$. Recall that the step size of optimistic search is $\nu = 1/2$, and that the relative distance $d(\cdot, \cdot)$ is defined in \eqref{e:reld}. Since $2t_* \le n$, it always holds 
$$
(r \vee t) \le \frac{2}{3}\left(2t_*- \frac{1}{2}t_*\right) + \frac{1}{2} t_* \le \frac{3}{2}t_* \le \frac{3}{4}n.
$$
Then, using $n\lambda/2 \le t_* \le 2 n\lambda$, we have 
\begin{multline}\label{e:dbnd}
\frac{\abs{t-r}}{4n\lambda} \le \frac{\abs{t-r}}{2t_*} \le \frac{\abs{t-r}}{r \vee t}  \le d(t, r) = \frac{\abs{t-r}}{(t \vee r) \wedge \bigl(n -(r\wedge t)\bigr)}\\
\le \frac{\abs{t-r}}{(t_*/2) \wedge (n - 3n/4)} \le \frac{4\abs{t-r}}{n\lambda},
\end{multline}
that is, $d(t,r) \asymp \abs{t-r}/(n\lambda)$.

Given $(t,r)\in\PP$, there are only two possible cases: 
\begin{itemize}
\item
Case: $n\lambda \le t \le r \le 2t_*$. For $j \in \set{1,\ldots, p}$, we have 
\begin{align*}
\abs{t-r}\delta_j^2\ge \csm_j(t; \mat{F})^2 - \csm_j(r; \mat{F})^2 = n^2\lambda^2\delta_j^2\frac{r-t}{tr} \ge \frac{1}{16}\abs{t-r}\delta_j^2,
\end{align*}
where the second inequality is due to $t \le r \le 2t_* \le 4 n\lambda$.
\item
Case: $t_*/2 < r \le t \le n\lambda$. For $j \in \set{1,\ldots, p}$,  we have 
\begin{align*}
\abs{t-r}\delta_j^2 \ge \csm_j(t; \mat{F})^2  - \csm_j(r; \mat{F})^2& = n^2(1-\lambda)^2\delta_j^2\frac{t-r}{(n-t)(n-r)}\\
& \ge \frac{(1-\lambda)^2}{(1-\lambda/4)^2}\abs{t-r}\delta_j^2\\
& \ge \frac{16}{49}\abs{t-r}\delta_j^2 \ge \frac{1}{16}\abs{t-r}\delta_j^2,
\end{align*}
where the second inequality is due to $n\lambda/4 \le t_*/2 < r \le t$, and the third inequality is due to $\lambda \le 1/2$. 
\end{itemize}
Thus, it follows 
\begin{align*}
& \sum_{j\in L} \Bigl(\csm_{j}(t; \vec{F})^2 -  \csm_{j}(r; \vec{F})^2\Bigr)\\
 =\;& \sum_{j\in L \cup S} \Bigl(\csm_{j}(t; \vec{F})^2 -  \csm_{j}(r; \vec{F})^2\Bigr) \\
 &\qquad\qquad\qquad- \sum_{j\in S} \Bigl(\csm_{j}(t; \vec{F})^2 -  \csm_{j}(r; \vec{F})^2\Bigr)\\
\ge\; &  \frac{1}{16}\abs{t-r}\sum_{j \in L \cup S} \delta_j^2 -  {\abs{t-r}}\sum_{j\in S} \delta_j^2 \\
\ge\; &  \frac{1}{16}\abs{t-r} \norm{\vec{\delta}}^2 - \abs{S}\abs{t-r} \frac{\norm{\vec{\delta}}^2}{32 s} \;\ge\;  \frac{1}{32}\abs{t-r} \norm{\vec{\delta}}^2.
\end{align*}
Fix a pair of probe points $(t,r)\in \PP$, and introduce  
\begin{equation}\label{e:chx2}
x := \frac{1}{32}\abs{t-r} \norm{\vec{\delta}}^2,
\end{equation}
and, for $j\in \set{1,\ldots, p}$, 
$$
B_j := \ind\Bigl\{\max\bigl(\abs{\csm_{j}(t; \vec{Y})}, \abs{\csm_{j}(r; \vec{Y})} \bigr)\ge \thd\Bigr\}. 
$$ 
Then we can bound the probability of making a mistake as below:
\begin{subequations}
\begin{align}
& \Prob{\cmp(t, r) \le 0}\nonumber \\ 
\le\; & \mathbb{P}\biggl\{\sum_{j \in L \cup S\cup N }B_j\Bigl(\csm_{j}(t; \vec{X})^2 -  \csm_{j}(r; \vec{X})^2\Bigr) \nonumber\\
&\qquad\qquad\qquad\qquad \le \sum_{j \in L}\Bigl(\csm_{j}(t; \vec{F})^2 -  \csm_{j}(r; \vec{F})^2\Bigr) - x\biggr\} \nonumber\\
\le \;&\mathbb{P}\Biggl\{\sum_{j \in L }B_j\Bigl(\csm_{j}(t; \vec{X})^2 -  \csm_{j}(r; \vec{X})^2\Bigr) \nonumber\\
&\qquad\qquad\qquad\qquad\le \sum_{j\in L} \Bigl(\csm_{j}(t; \vec{F})^2 -  \csm_{j}(r; \vec{F})^2\Bigr)-\frac{x}{3}\Biggr\} \label{e:osp1}\\
&  \qquad\quad + \Prob{\sum_{j \in S}B_j\Bigl(\csm_{j}(t; \vec{X})^2 -  \csm_{j}(r; \vec{X})^2\Bigr)\le -\frac{x}{3}} \label{e:osp2}\\
&  \qquad\quad +   \Prob{\sum_{j \in N }B_j\Bigl(\csm_{j}(t; \vec{X})^2 -  \csm_{j}(r; \vec{X})^2\Bigr)\le -\frac{x}{3}}. \label{e:osp3}
\end{align}
\end{subequations}

The three terms above can be bounded in a similar way as \eqref{e:dyp1}, \eqref{e:dyp2} and \eqref{e:dyp3} in Part~1, respectively, but we need to decouple the correlation between $\csm_{j}(t; \vec{X})$ and $\csm_{j}(r; \vec{X})$ by means of \cref{lm:diff,lm:difb}. The details are given below. 

\begin{enumerate}[i.]
\item
For \eqref{e:osp1}, we consider two cases separately.
\begin{itemize}
\item 
Dense case, i.e.\ when $s \ge \sqrt{p \log\log n}$ or $s=p$. Note that $\thd = 0$ and thus $B_j = 1$ for $j \in L$. Then \cref{lm:diff} implies
\begin{align*}
&\mathbb{P}\Biggl\{\sum_{j \in L }B_j\Bigl(\csm_{j}(t; \vec{X})^2 -  \csm_{j}(r; \vec{X})^2\Bigr) \nonumber\\
&\qquad\qquad\qquad\qquad\le \sum_{j\in L} \Bigl(\csm_{j}(t; \vec{F})^2 -  \csm_{j}(r; \vec{F})^2\Bigr)-\frac{x}{3}\Biggr\}\\
= \; &\mathbb{P}\Biggl\{\sum_{j \in L }\Bigl(\csm_{j}(t; \vec{X})^2 -  \csm_{j}(r; \vec{X})^2\Bigr) \nonumber\\
&\qquad\qquad\qquad\qquad\le \sum_{j\in L} \Bigl(\csm_{j}(t; \vec{F})^2 -  \csm_{j}(r; \vec{F})^2\Bigr)-\frac{x}{3}\Biggr\}\\
\le\;& 2\exp\left(-\min\set{\frac{x^2}{576\,d(t,r)\bigl(s + 2n\lambda\norm{\vec{\delta}}^2\bigr)},\; \frac{x}{48\sqrt{2d(t,r)}}}\right).
\end{align*}

\item Sparse case, i.e.\ when $s < \sqrt{p \log\log n}$ and $s \neq p$. In this case, it holds that $\thd \ge 2$, and thus that $n\lambda\norm{\vec{\delta}}^2 \ge C_0 s\thd^2/4 \ge C_0 s$. Because $t$ is closer the change point and $t \in [t_*, 3t_*/2]$ (which is due to $\nu =1/2$), we obtain, for $j \in \set{1,\ldots, p}$, 
$$
\abs{\csm_{j}(t; \vec{F})}^2 \ge n \lambda^2\delta_j^2\frac{n-t}{t} \ge \frac{1}{4}  n\lambda\delta_j^2.
$$
Then, the choice of $C_0$ in \eqref{e:chC0} implies, for $j \in L$,
$$
(\abs{\csm_{j}(t; \vec{F})} - \thd)^2 \ge \frac{2}{9} {n\lambda\delta_j^2} \ge\frac{1}{6} \bigl(1 + {n\lambda\delta_j^2}\bigr),
$$
and further, by Mill's ratio,
\begin{align*}
\Prob{B_j = 0} & \le \Prob{\abs{\csm_{j}(t; \vec{Y})} \le \thd}\\
& \le \Prob{\bigl|{\csm_{j}(t; \tilde{\vec{\Xi}})}\bigr| \ge \abs{\csm_{j}(t; \vec{F})} - \thd}\\
& \le 2\bigl(1-\Phi(\abs{\csm_{j}(t; \vec{F})} - \thd)\bigr)\\
& \le  \exp\left(-\frac{(\abs{\csm_{j}(t; \vec{F})} - \thd)^2}{2}\right)\\
& \le \exp\left(-\frac{1}{12}\bigl(1 + {n\lambda\delta_j^2}\bigr)\right).
\end{align*}
Thus, by \cref{lm:difb}\ref{i:difb3}, we obtain
\begin{align*}
&\mathbb{P}\Biggl\{\sum_{j \in L }B_j\Bigl(\csm_{j}(t; \vec{X})^2 -  \csm_{j}(r; \vec{X})^2\Bigr) \nonumber\\
&\qquad\qquad\qquad\qquad\le \sum_{j\in L} \Bigl(\csm_{j}(t; \vec{F})^2 -  \csm_{j}(r; \vec{F})^2\Bigr)-\frac{x}{3}\Biggr\}\\
\le\;& 2\exp\left(-\min\set{\frac{x^2}{576 \, d(t,r)  \left(s + 2n\lambda\snorm{\vec{\delta}}^2\right)},\;\frac{x}{288\sqrt{2d(t,r)}}}\right)\\
& \qquad\qquad\qquad\qquad \times \prod_{j \in L}\left( 1+\exp\Bigl(-\frac{1}{24}\bigl(1+n\lambda\delta_j^2\bigr)\Bigr)\right)\\
\le\; & 2\exp\biggl(-\min\Bigl\{\frac{x^2}{576 \, d(t,r)  \left(s + 2n\lambda\snorm{\vec{\delta}}^2\right)},\;\frac{x}{288\sqrt{2d(t,r)}}\Bigr\} \\
& \qquad\qquad\qquad\qquad + s\exp\Bigl(-\frac{1}{768}\frac{n\lambda\norm{\vec{\delta}}^2}{s}\Bigr)\biggr),
\end{align*}
where the last inequality is due to the definition of $L$ and the basic inequality $1+x \le \exp(x)$.
\end{itemize}
\item
We decompose the probability in \eqref{e:osp2} into events conditioned on all possible values of $(B_j)_{j \in S}$, namely, 
\begin{align*}
& \Prob{\sum_{j \in S}B_j\Bigl(\csm_{j}(t; \vec{X})^2 -  \csm_{j}(r; \vec{X})^2\Bigr)\le -\frac{x}{3}}\\
= \; & \sum_{\emptyset \neq J \subset S}\Biggl( \Prob{\sum_{j \in J}B_j \Bigl(\csm_{j}(t; \vec{X})^2 -  \csm_{j}(r; \vec{X})^2\Bigr)\le -\frac{x}{3} \;\Bigg\vert\; B_j = 1 \text{ iff } j \in J}\\
&\qquad\qquad\qquad\qquad\qquad\qquad  \times\Prob{B_j =1 \text{ iff } j \in J}\Biggr)\\
= \; & \sum_{\emptyset \neq J \subset S} \Prob{\sum_{j \in J} \Bigl(\csm_{j}(t; \vec{X})^2 -  \csm_{j}(r; \vec{X})^2\Bigr)\le -\frac{x}{3}}\Prob{B_j =1 \text{ iff } j \in J}.
\end{align*}
For every $\emptyset \neq J \subset S$, we obtain, by \cref{lm:diff}, 
\begin{align*}
 &\Prob{\sum_{j \in J} \Bigl(\csm_{j}(t; \vec{X})^2 -  \csm_{j}(r; \vec{X})^2\Bigr)\le -\frac{x}{3}}\\ 
\le \;& \mathbb{P}\Biggl\{\sum_{j \in J} \biggl(\Bigl(\csm_{j}(t; \vec{X})^2 -  \csm_{j}(r \vec{X})^2\Bigr) \\
&\qquad\qquad \qquad\qquad\qquad-\Bigl(\csm_{j}(t; \vec{F})^2 -  \csm_{j}(r; \vec{F})^2\Bigr) \biggr)\le -\frac{x}{3}\Biggr\}\\
\le\; &  2\exp\left(-\min\set{\frac{x}{48\sqrt{2d(t,r)}},\;\frac{x^2}{576\, d(t,r)(s + n\lambda\norm{\vec{\delta}}^2/16)}}\right).
\end{align*}
Thus, we have 
\begin{align*}
 &\Prob{\sum_{j \in J} \Bigl(\csm_{j}(t; \vec{X})^2 -  \csm_{j}(r; \vec{X})^2\Bigr)\le -\frac{x}{3}}\\ 
\le\; &  \sum_{\emptyset \neq J \subset S}\Biggl(2\exp\biggl(-\min\set{\frac{x}{48\sqrt{2d(t,r)}},\;\frac{x^2}{576\, d(t,r)(s + n\lambda\norm{\vec{\delta}}^2)}}\biggr)\\
&\qquad\qquad\qquad\qquad\qquad\qquad \times \Prob{B_j =1 \text{ iff } j \in J}\Biggr)\\
=\;& 2\exp\left(-\min\set{\frac{x}{48\sqrt{2d(t,r)}},\;\frac{x^2}{576\,d(t,r)\bigl(s + n\lambda\norm{\vec{\delta}}^2\bigr)}}\right).
\end{align*}

\item 
For \eqref{e:osp3}, we first note that for $j \in N$,
\begin{align*}
\Prob{B_j =1} & \le \Prob{\abs{\csm_{j}(t; \tilde{\vec{\Xi}})} \ge \thd} +  \Prob{\abs{\csm_{j}(r; \tilde{\vec{\Xi}})} \ge \thd} \\
& = 2\bigl(1-\Phi(\thd)\bigr) \le \exp\left(-\frac{\thd^2}{2}\right).
\end{align*}
Then, \cref{lm:difb}\ref{i:difb1} implies
\begin{align*}
&\Prob{\sum_{j \in N }B_j\Bigl(\csm_{j}(t; \vec{X})^2 -  \csm_{j}(r; \vec{X})^2\Bigr)\le -\frac{x}{3}} \\
= \; & \Prob{\sum_{j \in N }B_j\Bigl(\csm_{j}(t; \vec{\Xi})^2 -  \csm_{j}(r; \vec{\Xi})^2\Bigr)\le -\frac{x}{3}} \\
\le\; & 2\exp\left(-\min\set{\frac{x}{48\sqrt{2d(t,r)}},\;\frac{x^2}{864\,d(t,r)\sum_{j \in N} \Prob{B_j = 1} }}\right)\\
\le\;& 2\exp\left(-\min\set{\frac{x}{48\sqrt{2d(t,r)}},\;\frac{x^2}{864\, d(t,r)p\exp(-\thd^2/2)}}\right).
\end{align*}
\end{enumerate}

Thus, we combine all bounds above and obtain, for $(t,r) \in \PP$,
\begin{align*}
&\Prob{\cmp(t, r) \le 0} \\
\le\;& 2\exp\left(-\frac{x^2}{864\, d(t,r)p\exp(-\thd^2/2)}\right) + \Bigl(4 + 2\exp\bigl(se^{-n\lambda\norm{\vec{\delta}}^2/(768\, s)}\ind\set{\alpha >0}\bigr)\Bigr)\\
&\qquad \times \exp\left(-\min\set{\frac{x}{288\sqrt{2d(t,r)}},\; \frac{x^2}{576\,d(t,r)\bigl(s + 2n\lambda\norm{\vec{\delta}}^2\bigr)}}\right)\\
\le\;& 2\exp\left(-\frac{n\lambda\abs{t-r}\norm{\vec{\delta}}^4}{2^{17}3^3 p\exp(-\thd^2/2)}\right) + \Bigl(4 + 2\exp\bigl(se^{-n\lambda\norm{\vec{\delta}}^2/(768\, s)}\ind\set{\alpha >0}\bigr)\Bigr)\\
&\qquad \times \exp\left(-\min\set{\frac{\sqrt{n\lambda\abs{t-r}}\norm{\vec{\delta}}^2}{2^{11}3^2\sqrt{2}},\; \frac{n\lambda\abs{t-r}\norm{\vec{\delta}}^4}{2^{18}3^2\bigl(s + 2n\lambda\norm{\vec{\delta}}^2\bigr)}}\right),
\end{align*}
where the last inequality is due to \eqref{e:dbnd} and \eqref{e:chx2}. 

We claim that there are at most $O\bigl((\log n)^2\bigr)$ possible choices for such pairs of probe points $r,t$. More precisely, we will show that there exists another set $\PP_*$ such that $\PP\subseteq \PP_*$ with $\abs{\PP_*} = O\bigl((\log n)^2\bigr)$, and $\PP_*$ is completely determined by the signal $\mat{F}$, in particular, independent of the noise $\mat{\Xi}$.  We will prove the claim in two steps.

First, we note that in each step of (naive) optimistic search, at least $1/4$ of the interval is dropped off.  Thus, the optimistic search stops with at most 
$$
\frac{\log n}{\log(4/3)} \le 4 \log n \quad \text{steps.}
$$
However, in some steps (e.g.\ two probe points lying on both side of $n\lambda$ and with the same or similar distance to $n\lambda$), which part to drop off may depend not only on the signal $\mat{F}$ but also on the noise $\mat{\Xi}$. In this case, we will include both pairs of probe points that are followed by deleting the leftmost part and the rightmost part, respectively, as possible choices of probe points. This results in a bigger set of pairs of probe points than $\PP$, which is denoted by $\PP_*$.

\begin{figure}[ht]
\centering
\includegraphics[width=0.8\textwidth]{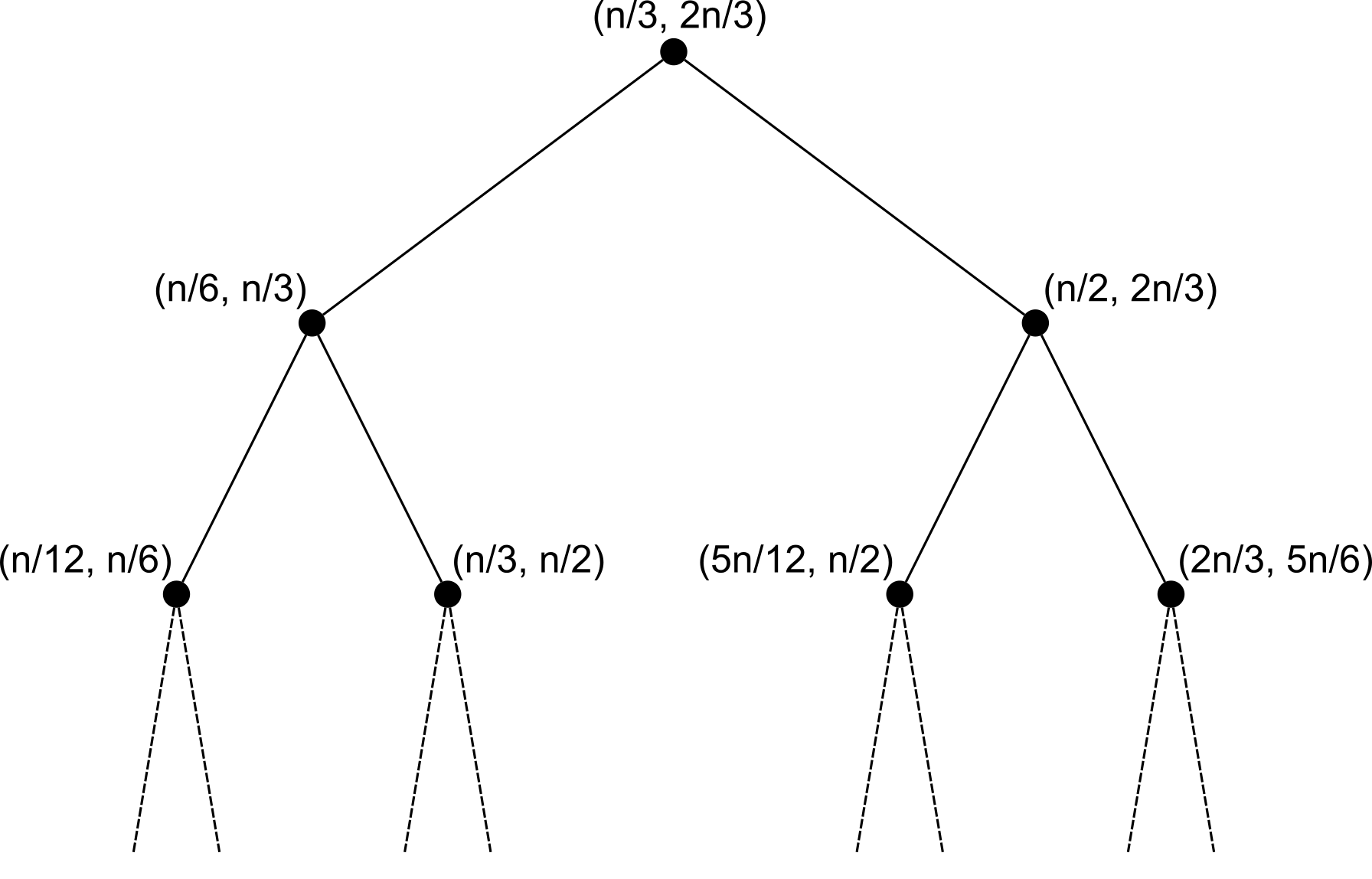} 
\caption{Binary tree structure of (naive) optimistic searches. The starting search interval is $(0,n]$. Each node represents a pair of probe points $(t,r)$. The level of depth corresponds to the number of steps in optimistic searches. }
\label{f:osbt}
\end{figure}

Second, we record all possible choices of pairs of probe points by a binary tree, see \cref{f:osbt} for an illustration.  Let us look into the binary tree: at each level (which consists of all nodes that have the same depth to the root), there is at most one node for which the corresponding pair of probe points have similar distance to $n\lambda$. Namely, given the signal $\mat{F}$, at each level, there is at most one node that yields two child nodes, while for the rest it is clear that which part needs dropped off. In this way, the possible choices of pairs of probe points can be restricted to a subtree that has at most $4 \log n$ width, and thus has at most 
$$
(4 \log n)^2 \quad \text{nodes.}
$$
That is, $\abs{\PP_*} \le (4 \log n)^2$, which shows the above claim. 

Thus, using the union bound, we can bound the probability of wrongly excluding any segment for all pairs of probe points $r,t$ with $\abs{t-r}\ge\eps$, for some $\eps \equiv \eps({n,p,s}) > 0$, as follows:   
\begin{align*}
&\Prob{\cmp(t, r) \le 0\text{ for some $(t,r) \in \PP$ such that $\abs{t-r}\ge\eps$}}\\
\le\; &\Prob{\cmp(t, r) \le 0\text{ for some $(t,r) \in \PP_*$ such that $\abs{t-r}\ge\eps$}}\\
\le\;& \sum_{(t,r) \in \PP_*,\; \abs{t-r}\ge\eps}\Prob{\cmp(t, r) \le 0}\\
\le\;& \sum_{(t,r) \in \PP_*,\; \abs{t-r}\ge\eps} \Biggl( 2\exp\biggl(-\frac{n\lambda\abs{t-r}\norm{\vec{\delta}}^4}{2^{17}3^3  p\exp(-\thd^2/2)}\biggr) \\
& \qquad\qquad\qquad+ \Bigl(4 + 2\exp\bigl(se^{-n\lambda\norm{\vec{\delta}}^2/(768\, s)}\ind\set{\alpha >0}\bigr)\Bigr)\\
&\qquad\times\exp\biggl(-\min\Bigl\{\frac{\sqrt{n\lambda\abs{t-r}}\norm{\vec{\delta}}^2}{2^{11}3^2\sqrt{2}},\; \frac{n\lambda\abs{t-r}\norm{\vec{\delta}}^4}{2^{18}3^2\bigl(s + 2n\lambda\norm{\vec{\delta}}^2\bigr)}\Bigr\}\biggr)\Biggr)\\
\le\; & (4\log n)^2 \Biggl( 2\exp\biggl(-\frac{n\lambda\eps \norm{\vec{\delta}}^4}{2^{17}3^3  p\exp(-\thd^2/2)}\biggr) \\
& \qquad\qquad\qquad+ \Bigl(4 + 2\exp\bigl(se^{-n\lambda\norm{\vec{\delta}}^2/(768\, s)}\ind\set{\alpha >0}\bigr)\Bigr)\\
&\qquad\qquad\qquad\times\exp\biggl(-\min\Bigl\{\frac{\sqrt{n\lambda\eps}\norm{\vec{\delta}}^2}{2^{11}3^2\sqrt{2}},\; \frac{n\lambda\eps\norm{\vec{\delta}}^4}{2^{18}3^2\bigl(s + 2n\lambda\norm{\vec{\delta}}^2\bigr)}\Bigr\}\biggr)\Biggr).
\end{align*}
Thus, under assumption \eqref{e:db1cp} with $C_0$ in \eqref{e:chC0}, if 
\begin{equation}\label{e:epstar}
\eps \ge \eps_* := \frac{C_1}{2} \max\Bigl\{\frac{ \log\log n}{\norm{\vec{\delta}}^2},\, \frac{\min\{s^2, p \log\log n\}}{n\lambda \norm{\vec{\delta}}^4}\Bigr\}, 
\end{equation}
with $C_1$ in \eqref{e:chC1}, then by $\max_{x\ge 0}xe^{-x} =  e^{-1}$ and \eqref{e:pthd} we obtain
$$
\Prob{\cmp(t, r) \le 0\text{ for some $(t,r) \in \PP$ such that $\abs{t-r}\ge\eps$}} \le \frac{128}{\log\log n} \to 0.
$$
That is, when the probe points $t,r$ satisfy $\abs{t-r} \ge \eps_*$, no wrong dropping-off will occur, except on an event with asymptotically vanishing probability. As a consequence, it holds with probability tending to one that $\abs{\hat{\cpt} - \lambda} \le 2 \eps_*/n$, which concludes the proof. 
\end{proof}

\begin{remark}[Rates of probability towards one]
In fact, we have shown in the proof above that
$$
1-\Prob{\abs{\hat\cpt - \cpt} \le C_1 \max\Bigl\{\frac{ \log\log n}{n\norm{\vec{\delta}}^2 },\, \frac{\min\{s^2, p \log\log n\}}{n^2\lambda \norm{\vec{\delta}}^4}\Bigr\}} \le \frac{140}{\log n}.
$$
Such a rate can be improved to any polynomials of $1/\log n$ if constant $C_0$ is chosen larger than the one in \eqref{e:chC0}.
\end{remark}

\begin{remark}[Slightly weaker assumption]
If the dyadic grid search in advanced optimistic search is modified to start from $k = 1$ (or equivalently, $t_1 = n/2$) on, and to stop at the first $k_0$ such that $\cmp(t_{k_0}, t_{k_0+1}) > 0$ and $\cmp(t_{k_0}, t_{k_0-1}) > 0$, then we only need to control random perturbations at pairs of dyadic points of maximal number $O\bigl(\log(2/\lambda) \bigr)$. This, together with slight modification of the proof, will allow a weaker condition
$$
n\lambda \norm{\vec{\delta}}^2 \gtrsim \begin{cases}
\sqrt{p \log\log (2/\lambda)} & \text{if } s \ge \sqrt{p \log\log (2/\lambda)},\\
\max\left(s\log\frac{e \sqrt{p \log\log (2/\lambda)}}{s},\log\log n\right)& \text{if } s \le \sqrt{p \log\log (2/\lambda)}.
\end{cases}
$$
A similar condition appeared also in \citet{PiCV20}.  
\end{remark}

\begin{remark}[Slightly higher accuracy]
Note that in the proof we only need to control the random perturbations at the pairs of probe points until the length of search interval is  $\eps$, which leads to an upper bound $O\bigl(\log(n / \eps)^2\bigr)$ on the number of pairs of probe points. This will lead to a slightly better accuracy of order
$$
\min\set{\gamma > 0 \; : \; \gamma \ge\frac{ \log\log (n/\gamma)}{\norm{\vec{\delta}}^2 } \text{ and } \gamma \ge \frac{\min\{s^2, p \log\log (n/\gamma)\}}{n\lambda \norm{\vec{\delta}}^4}}.
$$
Further, the iterated log factor might be removable if one employs the sub-martingale property of the gain function and partitions the sampling locations into geometrically equally spaced segments, as  detailed in \citet{HRMS21}. This might lead to the essential multiplicity of $\mathrm{polylog} (\eps)$ instead of order $(\log n)^2$, and then it would allow the removability of the iterated log factor. The careful examination will be left as part of future research. 
\end{remark}

\begin{remark}[Combined optimistic search]\label{r:cos}
We note that \cref{th:mh1cp} also holds for the combined optimistic search, introduced in \cref{Appendix_cOS}. This can be proven as follows: Let $\hat{t}_a$ and $\hat{t}_c$ be the outputs of advanced and combined optimistic search, respectively. Similar to the proof above, one can apply \cref{lm:diff,lm:difb} and show that
\begin{multline*}
\mathbb{P}\Bigl\{\cmp(\hat{t}_a, r) \le 0 \text{ for some probe point $r$ in naive optimistic search}\\ \text{such that } \abs{r - n\lambda} \ge 6\eps_*\Bigr\} \to 0,
\end{multline*}
with $\eps_*$ defined in \eqref{e:epstar}. Since $\cmp(\hat{t}_a, \hat{t}_c) \le 0$, it holds that 
$$
\Prob{\abs{\hat{t}_c - n\lambda} \le 6\eps_*} \to 1. 
$$
\end{remark}

\subsection{Multiple change points}\label{ss:mcpt}
Note that \cref{th:osbs}\ref{i:osbs1} is a special case of \cref{th:mhmcp} for $p=1$, so the proof of \cref{th:osbs}\ref{i:osbs1} is omitted. 

The intuition why the OSeedBS in combination with the NOT selection performs optimally is that the selected seeded interval often contains only a single change point close to the center, due to the multiscale nature of seeded intervals. On such a seeded interval, naive and advanced (as well as combined) optimistic searches perform in a minimax optimal way. 

\begin{proof}[Proof of \cref{th:mhmcp}] 
We set the constant $C_0$ in assumption~\eqref{e:dbmcp} as
$$
C_0 = 2^{34} 3^{4},
$$
and the selection threshold $\gamma = C_1 \rho(n,p,s)$ with 
$$
C_1 =2^{33} 3^{4}. 
$$
We will show that the assertion of the theorem holds with 
$$
C_2 =2^{31} 3^{4}.
$$
For notation simplicity, we assume there is no rounding, the decay $a = 1/2$ for seeded intervals, and the step size $\nu = 1/2$ for optimistic searches (otherwise, only the multiplying constants may be different). We split the rest of the proof into five steps. 

\emph{Step 1.} 
Consider intervals $\bigl((\cpt_i - \lambda/2) n,\; (\cpt_i + \lambda/2)n\bigr]$ as potential backgrounds for change points $\cpt_i n$ for $i= 1,\ldots,\ncp$\,. By the construction of seeded intervals, we can find seeded intervals $(c_i -r_i, c_i + r_i]$ such that 
\begin{align*}
(c_i - r_i,\, c_i + r_i] &\subseteq \Bigl((\cpt_i -\frac{1}{2}\lambda)n,\, (\cpt_i + \frac{1}{2} \lambda)n\Bigr], \\
r_i &\ge \frac{1}{6} \lambda n, \\
\text{and}\qquad \abs{c_i - \cpt_i n} &\le \frac{1}{2}r_i\,.
\end{align*}
Note that $\abs{(c_i - r_i,\, c_i + r_i]} \ge \lambda n/3$ and that $(c_i - r_i,\, c_i + r_i]$ contains only a single change point $\cpt_i n$ for every $i = 1,\ldots, \ncp$\,.

\emph{Step 2.} 
By $\hat \cpt_i^0$ we denote the estimated change point (scaled by $1/n$) by naive or advanced (or combined) optimistic search on $(c_i - r_i, \; c_i + r_i]$. Note that there is a single change point in $(c_i - r_i,\,c_i + r_i]$, which is closer to the center of the interval than boundaries, for every $i=1,\ldots,\ncp$. Following similar lines as in the proof of \cref{th:mh1cp}, under the assumption in \eqref{e:dbmcp}, we can show that
$$
\Prob{\abs{\hat \cpt_i^0 - \cpt_i} \le \varepsilon_{i}^0 \equiv C_2^0\left(\frac{\log n}{n\delta_i^2}\,\bigvee\,\frac{s^2 \,\wedge\, p\log n}{n^2\lambda\delta_i^4} \right),\; i = 1,\ldots, \ncp} \ge 1- \frac{9}{n},
$$
with $C_2^0 := 2^{22}3^3$. One important difference in calculation to \cref{ss:scp} is the overall number of events that we need to control is now in polynomial of $n$ rather than in polynomial of $\log n$.
If $\abs{\hat \cpt_i^0 - \cpt_i} \le \varepsilon_{i}^0$, then it holds, for every $j \in \set{1,\ldots, p}$,  
\begin{align*}
\csm_{(c_i - r_i,\,c_i + r_i], j}(\hat \cpt_i^0 n; \mat{F})^2 \ge \frac{1}{16}n\lambda\delta_{i,j}^2 - \frac{3}{3 - 4\eps_i^0}n\eps_i^0\delta_{i,j}^2\ge \frac{1}{18}n\lambda\delta_{i,j}^2,
\end{align*}
where the last inequality is due to $\eps_i^0 \le \lambda /432$. Thus, we obtain: 
\begin{enumerate}[i.]
\item Dense case, i.e.\ when $s \ge \sqrt{p\log n}$ or $s = p$. The union bound and \cref{lm:tail.chi2} imply 
$$
\Prob{G_{(c_i-r_i,\, c_i+r_i]}(\hat \tau_i^0 n)\ge \gamma, \, i=1,\ldots, \kappa} \ge 1 - \frac{10}{n}.
$$ 
\item Sparse case, i.e.\ when $s < \sqrt{p\log n}$ and $s \neq p$. Similar to Part~1 in the proof of \cref{th:mh1cp}, we split the set of coordinates into the set of large jumps, the set of small jumps and the set of no jumps, apply \cref{lm:bwcs,lm:bwcs2}, and then obtain 
$$
\Prob{G_{(c_i-r_i,\, c_i+r_i]}(\hat \tau_i^0 n)\ge \gamma, \, i=1,\ldots, \kappa} \ge 1 - \frac{12}{n}.
$$
\end{enumerate}
That is, for every $i \in\set{1\ldots, \ncp}$, $G_{(c_i-r_i,\, c_i+r_i]}(\hat \tau_i^0 n)$ is above the selection threshold $\gamma$, with probability tending to one.

\emph{Step 3.}
Recall that the NOT selection rule selects the shortest seeded interval $I_*$ which has a value of gain function  above the threshold $\gamma$. Then it follows from Step~2, with probability tending to one, that the length of selected seeded interval $\abs{I_*} \le \lambda$, and thus that $I_*$ contains at most one change point. By \cref{lm:tail.chi2,lm:bwcs} and the union bound, we consider dense and sparse cases separately, and then obtain 
$$
\Prob{\max_{I\subseteq (0,n],\, I \text{ contains no change point}}\max_{t\in I} G_{(l,r]}(t) \le 24\,\rho(n,p,s)} \ge 1- \frac{1}{6 n}.
$$
Since $\gamma \ge 24\,\rho(n,p,s)$, the selected seeded interval $I_*$ contains exactly one change point, with probability tending to one. 

\emph{Step 4.}
We rewrite the selected seeded interval $I_*$ as $\bigl((\cpt_i - u) n, \,(\cpt_i + v) n \bigr]$ for some $i \in \{1,\ldots, \ncp\}$. 
We claim that, with probability tending to one, 
\begin{equation}\label{eq:snr}
\frac{uv}{u+v}n\delta_i^2 \ge \frac{\gamma}{2}.
\end{equation}
Otherwise, we have $\max_{t\in I_*}\sum_{j=1}^p \csm_{I_*, j}(t:\mat{F})^2 \le \gamma/2$, and then, by the union bound, \cref{lm:tail.chi2} and \cref{r:utb}, obtain (noting that $C_1 \ge 36$) 
\begin{multline*}
\Prob{\max_{t\in I_*} G_{I_*}(t) \le \gamma} \ge 1 -\sum_{t\in I_*} \Prob{G_{I_*}(t) \ge \gamma} \\
\ge  1 -\sum_{t\in I_*} \Prob{G_{I_*}(t) - \sum_{j=1}^p \csm_{I_*, j}(t:\mat{F})^2 \ge \frac{\gamma}{2}}\ge 1 - \frac{1}{n^2},
\end{multline*}
which contradicts with the fact that $I_*$ is selected by NOT. 

Let $\hat \cpt_i^1$ be the estimated change point (scaled by $1/n$) by naive or advanced (or combined) optimistic search applied to $I_*$\,. Then, following similar calculations as in the proof of \cref{th:mh1cp}, we can see that \eqref{eq:snr} implies 
\begin{equation*}
\Prob{\abs{\hat \cpt_i^1 - \cpt_i} \le \varepsilon_{i}^1 \equiv C_2^1\left(\frac{\log n}{n\delta_i^2}\,\bigvee\,\frac{s^2 \,\wedge\, p\log n}{n\gamma\delta_i^2} \right),\; i = 1,\ldots, \ncp} \ge 1 - \frac{11}{n},
\end{equation*}
with $C_2^1 := 2^{32} 3^4$. In particular, it holds that $\varepsilon_{i}^1 \le \lambda/4$.

\emph{Step 5.}
Steps 1--4 imply that all change points $\{\cpt_1, \ldots, \cpt_{\ncp}\}$ can be estimated by $\{\hat \cpt_1^1, \ldots, \hat \cpt_{\ncp}^1\}$ with errors $\varepsilon_{i}^1$ and in particular $\hat \ncp \ge \ncp$, with probability tending to one. It remains to show that $\hat \ncp = \ncp$\,. To this end, we note that after the detection of $\ncp$ change points, the remaining seeded intervals either contain no change point or contain one or two change points that are very close to the boundary. In the latter case, if a seeded interval contains say $\cpt_i$ then we must have the distance of $\cpt_i$ to one of the boundaries is no more than $\varepsilon_{i}^1$. This leads to 
$$
\max_{t\in I} \sum_{j =1}^p\csm(t; \mat{F})^2 \le \max_{1 \le i \le \ncp} n\eps_i^1\delta_i^2 \le \frac{1}{2}\gamma
$$
for any remaining interval $I$. Thus, as in Step~4, we can show that the maximum value of gain function $\max_{t\in I}{G_{I}(t)}$ on every remaining seeded interval $I$ is upper bounded by $\gamma$, more precisely,
$$
\Prob{\max_{I \in \mathcal{I}_r}\max_{t\in I} G_{(l,r]}(t) \le \gamma} \ge 1- \frac{12}{ n}.
$$
where $\mathcal{I}_r$ denotes the collection of remaining seeded intervals. This further implies that  OSeedBS will stop after $\ncp$ steps, i.e.\ $\hat\ncp = \ncp$.

\emph{Step 6.} Recall from Step~4 that $\eps_i^1 \le \lambda/4$. Then, with probability tending to one, it holds that each interval $I_i:= \bigl(n(\hat\cpt_{i-1}^1 + \hat\cpt_{i}^1)/2,\ n(\hat\cpt_{i}^1 + \hat\cpt_{i+1}^1)/2\bigr]$ contains only one change point $n \cpt_i$ such that $n\cpt_i$ is at least $n\lambda/4$ apart from the boundaries of $I_i$. Thus, as in the proof of \cref{th:mh1cp}, under the assumption in \eqref{e:dbmcp}, we can establish for naive or advanced (or combined) optimistic search the following 
$$
\Prob{\hat \ncp = \ncp,\,\, \abs{\hat \cpt_i - \cpt_i} \le C_2\left(\frac{ \log n}{\delta_i^2 n}\,\bigvee\, \frac{s^2 \wedge p \log n}{n^2\lambda \delta_i^4}\right), \, i =1, \ldots, \ncp} \ge 1-\frac{24}{n},
$$
which concludes the proof.
\end{proof}

\section{Computation complexity analysis}\label{App:Computation}
In this section, we derive computation complexity analysis for optimistic searches, as stated in \cref{Lem:cOS_speed}, \cref{th:osbs}\ref{i:osbs2} and \cref{p:cmp}. We omit the proof of \cref{th:osbs}\ref{i:osbs2} as it follows from \cref{p:cmp} with $p=1$.

\begin{proof}[Proof of \cref{Lem:cOS_speed}]
We start with the naive optimistic search. Let $I_k$ denote the search window at the $k$-th step of the naive optimistic search. The choice of probe points ensures that a segment of length at least $\nu \abs{I_k}/2\, \wedge\,  \nu  \abs{I_{k-1}}/2$ will be dropped out, with $\nu$ the step size, at the $k$-th step. Since $I_{k}\subseteq I_{k-1}$, we have a proportion of at least $\nu/2$ will be removed at each step, and thus the procedure will stop in 
$$
\left\lceil \frac{1}{\log \bigl(2/(2-\nu)\bigr)}\log(R-L)\right\rceil
\quad \text{steps.}
$$

Consider next the search over dyadic locations (in the advanced optimistic search), which requires $O\bigl(\log (R-L)\bigr)$ evaluations of gain functions, as there are in total $2\lceil\log_2(R-L)\rceil$ dyadic locations. 

Thus, overall, in the worst case one has $O(\log (R-L)) \le O(\log n)$ evaluations for the naive and the advanced optimistic search.
\end{proof}

\begin{proof}[Proof of \cref{p:cmp}]
\emph{Part \ref{i:cmp1}.}
It follows from \cref{Lem:cOS_speed} and the fact that given the cumulative sums each evaluation of the gain or the comparison function takes $O(p)$ computations.

\emph{Part \ref{i:cmp2}.}
We consider first the optimistic searches on seeded intervals. By $\mathcal{I}_{\omega}$ we denote the set of all seeded intervals of length no less than $m \asymp n^{1-\omega}$.  Then the total number of evaluations in naive and advanced (and also combined) optimistic search on all seeded intervals in $\mathcal{I}_{\omega}$ is of order
\begin{align*}
p \sum_{I \in \mathcal{I}_w} \log \abs{I} \; &\lesssim\;  p \sum_{k = 1}^{\ceil{\log_{{1}/{a}}(n^\omega)}}a^{-(k-1)} \log(a^{k-1}n)\\
&\lesssim\;  p \int_0^{\log_{{1}/{a}}(n^\omega)}a^{-x} \log(a^x n) dx\\
&\lesssim \; p \int_{0}^{n^\omega} \log\left(\frac{n}{x}\right) dx = n^\omega \bigl(1 + (1-\omega)\log n \bigr) \\
&\lesssim\;  p \min\{n^{\omega}\log n,\, n\}\,.
\end{align*}

We examine next the NOT selection, and employ the deterministic and nested structure of the seeded intervals $\mathcal{I_{\omega}}$. Each seeded interval $I$ is equipped with a boolean variable $b_I$, denoting whether it is needed to be considered or not (its default value being true), and with an integer variable $t_I$, recording the estimated change point location (its default value being zero).    

Note that the seeded intervals at  $k$-th layer are of the same length $l_k = n a^{k-1}$, and that the NOT selection favours shorter intervals. Thus, we iterate from the highest layer to the lowest layer. More precisely, at $k$-th layer, we check each seeded interval $I$: 
\begin{itemize}
\item If its boolean variable $b_I$ is false, we mark the seeded intervals that lie at $(k-1)$-th layer and contain $t_I$, with the boolean variable being false and the integer variable being $t_I$. Due to the construction of seeded intervals, this can be done in $O(1)$ computations. 
\item If its boolean variable $b_I$ is true, we check whether the candidate change point $\hat{t}_I$ found in this seeded interval has a value of gain function above the selection threshold. If no, we do nothing. If yes, we identify $\hat{t}_I$ as an estimated change point, and mark the seeded intervals that lie at $(k-1)$-th layer and contain $\hat{t}_I$, with the boolean variable being false and the integer variable being $\hat{t}_I$. Similarly,  this can be done in $O(1)$ computations. 
\end{itemize}
It is clear to see that this procedure executes precisely the NOT selection. As each seeded interval is accessed at most $O(1)$ times and each access costs $O(1)$ computations, the NOT selection requires $O(\abs{\mathcal{I_{\omega}}}) = O(n^\omega)$ computations. Note that this computation complexity analysis improves the one in Theorem~1 of \citet{SeedBS} by a log factor. 

The post-processing step involves optimistic searches on $\hat\ncp$ non-overlapping intervals, and thus requires another 
$$
O\bigl(p \min\set{\hat\ncp\log n,\; n}\bigr) = O\bigl(p \min\{n^{\omega}\log n,\, n\}\bigr)
$$
computations.  

Thus, if cumulative sums are pre-computed, the final computation complexity is $O\bigl(p \min\{n^{\omega}\log n,\, n\}\bigr)$; otherwise, the computation of cumulative sums dominates, and leads to the final computation complexity of $O(pn)$. Besides, the overall memory complexity is $O(pn)$. 
\end{proof}

\begin{remark}[Implementation]
It is logically clean and easy to consider the optimistic search part and the selection part, separately, as in \cref{Alg:OSeedBS}. But in regard to implementation, we should merge both parts, and run optimistic searches only for the seeded intervals with boolean variables being true, though this only improves the multiplying constant in the computation complexity analysis, instead of the order. 
\end{remark}

\begin{remark}[Speed-up]
We can speed up OSeedBS by considering first the middle points of seeded intervals as follows:
\begin{enumerate}[i.]
\item  For a given seeded interval, we run optimistic searches only when the value of gain function at the middle of this seeded interval is above the selection threshold. 
\item Further, we use the middle point of seeded intervals to decide which coordinates to look at for optimistic searches, more precisely, all the coordinates that has a cumulative statistic at the middle point with absolute value greater than $\thd$. 
\end{enumerate}
Note that for each change point there is a seeded interval containing it roughly at the center, which will eventually be picked by the NOT selection, as shown in the proof of \cref{th:mhmcp}. Then the gain function evaluated at the middle points of such seeded intervals has similar values as if it were evaluated at true change points. Thus, the same statistical guarantee remains still valid for OSeedBS with the modification above. 

Let $\hat{s}$ denote the maximal number of selected coordinates. Then as $n\to\infty$ it holds that $\Prob{\hat{s} = O(s)} \to 1$. It is clear to see that the overall computation complexity can be improved to $O(\hat{s} \min\set{n^{\omega}\log n, \, n} + p n^{\omega})$, which is $O({s} \min\set{n^{\omega}\log n, \, n} + p n^{\omega})$ with probability tending to one. 

In particular, in case of a single change point, one only needs to consider seeded intervals that starting at $0$ or ending at $n$, the number of which is $O(\log n)$. Then the naive optimistic search with the NOT selection can achieve the same statistical optimality as in \cref{th:mh1cp}, and together with the speed-up idea above it results in $O(p \log n)$ computations, provided that cumulative sums of data are available. Thus, the naive optimistic search with the NOT selection, and the advanced optimistic search, are equivalent in terms of statistical and computational efficiency. 
\end{remark}
 
}

\end{document}